\documentclass[DIV=12,11pt]{scrartcl}

\usepackage{amsthm,amssymb,amsmath}
\usepackage[noblocks]{authblk}

\setkomafont{title}{\normalfont \LARGE \bfseries}
\setkomafont{section}{\normalfont \Large \bfseries \boldmath}
\setkomafont{subsection}{\normalfont \large \bfseries}
\setkomafont{subsubsection}{\normalfont \normalsize \bfseries}
\setkomafont{descriptionlabel}{\itshape}
\setkomafont{paragraph}{\normalfont \bfseries \boldmath}

\newtheorem{theorem}{Theorem}
\newtheorem{lemma}[theorem]{Lemma}
\newtheorem{corollary}[theorem]{Corollary}

\usepackage{graphicx}
\usepackage{hyperref}
\usepackage[numbers]{natbib}
\usepackage{url}
\usepackage{color}

\newcommand{\Cov}{\mathop{\mathgroup\symoperators Cov}\nolimits}
\newcommand{\Exp}{\mathop{\mathgroup\symoperators Exp}\nolimits}
\newcommand{\Var}{\mathop{\mathgroup\symoperators Var}\nolimits}

\newcommand{\blokje}{}

\begin{document}

\title{Probabilistic Analysis of Facility Location on Random Shortest Path Metrics\footnote{An extended abstract of this work will appear in the \textit{Proceedings of the 15th Conference on Computability in Europe (CiE)}.}}

\author{Stefan Klootwijk}
\author{Bodo Manthey}

\affil{University of Twente, Enschede, The Netherlands, \texttt{\{s.klootwijk,b.manthey\}@utwente.nl}}

\maketitle

\begin{abstract}
The facility location problem is an $\mathcal{NP}$-hard optimization problem. Therefore, approximation algorithms are often used to solve large instances. Such algorithms often perform much better than worst-case analysis suggests. Therefore, probabilistic analysis is a widely used tool to analyze such algorithms. Most research on probabilistic analysis of $\mathcal{NP}$-hard optimization problems involving metric spaces, such as the facility location problem, has been focused on Euclidean instances, and also instances with independent (random) edge lengths, which are non-metric, have been researched. We would like to extend this knowledge to other, more general, metrics.

We investigate the facility location problem using random shortest path metrics. We analyze some probabilistic properties for a simple greedy heuristic which gives a solution to the facility location problem: opening the $\kappa$ cheapest facilities (with $\kappa$ only depending on the facility opening costs). If the facility opening costs are such that $\kappa$ is not too large, then we show that this heuristic is asymptotically optimal. On the other hand, for large values of $\kappa$, the analysis becomes more difficult, and we provide a closed-form expression as upper bound for the expected approximation ratio. In the special case where all facility opening costs are equal this closed-form expression reduces to $O(\sqrt[4]{\ln(n)})$ or $O(1)$ or even $1+o(1)$ if the opening costs are sufficiently small.

\end{abstract}

\nocite{Ahn1988}
\nocite{Aven1985}
\nocite{Bringmann2015}
\nocite{Cornuejols1990}
\nocite{Davis1993}
\nocite{Flaxman2007}
\nocite{Frieze2007}
\nocite{Hammersley1965}
\nocite{Hassin1985}
\nocite{Howard2004}
\nocite{Janson1999}
\nocite{Karp1985}
\nocite{Li2011}
\nocite{Luke1969}
\nocite{Nagaraja2006}
\nocite{Renyi1953}
\nocite{Ross2010}
\nocite{Shaked2007}

\section{Introduction}

Large-scale combinatorial optimization problems, such as the facility location problem, show up in many applications. These problems become computationally intractable as the instances grow. This issue is often tackled by (successfully) using approximation algorithms or ad-hoc heuristics to solve these optimization problems. In practical situations these, often simple, heuristics have a remarkable performance, even though theoretical results about them are way more pessimistic.

Over the last decades, probabilistic analysis has become an important tool to explain this difference. One of the main challenges here is to come up with a good probabilistic model for generating instances: this model should reflect realistic instances, but it should also be sufficiently simple in order to make the probabilistic analysis possible.

Until recently, in almost all cases either instances with independent edge lengths, or instances with Euclidean distances have been used for this purpose~\cite{Ahn1988,Frieze2007}. These models are indeed sufficiently simple, but they have shortcomings with respect to reflecting realistic instances: realistic instances are often metric, although not Euclidean, and the independent edge lengths do not even yield a metric space.

In order to overcome this, Bringmann et al.~\cite{Bringmann2015} used the following model for generating random metric spaces, which had been proposed by Karp and Steele~\cite{Karp1985}. Given an undirected complete graph, start by drawing random edge weights for each edge independently and then define the distance between any two vertices as the total weight of the shortest path between them, measured with respect to the random weights. Bringmann et al. called this model \emph{random shortest path metrics}. This model is also known as \emph{first-passage percolation}, introduced by Hammersley and Welsh as a model for fluid flow through a (random) porous medium~\cite{Hammersley1965,Howard2004}.

\subsection{Related Work}

Although a lot of studies have been conducted on random shortest path metrics, or first-passage percolation (e.g. \cite{Davis1993,Hassin1985,Janson1999}), systematic research of the behavior of (simple) heuristics and approximation algorithms for optimization problems on random shortest path metrics was initiated only
recently~\cite{Bringmann2015}. They provide some structural properties of random shortest path metrics, including the existence of a good clustering. These properties are then used for a probabilistic analysis of simple algorithms for several optimization problems, including the minimum-weight perfect matching problem and the $k$-median problem.

For the facility location problem, several sophisticated polynomial-time approximation algorithms exist, the best one currently having a worst-case approximation ratio of $1.488$ \cite{Li2011}. Flaxman et al. conducted a probabilistic analysis for the facility location problem using Euclidean distances \cite{Flaxman2007}. They expected to show that some polynomial-time approximation algorithms would be asymptotically optimal under these circumstances, but found out that this is not the case. On the other hand, they described a trivial heuristic which is asymptotically optimal in the Euclidean model.

\subsection{Our Results}

This paper aims at extending our knowledge about the probabilistic behavior of (simple) heuristics and approximation algorithms for optimization problems using random shortest path metrics. We will do so by investigating the probabilistic properties of a rather simple heuristic for the facility location problem, which opens the $\kappa$ cheapest facilities (breaking ties arbitrarily) where $\kappa$ only depends on the facility opening costs. Due to the simple structure of this heuristic, our results are more structural than algorithmic in nature.

We show that this heuristic yields a $1+o(1)$ approximation ratio in expectation if the facility opening costs are such that $\kappa\in o(n)$. For $\kappa\in\Theta(n)$ the analysis becomes more difficult, and we provide a closed-form expression as upper bound for the expected approximation ratio. We will also show that this closed-form expression is $O(\sqrt[4]{\ln(n)})$ if all facility opening costs are equal. This can be improved to $O(1)$ or even $1+o(1)$ when the facility opening costs are sufficiently small. Note that we will focus on the expected approximation ratio and not on the ratio of expectations, since a disadvantage of the latter is that it does not directly compare the performance of the heuristic on specific instances.

We start by giving a mathematical description of random shortest path metrics and the facility location problem (Section~\ref{sect:modelnotation}). After that, we introduce our simple heuristic properly and have a brief look at its behavior (Section~\ref{sect:heur}). Then we present some general technical results (Section~\ref{sect:prelim}) and two different bounds for the optimal solution (Section~\ref{sect:OPTbounds}) that we will use to prove our main results in Section~\ref{sect:probanalysis}. We conclude with some final remarks (Section~\ref{sect:final}).


\section{Notation and Model}\label{sect:modelnotation}

In this paper, we use $X\sim P$ to denote that a random variable $X$ is distributed using a probability distribution $P$. $\Exp(\lambda)$ is being used to denote the exponential distribution with parameter $\lambda$. In particular, we use $X\sim\sum_{i=1}^n\Exp(\lambda_i)$ to denote that $X$ is the sum of $n$ independent exponentially distributed random variables with parameters $\lambda_1,\ldots,\lambda_n$. If $\lambda_1=\ldots=\lambda_n=\lambda$, then $X$ is a Gamma distributed random variable with parameters $n$ and $\lambda$, denoted by $X\sim\Gamma(n,\lambda)$.

For $n\in\mathbb{N}$, we use $[n]$ as shorthand notation for $\{1,\ldots,n\}$. If $X_1,\ldots,X_m$ are $m$ random variables, then $X_{(1)},\ldots,X_{(m)}$ are the order statistics corresponding to $X_1,\ldots,X_m$ if $X_{(i)}$ is the $i$th smallest value among $X_1,\ldots,X_m$ for all $i\in[m]$. Furthermore we use $H_n$ as shorthand notation for the $n$th harmonic number, i.e., $H_n=\sum_{i=1}^n1/i$. Finally, if a random variable $X$ is stochastically dominated by a random variable $Y$, i.e., we have $F_X(x)\geq F_Y(x)$ for all $x$ (where $X\sim F_X$ and $Y\sim F_Y$), we denote this by $X\precsim Y$.

\paragraph{Random Shortest Path Metrics.}

Given an undirected complete graph $G=(V,E)$ on $n$ vertices, we construct the corresponding random shortest path metric as follows. First, for each edge $e\in E$, we draw a random edge weight $w(e)$ independently from an exponential distribution\footnote{Exponential distributions are technically easiest to handle due to their memorylessness property. A (continuous, non-negative) probability distribution of a random variable $X$ is said to be memoryless if and only if $\mathbb{P}(X>s+t\mid X>t)=\mathbb{P}(X>s)$ for all $s,t\geq0$.~\cite[p.~294]{Ross2010}} with parameter 1. Given these random edge weights $w(e)$, the distance $d(u,v)$ between each pair of vertices $u,v\in V$ is defined as the minimum total weight of a $u,v$-path in $G$. 
Note that this definition yields the following properties: $d(v,v)=0$ for all $v\in V$, $d(u,v)=d(v,u)$ for all $u,v\in V$, and $d(u,v)\leq d(u,s)+d(s,v)$ for all $u,s,v\in V$. We call the complete graph with distances $d$ obtained from this process a random shortest path metric.

\paragraph{Facility Location Problem.}

We consider the (uncapacitated) facility location problem, in which we are given a complete undirected graph $G=(V,E)$ on $n$ vertices, distances $d:V\times V\to\mathbb{R}_{\geq0}$ between each pair of vertices, and opening costs $f:V\to\mathbb{R}_{>0}$. In this paper, the distances are randomly generated, according to the random shortest path metric described above. Moreover, w.l.o.g. we assume that the vertices are numbered in such a way that the opening costs satisfy $f_1\leq f_2\leq\ldots\leq f_n$ and we assume that these costs are predetermined, independent of the random edge weights. We will use $F_k$ as a shorthand notation for $\sum_{i=1}^kf_i$. Additionally, we assume that the ratios between the opening costs are polynomially bounded, i.e., we assume $f_n/f_1\leq n^q$ for some constant $q$ as $n\to\infty$.

The goal of the facility location problem is to find a nonempty subset $U\subseteq V$ such that the total cost $c(U):=f(U)+\sum_{v\in V}\min_{u\in U}d(u,v)$ is minimal, where $f(U)$ denotes the total opening cost of all facilities in $U$. This problem is $\mathcal{NP}$-hard~\cite{Cornuejols1990}. We use $\mathsf{OPT}$ to denote the total cost of an optimal solution, i.e.,
\begin{equation*}
\mathsf{OPT}=\min_{\varnothing\neq U\subseteq V}c(U).
\end{equation*}
One of the tools we use in our proofs in Section~\ref{sect:probanalysis} involves fixing the number of facilities that has to be opened. We use $\mathsf{OPT}_k$ to denote the total cost of the best solution to the facility location problem with the additional constraint that exactly $k$ facilities need to be opened, i.e.,
\begin{equation*}
\mathsf{OPT}_k=\underset{|U|=k}{\min_{\varnothing\neq U\subseteq V}}c(U).
\end{equation*}
Note that $\mathsf{OPT}=\min_{k\in[n]}\mathsf{OPT}_k$ by these definitions.

\section{A simple heuristic and some of its properties}\label{sect:heur}

In this paper we are interested in a rather simple heuristic that only takes the facility opening costs $f_i$ into account while determining which facilities to open and which not, independently of the metric space. Define $\kappa:=\kappa(n;f_1,\ldots,f_n)=\max\{i\in[n]:f_i<1/(i-1)\}$. Then our heuristic opens the $\kappa$ cheapest facilities (breaking ties arbitrarily). Note that in the special case where all opening costs are the same, i.e. $f_1=\ldots=f_n=f$, this corresponds to $\kappa=\min\{\lceil1/f\rceil,n\}$.

This rather particular value of $\kappa$ is originates from the following intuitive argument. Based on the results of Bringmann et al.~\cite[Lemma~5.1]{Bringmann2015} (see below) we know that the expected cost of the solution that opens the $k$ cheapest facilities is given by $g(k):=F_k+H_{n-1}-H_{k-1}$. This convex function decreases as long as $k$ satisfies $f_k<1/(k-1)$. Therefore, at least intuitively, the value of $\kappa$ that we use is likely to provide a relatively `good' solution.

We will show that this is indeed the case. Our main result will be split into two parts, based on the actual value of $\kappa$. If $\kappa\in o(n)$ (i.e. if there are `many' relatively expensive facilities), then we will show that our simple heuristic is asymptotically optimal for any polynomially bounded opening costs (that satisfy $\kappa\in o(n)$). On the other hand, if $\kappa\in\Theta(n)$, then the analysis becomes more difficult, and we will only provide a closed-form expression that can be used to determine an upper bound for the expected approximation ratio. We will show that this expression yields an $O(\sqrt[4]{\ln(n)})$ approximation ratio in the special case with $f_1=\ldots=f_n=f$, and $O(1)$ or even $1+o(1)$ if $f$ is sufficiently small.

Throughout the remainder of this paper we will use $\mathsf{ALG}$ to denote the value of the solution provided by this heuristic.

\paragraph{Probability distribution of $\mathsf{ALG}$.}\label{par:ALGdist}

In this section we derive the probability distribution of the value of the solution provided by our simple greedy heuristic, $\mathsf{ALG}$, and derive its expectation.

If $\kappa=n$, then $\mathsf{ALG}$ denotes the cost of the solution which opens a facility at every vertex $v\in V$. So, we have $\mathsf{ALG}=F_n$, and, in particular, $\mathbb{P}(\mathsf{ALG}=F_n)=1$.

If $1\leq\kappa<n$, then the distribution of $\mathsf{ALG}$ is less trivial. In this case, the total opening costs are given by $F_\kappa$, whereas, the distribution of the connection costs is known and given by $\sum_{i=\kappa}^{n-1}\Exp(i)$ \cite[Sect.~5]{Bringmann2015}. This results in $\mathsf{ALG}-F_\kappa\sim\sum_{i=\kappa}^{n-1}\Exp(i)$.

Using this probability distribution, we can derive the expected value of $\mathsf{ALG}$. If $\kappa=n$, then it follows trivially that $\mathbb{E}[\mathsf{ALG}]=F_n$. If $1\leq\kappa<n$, then we have
\begin{equation*}
\mathbb{E}[\mathsf{ALG}]=F_\kappa+\sum_{i=\kappa}^{n-1}\frac1i=F_\kappa+H_{n-1}-H_{\kappa-1}=F_\kappa+\ln(n/\kappa)+\Theta(1).
\end{equation*}

\section{Technical observations}\label{sect:prelim}

In this section we present some technical lemmas that are being used for the proofs of our theorems in Section~\ref{sect:probanalysis}. These lemmas do not provide new structural insights, but are nonetheless very helpful for our proofs.

First of all, we will use the Cauchy-Schwarz inequality to bound the expected approximation ratio of our simple greedy heuristic. For general random variables $X$, $Y$, this inequality states that $|\mathbb{E}[XY]|\leq\sqrt{\mathbb{E}[X^2]\mathbb{E}[Y^2]}$.

Secondly, we will bound a sum of exponential distributions by a Gamma distribution. The following Lemma enables us to do so.
\begin{lemma}[{\cite[Ex.~1.A.24]{Shaked2007}}]
	\label{lemma:expGam}
	Let $X_i\sim\Exp(\lambda_i)$ independently, $i=1,\ldots,m$. Moreover, let $Y_i\sim\Exp(\eta)$ independently, $i=1,\ldots,m$. Then we have
	\begin{equation*}
	\sum_{i=1}^mX_i\succsim\sum_{i=1}^mY_i\qquad\text{if and only if}\qquad\prod_{i=1}^m\lambda_i\leq\eta^m.
	\end{equation*}
\end{lemma}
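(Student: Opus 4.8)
The plan is to prove the two implications separately.

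For the \emph{only if} direction, I would use that stochastic dominance is tested by monotone functions: applying the decreasing function $x\mapsto e^{-sx}$ (for $s>0$) to $\sum_iX_i\succsim\sum_iY_i$ gives $\mathbb{E}[e^{-s\sum_iX_i}]\le\mathbb{E}[e^{-s\sum_iY_i}]$, i.e.
\begin{equation*}
\prod_{i=1}^m\frac{\lambda_i}{\lambda_i+s}\le\left(\frac{\eta}{\eta+s}\right)^m\qquad\text{for all }s>0.
\end{equation*}
Multiplying both sides by $s^m$ and letting $s\to\infty$ makes the factor $\lambda_is/(\lambda_i+s)\to\lambda_i$ and the right-hand side $\to\eta^m$, which yields $\prod_i\lambda_i\le\eta^m$. (Alternatively one can compare the two distribution functions near $0$, where $F_{\sum_iX_i}(t)=\frac{\prod_i\lambda_i}{m!}\,t^m+O(t^{m+1})$, so a strict inequality $\prod_i\lambda_i>\eta^m$ would already violate dominance for small $t$.)

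For the \emph{if} direction, first I would reduce to the case $\prod_i\lambda_i=\eta^m$. If the inequality is strict, set $\eta':=\bigl(\prod_i\lambda_i\bigr)^{1/m}<\eta$; once the equality case is known, $\sum_iX_i\succsim\sum_i\Exp(\eta')$, and since a single $\Exp(\eta')\overset{d}{=}E/\eta'\ge E/\eta\overset{d}{=}\Exp(\eta)$ for $E\sim\Exp(1)$ (coupling each coordinate through its own $\Exp(1)$) we get $\sum_i\Exp(\eta')\succsim\sum_i\Exp(\eta)$, and transitivity of $\succsim$ finishes. So assume $\prod_i\lambda_i=\eta^m$, i.e. the geometric mean of the $\lambda_i$ equals $\eta$. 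Now I would balance the rate vector toward $(\eta,\dots,\eta)$ one coordinate at a time. Taking $i$ with $\lambda_i=\max_k\lambda_k$ and $j$ with $\lambda_j=\min_k\lambda_k$, either $\lambda_i=\lambda_j$ (then all rates equal $\eta$ and there is nothing to prove) or $\lambda_j<\eta<\lambda_i$; in the latter case I replace the pair $(\lambda_i,\lambda_j)$ by $(\eta,\mu)$ with $\mu:=\lambda_i\lambda_j/\eta\in(\lambda_j,\lambda_i)$, which preserves the overall product. The crucial point is that this replacement decreases the sum in the stochastic order:
\begin{equation*}
\Exp(\lambda_i)+\Exp(\lambda_j)\;\succsim\;\Exp(\eta)+\Exp(\mu)
\end{equation*}
for independent exponentials with $\lambda_j<\eta<\lambda_i$ and $\eta\mu=\lambda_i\lambda_j$. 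Together with closure of $\succsim$ under convolution with an independent random variable (here the untouched term $\sum_{k\notin\{i,j\}}X_k$), each step only lowers the distribution of $\sum_iX_i$; after at most $m-1$ steps all rates equal $\eta$, and transitivity of $\succsim$ gives $\sum_iX_i\succsim\sum_iY_i$.

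The main obstacle is the displayed two-variable inequality. No pointwise coupling through a shared pair of $\Exp(1)$ variables is available (the rates on the two sides cannot be matched up monotonically, since $\lambda_i>\eta>\mu>\lambda_j$ fails to bracket either assignment), so I would prove it by an explicit estimate of the survival functions, using $\mathbb{P}(\Exp(\alpha)+\Exp(\beta)>t)=\frac{\beta e^{-\alpha t}-\alpha e^{-\beta t}}{\beta-\alpha}$ for $\alpha\neq\beta$. Normalizing $\lambda_i\lambda_j=1$ and writing $\lambda_i=e^{v}$, $\lambda_j=e^{-v}$ with $v>0$, and $\eta=e^{w}$, $\mu=e^{-w}$ with $|w|\le v$, the claim reduces to showing that $v\mapsto\frac{e^{v}e^{-te^{-v}}-e^{-v}e^{-te^{v}}}{e^{v}-e^{-v}}$ is nondecreasing on $[0,\infty)$ for every fixed $t>0$ (it runs from $(1+t)e^{-t}$ up to $1$); differentiating in $v$ and checking the sign of the numerator is the heart of the matter, the degenerate case $\lambda_i=\lambda_j$ being vacuous. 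A secondary point requiring small care is that the balancing procedure terminates: each step fixes at least one more coordinate at $\eta$ while keeping the product of the remaining coordinates equal to $\eta$ raised to their number, so a suitable extreme pair exists at every step until all coordinates equal $\eta$.
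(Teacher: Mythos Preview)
The paper does not prove this lemma; it is quoted from \cite[Ex.~1.A.24]{Shaked2007} without argument, so there is no in-paper proof to compare against. Your proposal stands on its own, and the outline is correct.

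Both directions are handled properly. The Laplace-transform argument for necessity is clean and complete. For sufficiency, the reduction to the equality case via scaling and the product-preserving balancing procedure are standard; closure of $\succsim$ under convolution with an independent summand and transitivity do the bookkeeping, and termination is as you describe. The one step you leave open---the two-variable survival-function monotonicity---does go through with a short computation: in your normalization $\lambda_i\lambda_j=1$, writing $w=e^{v}$, $u=e^{-v}$, $P=e^{-tu}$, $Q=e^{-tw}$, the $v$-derivative of $S(t,v)$ has the same sign as
\[
t(w-u)(P+Q)-2(P-Q)\;=\;(P+Q)\Bigl(t(w-u)-2\tanh\!\bigl(\tfrac{t(w-u)}{2}\bigr)\Bigr),
\]
which is nonnegative since $x\ge 2\tanh(x/2)$ for all $x\ge 0$. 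So the ``heart of the matter'' reduces to a one-line elementary inequality, and your sketch can be completed without difficulty.
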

We will use the following upper bound for the expectation of the maximum of a number of (dependent) random variables.
\begin{lemma}[{\cite[Thm.~2.1]{Aven1985}}]
	\label{lemma:expecmax}
	Let $X_1,\ldots,X_n$ be a sequence of random variables, each with finite mean and variance. Then it follows that
	\begin{equation*}
	\mathbb{E}\left[\max_iX_i\right]\leq\max_i\mathbb{E}\left[X_i\right]+\sqrt{\frac{n-1}{n}\cdot\sum_{i=1}^n\Var(X_i)}.
	\end{equation*}
\end{lemma}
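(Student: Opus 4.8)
The plan is to derive the bound from an elementary \emph{deterministic} inequality applied pointwise, followed by a single use of Jensen's inequality. The deterministic ingredient is the following fact about real numbers $a_1,\ldots,a_n$ with mean $\bar a=\frac1n\sum_{i=1}^n a_i$:
\begin{equation*}
\max_i a_i\leq\bar a+\sqrt{\frac{n-1}{n}\sum_{i=1}^n(a_i-\bar a)^2},
\end{equation*}
which is essentially Samuelson's inequality. To see it, set $c_i=a_i-\bar a$, so $\sum_i c_i=0$, and let $j$ attain the maximum. Then $c_j=-\sum_{i\neq j}c_i$, and Cauchy--Schwarz gives $c_j^2\leq(n-1)\sum_{i\neq j}c_i^2=(n-1)(\sum_i c_i^2-c_j^2)$, i.e.\ $n\,c_j^2\leq(n-1)\sum_i c_i^2$; taking square roots and adding $\bar a$ yields the claim.

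Next I would reduce to the mean-zero case. Write $\mu_i=\mathbb{E}[X_i]$ and $Y_i=X_i-\mu_i$, so that $\mathbb{E}[Y_i]=0$ and $\mathbb{E}[Y_i^2]=\Var(X_i)$ (both finite by hypothesis). Since $X_i=\mu_i+Y_i\leq\max_j\mu_j+\max_j Y_j$ for every $i$, we obtain the pointwise bound
\begin{equation*}
\max_i X_i\leq\max_i\mathbb{E}[X_i]+\max_i Y_i.
\end{equation*}
Applying the deterministic inequality above at each outcome to $a_i=Y_i$, with empirical mean $\bar Y=\frac1n\sum_i Y_i$, gives (again pointwise) $\max_i Y_i\leq\bar Y+\sqrt{\frac{n-1}{n}\sum_i(Y_i-\bar Y)^2}$.

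Finally I would take expectations. Using $\mathbb{E}[\bar Y]=0$ and Jensen's inequality for the concave square root,
\begin{equation*}
\mathbb{E}\Bigl[\max_i Y_i\Bigr]\leq\sqrt{\frac{n-1}{n}\,\mathbb{E}\Bigl[\sum_i(Y_i-\bar Y)^2\Bigr]},
\end{equation*}
and since $\mathbb{E}[\sum_i(Y_i-\bar Y)^2]=\sum_i\mathbb{E}[Y_i^2]-n\,\Var(\bar Y)\leq\sum_i\Var(X_i)$, combining the last three displays gives exactly $\mathbb{E}[\max_i X_i]\leq\max_i\mathbb{E}[X_i]+\sqrt{\frac{n-1}{n}\sum_i\Var(X_i)}$. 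Note that independence of the $X_i$ is never used: any covariance effects enter only through $-n\,\Var(\bar Y)$, which is discarded in our favour.

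There is no genuine obstacle here — the statement is classical — but the one point requiring care is the centering step. Applying the deterministic inequality directly to the $X_i$ would leave an extra term $\sum_i(\mu_i-\bar\mu)^2$ under the square root and a leading $\bar\mu$ rather than $\max_i\mu_i$, i.e.\ a strictly weaker estimate; centering first is what produces the sharp form. The finiteness assumptions on the means and variances are exactly what is needed to make $\mathbb{E}[\max_i Y_i]$ well defined and to license the Jensen step.
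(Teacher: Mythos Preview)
Your proof is correct. The paper does not give its own argument for this lemma --- it merely cites the result as Theorem~2.1 of Aven (1985) --- so there is nothing to compare against; your self-contained derivation via Samuelson's inequality applied pointwise to the centered variables, followed by Jensen and the identity $\mathbb{E}\bigl[\sum_i(Y_i-\bar Y)^2\bigr]=\sum_i\Var(X_i)-n\,\Var(\bar Y)$, is exactly the standard route and every step checks out.
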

We will also make use of R\'enyi's representation \cite{Nagaraja2006,Renyi1953} in order to be able to link sums and order statistics of exponentially distributed random variables. It states the following.
\begin{lemma}
	\label{lemma:renyi}
	Let $X_i\sim\Exp(\lambda)$ independently, $i=1,\ldots,m$, and let $X_{(1)},\ldots,$ $X_{(m)}$ be the order statistics corresponding to $X_1,\ldots,X_m$. Then, for any $i\in[m]$,
	\begin{equation*}
	X_{(i)}=\frac1\lambda\sum_{j=1}^i\frac{Z_j}{m-j+1},
	\end{equation*}
	where $Z_j\sim\Exp(1)$ independently, and where ``$=$'' means equal distribution.
\end{lemma}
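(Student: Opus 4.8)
The plan is to derive the representation by identifying the distribution of the \emph{spacings} $S_j:=X_{(j)}-X_{(j-1)}$, with the convention $X_{(0)}:=0$, since $X_{(i)}=\sum_{j=1}^i S_j$ and a telescoping sum will then give the claimed formula. Concretely, I will show that $S_1,\ldots,S_m$ are mutually independent with $S_j\sim\Exp\bigl(\lambda(m-j+1)\bigr)$. Once this is established, setting $Z_j:=\lambda(m-j+1)\,S_j$ yields independent $\Exp(1)$ random variables with $S_j=Z_j/(\lambda(m-j+1))$, and summing gives $X_{(i)}=\tfrac1\lambda\sum_{j=1}^i Z_j/(m-j+1)$ in distribution, for every $i\in[m]$. (This in fact proves the joint statement, which is stronger than the per-$i$ claim.)

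The cleanest route to the statement about the spacings is a direct computation with joint densities. First I would write down the joint density of the order statistics of $m$ i.i.d.\ continuous variables: on the region $0<x_1<\cdots<x_m$ it equals $m!\,\prod_{i=1}^m\lambda e^{-\lambda x_i}=m!\,\lambda^m e^{-\lambda\sum_i x_i}$, and it vanishes elsewhere. I then change variables from $(x_1,\ldots,x_m)$ to the spacings $(s_1,\ldots,s_m)$ via $x_i=\sum_{j=1}^i s_j$; this is a bijection from $\{0<x_1<\cdots<x_m\}$ onto $\{s_j>0\text{ for all }j\}$ whose Jacobian is triangular, hence equal to $1$. The one calculation that makes everything work is the reindexing identity
\begin{equation*}
\sum_{i=1}^m x_i=\sum_{i=1}^m\sum_{j=1}^i s_j=\sum_{j=1}^m(m-j+1)\,s_j ,
\end{equation*}
which holds because $s_j$ appears in $x_i$ for every $i\ge j$. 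Substituting this in, the joint density of $(S_1,\ldots,S_m)$ becomes $m!\,\lambda^m\exp\bigl(-\lambda\sum_{j=1}^m(m-j+1)s_j\bigr)$ on $\{s_j>0\}$, which factors as $\prod_{j=1}^m\lambda(m-j+1)\,e^{-\lambda(m-j+1)s_j}$ (using $\prod_{j=1}^m(m-j+1)=m!$). A product of one-dimensional densities means the $S_j$ are independent, and the $j$-th factor is exactly the density of an $\Exp(\lambda(m-j+1))$ random variable, as wanted.

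It then remains only to assemble the pieces as in the first paragraph: $Z_j:=\lambda(m-j+1)S_j$ is a scale change of an exponential, hence $\Exp(1)$, the $Z_j$ are independent since they are functions of independent variables, and $X_{(i)}=\sum_{j=1}^i S_j=\tfrac1\lambda\sum_{j=1}^i Z_j/(m-j+1)$. I expect the main (and essentially only) obstacle to be bookkeeping: getting the Jacobian, and above all the reindexing identity $\sum_i x_i=\sum_j(m-j+1)s_j$, exactly right, since a wrong coefficient there destroys the factorization. As a sanity check and source of intuition I would mention the more conceptual argument via memorylessness: $X_{(1)}=\min_i X_i\sim\Exp(m\lambda)$, and conditioned on $X_{(1)}$ the remaining $m-1$ values shifted down by $X_{(1)}$ are again i.i.d.\ $\Exp(\lambda)$ and independent of $X_{(1)}$; iterating shows the spacings are independent with the stated parameters. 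I would nonetheless carry out the density computation for full rigor.
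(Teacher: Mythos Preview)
Your argument is correct and is essentially the classical derivation of R\'enyi's representation: compute the joint density of the order statistics, change variables to the spacings, observe that the Jacobian is $1$ and that $\sum_i x_i=\sum_j(m-j+1)s_j$, and read off the product form. The bookkeeping you flagged is right, and the memorylessness sketch you mention at the end is the standard alternative route.

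As for comparison with the paper: the paper does not actually prove this lemma. It is stated with a citation to R\'enyi's original paper and to Nagaraja, and then used as a black box (the only proof nearby is that of the corollary, which merely reindexes). So your write-up supplies strictly more than what the paper contains; either the density computation or the memorylessness induction would be an acceptable proof, and both are standard.
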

A special case of R\'enyi's representation is given by the following corollary.
\begin{corollary}
	\label{cor:renyi}
	Let $Y_i\sim\Exp(1)$ independently, $i=1,\ldots,n-1$, and let $Y_{(1)},\ldots,Y_{(n-1)}$ be the order statistics corresponding to $Y_1,\ldots,Y_{n-1}$. Then, for any $i\in[n-1]$,
	\begin{equation*}
	Y_{(n-i)}\sim\sum_{k=i}^{n-1}\Exp(k).
	\end{equation*}
\end{corollary}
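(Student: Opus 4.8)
The plan is to obtain the statement directly from R\'enyi's representation (Lemma~\ref{lemma:renyi}) via a change of index. I would apply that lemma with $\lambda = 1$ and $m = n-1$, but evaluate it at the $(n-i)$th order statistic rather than the $i$th. This immediately yields
\begin{equation*}
Y_{(n-i)} = \sum_{j=1}^{n-i}\frac{Z_j}{(n-1)-j+1} = \sum_{j=1}^{n-i}\frac{Z_j}{n-j},
\end{equation*}
where the $Z_j\sim\Exp(1)$ are independent and ``$=$'' denotes equality in distribution.

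Next I would invoke the elementary scaling property of the exponential distribution: if $Z\sim\Exp(1)$ and $c>0$, then $Z/c\sim\Exp(c)$, and dividing by deterministic constants preserves independence across the $j$. Hence $Z_j/(n-j)\sim\Exp(n-j)$ independently, so that $Y_{(n-i)}\sim\sum_{j=1}^{n-i}\Exp(n-j)$. Note that the denominators $n-j$ are strictly positive throughout, since $j$ ranges only up to $n-i$ and $i\geq 1$.

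Finally, substituting $k = n-j$ reverses the order of summation: as $j$ runs from $1$ to $n-i$, the index $k$ runs from $n-1$ down to $i$, which gives $Y_{(n-i)}\sim\sum_{k=i}^{n-1}\Exp(k)$, as claimed. I do not expect any genuine obstacle here; the only things to be careful about are the bookkeeping in the two index substitutions (the simplification $(n-1)-j+1 = n-j$ and the final reindexing $k = n-j$) and recalling that rescaling an $\Exp(1)$ variable by $1/c$ produces an $\Exp(c)$ variable.
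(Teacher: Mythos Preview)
Your proposal is correct and follows essentially the same argument as the paper: apply R\'enyi's representation with $m=n-1$, $\lambda=1$ to $Y_{(n-i)}$, then reindex via $k=n-j$ and use the scaling $\Exp(1)/k\sim\Exp(k)$. The paper's proof is terser but identical in substance.
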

\begin{proof}
	Let $Z_j\sim\Exp(1)$ independently. Using Lemma~\ref{lemma:renyi} it follows immediately that
	\begin{equation*}
	Y_{(n-i)}=\sum_{j=1}^{n-i}\frac{Z_j}{n-j}=\sum_{k=i}^{n-1}\frac{Z_{n-k}}{k}\sim\sum_{k=1}^{n-1}\Exp(k),
	\end{equation*}
	since $\Exp(1)/k\sim\Exp(k)$.\blokje
\end{proof}
Moreover, we use the following bound for the expected value of the ratio $X/Y$ for two dependent nonnegative variables $X$ and $Y$, conditioned on the event that $Y$ is relatively small.
\begin{lemma}
	\label{lemma:E[X/Y]}
	Let $X$ and $Y$ be two arbitrary nonnegative random variables and assume that $\mathbb{P}(Y\leq\delta)=0$ for some $\delta>0$. Then, for any $y$ that satisfies $\mathbb{P}(Y<y)>0$, we have
	\begin{equation*}
	\mathbb{P}(Y<y)\cdot\mathbb{E}\left[\frac{X}{Y}\;\middle|\;Y<y\right]\leq\frac1{\delta^2}\cdot\mathbb{P}(Y<y)+\int_{1/\delta^2}^{\infty}\mathbb{P}(X\geq\sqrt{x})\,\mathrm{d}x.
	\end{equation*}
\end{lemma}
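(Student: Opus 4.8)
The plan is to turn the conditional expectation into an unconditional one and then estimate it via the identity $\mathbb{E}[Z]=\int_0^\infty\mathbb{P}(Z>t)\,\mathrm{d}t$ valid for any nonnegative random variable $Z$. Set $Z:=(X/Y)\cdot\mathbf{1}_{\{Y<y\}}$; since $\mathbb{P}(Y\le\delta)=0$ we have $Y\ge\delta$ almost surely, so $Z$ is a well-defined nonnegative random variable, and the left-hand side of the claim is exactly $\mathbb{E}[Z]$.

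The core of the argument is to bound the tail $\mathbb{P}(Z>t)$ in two complementary ways. On the one hand, trivially $\{Z>t\}\subseteq\{Y<y\}$, so $\mathbb{P}(Z>t)\le\mathbb{P}(Y<y)$; this is the useful estimate for small $t$. On the other hand, on the event $\{Z>t\}$ we have $X>tY\ge t\delta$ (using $Y\ge\delta$ a.s.), hence $\mathbb{P}(Z>t)\le\mathbb{P}(X\ge t\delta)$; this is the useful estimate for large $t$. Splitting the tail integral at $t=1/\delta^2$ and using the first bound on $[0,1/\delta^2]$ and the second on $[1/\delta^2,\infty)$ gives
\[
\mathbb{E}[Z]\le\frac1{\delta^2}\,\mathbb{P}(Y<y)+\int_{1/\delta^2}^{\infty}\mathbb{P}(X\ge t\delta)\,\mathrm{d}t.
\]

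It then remains to match the second term with the integral in the statement. The substitution $x=t^2\delta^2$ (so that $\sqrt{x}=t\delta$, $\mathrm{d}x=2t\delta^2\,\mathrm{d}t$, and $t=1/\delta^2$ corresponds to $x=1/\delta^2$) turns $\int_{1/\delta^2}^{\infty}\mathbb{P}(X\ge\sqrt{x})\,\mathrm{d}x$ into $\int_{1/\delta^2}^{\infty}\mathbb{P}(X\ge t\delta)\cdot 2t\delta^2\,\mathrm{d}t$; since $2t\delta^2\ge 2\ge 1$ throughout the range of integration, this dominates $\int_{1/\delta^2}^{\infty}\mathbb{P}(X\ge t\delta)\,\mathrm{d}t$. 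Recalling that $\mathbb{E}[Z]=\mathbb{P}(Y<y)\cdot\mathbb{E}[X/Y\mid Y<y]$ then yields the claim.

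I do not expect a genuine obstacle here; the only points requiring a little care are the choice of the split point, which must be exactly $1/\delta^2$ so that both the coefficient of $\mathbb{P}(Y<y)$ and the lower limit of the integral come out right (and the leftover Jacobian factor $2t\delta^2$ stays $\ge 1$), and the harmless observation that if $\mathbb{E}[X^2]=\infty$ then the right-hand side is infinite and there is nothing to prove.
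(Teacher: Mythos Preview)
Your proof is correct. It follows the same overall structure as the paper's argument---express the left-hand side as a tail integral, split at $1/\delta^2$, and bound the small-$t$ part trivially by $\mathbb{P}(Y<y)$---but handles the large-$t$ tail differently. The paper bounds $\mathbb{P}(X/Y\ge x)$ via the dichotomy ``$X\ge\sqrt{x}$ or $Y\le 1/\sqrt{x}$'' and then observes that the second event has probability zero once $x\ge 1/\delta^2$; this lands directly on $\int_{1/\delta^2}^\infty\mathbb{P}(X\ge\sqrt{x})\,\mathrm{d}x$ without any substitution. You instead use $Y\ge\delta$ a.s.\ to get $\mathbb{P}(Z>t)\le\mathbb{P}(X\ge t\delta)$ straight away, which is arguably the more transparent estimate, but then you need the change of variables $x=t^2\delta^2$ and the observation $2t\delta^2\ge 2$ on the range of integration to recover the stated form. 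Both routes are short and clean; yours gives the tail bound more directly, the paper's avoids the Jacobian comparison at the end.
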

\begin{proof}
	The expected value on the left-hand side can be computed and bounded as follows:
	\begin{align*}
	\mathbb{P}(Y<y)\cdot\mathbb{E}\left[\frac{X}{Y}\;\middle|\;Y<y\right]&=\mathbb{P}(Y<y)\cdot\int_0^\infty\mathbb{P}\left(\frac{X}{Y}\geq x\;\middle|\;Y<y\right)\,\mathrm{d}x\\
	&\leq\mathbb{P}(Y<y)\cdot\left(\frac1{\delta^2}+\int_{1/\delta^2}^\infty\mathbb{P}\left(\frac{X}{Y}\geq x\;\middle|\;Y<y\right)\,\mathrm{d}x\right)\\
	&=\frac1{\delta^2}\cdot\mathbb{P}(Y<y)+\int_{1/\delta^2}^\infty\mathbb{P}\left(\frac{X}{Y}\geq x\text{ and }Y<y\right)\,\mathrm{d}x\\
	&\leq\frac1{\delta^2}\cdot\mathbb{P}(Y<y)+\int_{1/\delta^2}^\infty\mathbb{P}\left(\frac{X}{Y}\geq x\right)\,\mathrm{d}x.
	\end{align*}
	Observe that $X/Y\geq x$ implies $X\geq\sqrt{x}$ or $Y\leq1/\sqrt{x}$. This observation yields
	\begin{align*}
	\mathbb{P}(Y<y)\cdot\mathbb{E}\left[\frac{X}{Y}\;\middle|\;Y<y\right]&\leq\frac1{\delta^2}\cdot\mathbb{P}(Y<y)+\int_{1/\delta^2}^\infty\mathbb{P}\left(X\geq\sqrt{x}\text{ or }Y\leq\frac1{\sqrt{x}}\right)\,\mathrm{d}x\\
	&\leq\frac1{\delta^2}\cdot\mathbb{P}(Y<y)+\int_{1/\delta^2}^\infty\mathbb{P}\left(X\geq\sqrt{x}\right)\,\mathrm{d}x+\int_{1/\delta^2}^\infty\mathbb{P}\left(Y\leq\frac1{\sqrt{x}}\right)\,\mathrm{d}x.
	\end{align*}
	Since $\mathbb{P}(Y\leq\delta)=0$, the second integral vanishes, which leaves us with the desired result.\blokje
\end{proof}

\section{Bounds for the optimal solution}\label{sect:OPTbounds}

Not much is known about the distribution of the value of the optimal solution, $\mathsf{OPT}$, and about the distributions of $\mathsf{OPT}_k$. Therefore, in this section we derive two bounds for these optimal solutions which we can use in Section \ref{sect:probanalysis}.

We start with an upper bound for the cumulative distribution function of $\mathsf{OPT}$ that works good for relative small values of $\mathsf{OPT}$ (i.e. values close to $F_1$).
\begin{lemma}
	\label{lemma:OPTsmall}
	Let $z\in[F_1,F_n]$ and define $\zeta:=\max\{k:z\geq F_k\}$. Then, for any given opening costs $f_i$, we have
	\begin{equation*}
	\mathbb{P}(\mathsf{OPT}<z)\leq\sum_{i=1}^{\zeta}\binom{n}{i}\binom{n-1}{i-1}\left(1-e^{-(z-F_i)}\right)^{n-i}.
	\end{equation*}
\end{lemma}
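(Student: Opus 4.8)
The plan is to bound $\mathbb{P}(\mathsf{OPT}<z)$ by a union bound over all possible supports $U$ of an optimal solution, grouped by cardinality $|U|=i$. For a fixed nonempty $U\subseteq V$ with $|U|=i$, note that $c(U)\geq f(U)+\sum_{v\in V}\min_{u\in U}d(u,v)\geq f(U)$, so if $z<F_i$ then no set of size $i$ can have cost below $z$; this is exactly why the sum only needs to run up to $\zeta=\max\{k:z\geq F_k\}$. For $i\leq\zeta$, since the opening costs are sorted, the cheapest possible opening cost for a set of size $i$ is $F_i$, hence $c(U)<z$ forces $\sum_{v\in V}\min_{u\in U}d(u,v)<z-F_i$. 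So the first step is to write
\begin{equation*}
\mathbb{P}(\mathsf{OPT}<z)\leq\sum_{i=1}^{\zeta}\sum_{\substack{U\subseteq V\\|U|=i}}\mathbb{P}\!\left(f(U)+\sum_{v\in V}\min_{u\in U}d(u,v)<z\right)\leq\sum_{i=1}^{\zeta}\binom{n}{i}\max_{|U|=i}\mathbb{P}\!\left(\sum_{v\in V}\min_{u\in U}d(u,v)<z-F_i\right),
\end{equation*}
where I have bounded each term in the inner sum by dropping $f(U)$ down to $F_i$ and there are $\binom{n}{i}$ choices of $U$.

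The heart of the argument is then to bound, for a fixed $U$ of size $i$, the probability that the total connection cost $\sum_{v\in V}\min_{u\in U}d(u,v)$ is less than some threshold $t:=z-F_i$. By symmetry this probability does not depend on which $i$ vertices form $U$. Here I would exploit the structure of shortest-path distances: a cheap way to connect all of $V$ to $U$ is to grow a single shortest-path tree (or forest) rooted at $U$, which is exactly the kind of process Bringmann et al.\ analyze. Concretely, I would try to show that the event $\{\sum_{v}\min_u d(u,v)<t\}$ is contained in an event that can be described purely in terms of a spanning structure on $n$ vertices whose total weight is small. One clean route: the total connection cost is at least the weight of a minimum spanning forest with $i$ components each containing one vertex of $U$, or even just bounded below in terms of the $n-i$ order statistics of a suitable collection of exponential edge weights. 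Using Corollary~\ref{cor:renyi} / R\'enyi's representation, the weight accumulated while attaching the remaining $n-i$ vertices behaves like $\sum_{k}\Exp(k)$-type sums, and one can extract a factor $\bigl(1-e^{-t}\bigr)^{n-i}$ as the probability that $n-i$ independent (or suitably dominated) $\Exp(1)$ variables are each below $t$. This is where the term $\bigl(1-e^{-(z-F_i)}\bigr)^{n-i}$ comes from.

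The remaining combinatorial factor $\binom{n-1}{i-1}$ I expect to arise from counting the number of ways the shortest-path forest can be organized — e.g.\ choosing which of the $n-1$ "gaps'' in the growth process correspond to the $i-1$ moments at which a component other than the first is initiated, or equivalently a Cayley-type count of rooted forests collapsed down to this binomial. So the full step is: fix $U$, bound the connection-cost probability by a union over the possible "shapes'' of the optimal attachment structure ($\binom{n-1}{i-1}$ of them), and for each shape bound the probability that all $n-i$ relevant edge weights are small by $\bigl(1-e^{-(z-F_i)}\bigr)^{n-i}$, using memorylessness of the exponentials to make the per-edge bounds independent. Combining with the outer $\binom{n}{i}$ from the choice of $U$ gives the claimed bound.

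The main obstacle, I expect, is making the middle step rigorous: precisely identifying a lower bound for $\sum_{v}\min_{u\in U}d(u,v)$ in terms of $n-i$ "fresh'' exponential weights together with the combinatorial factor $\binom{n-1}{i-1}$, without double-counting and without losing the independence needed to multiply the $n-i$ tail bounds. The natural tool is the exploration/BFS process from $U$ used by Bringmann et al., where at each step the minimum over the boundary edges is exponential with a parameter equal to the number of active edges, and memorylessness keeps successive minima independent; one then has to argue that discarding the parameters (replacing $\Exp(k)$ by $\Exp(1)$, which only increases the weights and hence the probability, cf.\ Lemma~\ref{lemma:expGam}) and union-bounding over the tree shapes yields exactly the stated expression.
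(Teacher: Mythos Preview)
Your outer structure is right: union-bound over all $U$ of size $i$, use $f(U)\geq F_i$, and restrict $i\leq\zeta$. But the heart of the argument --- where the factor $\binom{n-1}{i-1}\bigl(1-e^{-(z-F_i)}\bigr)^{n-i}$ comes from --- is both simpler and different from what you sketch.

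The paper does not count forest shapes or Cayley-type structures. It uses directly the fact (already established by Bringmann et al.) that for \emph{any} fixed $U$ with $|U|=i$, the connection cost satisfies $c(U)-f(U)\sim\sum_{k=i}^{n-1}\Exp(k)$. So after the union bound over the $\binom{n}{i}$ choices of $U$, one is left with bounding $\mathbb{P}\bigl(\sum_{k=i}^{n-1}\Exp(k)<z-F_i\bigr)$. Now R\'enyi's representation (Corollary~\ref{cor:renyi}) says this sum has the same distribution as $Y_{(n-i)}$, the $(n-i)$th order statistic of $n-1$ i.i.d.\ $\Exp(1)$ variables. The event $\{Y_{(n-i)}<t\}$ is exactly ``some subset of $n-i$ of the $Y_j$ are all below $t$'', so a union bound over the $\binom{n-1}{n-i}=\binom{n-1}{i-1}$ choices of that subset gives $\binom{n-1}{i-1}(1-e^{-t})^{n-i}$. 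That is the entire source of the second binomial.

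Two concrete gaps in your sketch: first, the combinatorial interpretation you propose for $\binom{n-1}{i-1}$ (gaps in a growth process, rooted-forest counts) is not what is happening and would be hard to make precise; the factor is purely an order-statistic union bound. Second, your suggested step of ``replacing $\Exp(k)$ by $\Exp(1)$, which only increases the weights and hence the probability'' points the wrong way: since $\Exp(k)\precsim\Exp(1)$, replacing $\Exp(k)$ by $\Exp(1)$ makes the sum stochastically larger and therefore \emph{decreases} $\mathbb{P}(\text{sum}<t)$, giving a lower bound rather than the upper bound you need. The R\'enyi route sidesteps this entirely by turning the sum into an order statistic exactly, with no domination step.
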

\begin{proof}
	Let $L$ denote the number of open facilities in the optimal solution (if there are multiple optimal solutions, pick one arbitrarily). If $\mathsf{OPT}<z$, then we know that $L=i$ for some $i\in[\zeta]$. Since these cases are disjoint, we can condition as follows:
	\begin{equation*}
	\mathbb{P}(\mathsf{OPT}<z)=\sum_{i=1}^{\zeta}\mathbb{P}(\mathsf{OPT}<z\mid L=i)\cdot\mathbb{P}(L=i)\leq\sum_{i=1}^{\zeta}\mathbb{P}(\mathsf{OPT}<z\mid L=i).
	\end{equation*}
	Recall that $f(U)=\sum_{j\in U}f_j$ is the total opening cost of all facilities in $U$. Using the union bound, we can derive that
	\begin{align*}
	\mathbb{P}(\mathsf{OPT}<z\mid L=i)&=\mathbb{P}\left(\exists U\subseteq V,\,|U|=i\,:\,c(U)<z\right)\\
	&\leq\mathbb{P}\left(\exists U\subseteq V,\,|U|=i\,:\,c(U)-f(U)<z-F_i\right)\\
	&\leq\binom{n}i\cdot\mathbb{P}\left(\sum_{k=i}^{n-1}\Exp(k)<z-F_i\right),
	\end{align*}
	since $F_i\leq f(U)$ and $c(U)-f(U)\sim\sum_{k=i}^{n-1}\Exp(k)$ for all $U\subseteq V$ with $|U|=i$.\\
	Let $Y_i\sim\Exp(1)$ for $i\in[n-1]$ and let $Y_{(i)}$ denote the corresponding order statistics. Then, using R\'enyi's representation (see Corollary~\ref{cor:renyi}), we can derive that
	\begin{equation*}
	\mathbb{P}\left(\sum_{k=i}^{n-1}\Exp(k)<z-F_i\right)=\mathbb{P}\left(Y_{(n-i)}<z-F_i\right).
	\end{equation*}
	Again using the union bound, it follows that
	\begin{align*}
	\mathbb{P}\left(Y_{(n-i)}<z-F_i\right)&=\mathbb{P}\left(\exists J\subseteq[n-1],\,|J|=n-i\,:\,\max_{j\in J}Y_j<z-F_i\right)\\
	&\leq\binom{n-1}{n-i}\cdot\mathbb{P}\left(\max_{j\in[n-i]}Y_j<z-F_i\right)\\
	&=\binom{n-1}{n-i}\left(1-e^{-(z-F_i)}\right)^{n-i}.
	\end{align*}
	By combining the results above, the desired result follows now immediately.\blokje
\end{proof}
Using the result of Lemma~\ref{lemma:expGam} we can also derive a stochastic lower bound for $\mathsf{OPT}_{n-k}$.
\begin{lemma}
	\label{lemma:LBOPTnk}
	Let $Z_k\sim\Gamma(k,e\binom{n}{2}/k)$. Then we have $\mathsf{OPT}_{n-k}\succsim F_{n-k}+Z_k$.
\end{lemma}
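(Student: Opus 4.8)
The plan is to bound $\mathsf{OPT}_{n-k}$ from below by separating the opening cost from the connection cost and then applying Lemma~\ref{lemma:expGam} to the connection cost. Fix any $U\subseteq V$ with $|U|=n-k$. Then $c(U)=f(U)+\sum_{v\in V}\min_{u\in U}d(u,v)\geq F_{n-k}+\sum_{v\in V}\min_{u\in U}d(u,v)$, since $f(U)\geq F_{n-k}$ by the ordering of the opening costs. So it suffices to show that the connection cost of \emph{any} such $U$ stochastically dominates $Z_k\sim\Gamma(k,e\binom{n}{2}/k)$. Since $\min_{k\in[n]}$ over the finitely many choices of $U$ only helps us (a minimum of things each dominating $Z_k$ still dominates $Z_k$ — though one has to be slightly careful, see below), the core of the argument is a bound that holds for a single fixed $U$.

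First I would fix $U$ and look at the connection cost $C(U):=\sum_{v\notin U}\min_{u\in U}d(u,v)$ (vertices in $U$ contribute $0$). For each of the $k$ vertices $v\notin U$, the quantity $\min_{u\in U}d(u,v)$ is at least $\min_{u\in U}w(\{u,v\})$, the minimum \emph{edge} weight from $v$ to $U$; since there are $|U|=n-k$ such edges, each an independent $\Exp(1)$, this minimum is distributed as $\Exp(n-k)$. That alone gives $C(U)\succsim\sum_{i}\Exp(n-k)$ but with dependencies across the $k$ vertices and a parameter $n-k$ that is too large (we want $e\binom{n}{2}/k$, which for $k\in\Theta(n)$ is $\Theta(n)$ as well, but the constant and the shape need care). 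The cleaner route is to use the known distributional fact (from \cite[Sect.~5]{Bringmann2015}, already invoked for $\mathsf{ALG}$) that for a fixed set of $n-k$ opened facilities the connection cost is distributed as $\sum_{i=n-k}^{n-1}\Exp(i)$, i.e. as a sum of $k$ independent exponentials with parameters $n-k,n-k+1,\ldots,n-1$. Then Lemma~\ref{lemma:expGam} applies directly: the $k$ parameters $\lambda_i=n-k,\ldots,n-1$ satisfy $\prod_{i=n-k}^{n-1}i\leq\eta^k$ for $\eta=e\binom{n}{2}/k$, because $\prod_{i=n-k}^{n-1}i\le (n-1)^k \le (n^2/2)^k \le (e\binom{n}{2}/k)^k$ once one checks $k\le e(n-1)/2$ or handles the boundary $k$ near $n$ separately; hence $\sum_{i=n-k}^{n-1}\Exp(i)\succsim\Gamma(k,e\binom{n}{2}/k)=Z_k$. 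Combining, $c(U)\succsim F_{n-k}+Z_k$ for each fixed $U$ of size $n-k$.

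Finally I would pass from a fixed $U$ to the minimum over all $U$ of size $n-k$. Here the subtlety is that stochastic dominance of each $c(U)$ over $F_{n-k}+Z_k$ does not by itself give dominance of the minimum; what one actually uses is that $\mathsf{OPT}_{n-k}-F_{n-k}\ge C(U^\star)$ where $U^\star$ is the optimizing set, and that \emph{pointwise} (not just in distribution) $C(U^\star)\ge \min_{u\in U^\star} w(\{u,v\})$ summed over $v$, combined with the fact that for the optimal choice the connection cost is still a sum of $k$ minimum-of-edge-weights terms over sets of size at least $n-k$. Actually the slickest phrasing avoids the min-over-$U$ entirely: by definition $\mathsf{OPT}_{n-k}=c(U^\star)\ge F_{n-k}+C(U^\star)$, and $C(U^\star)$, conditioned on which set $U^\star$ is, has the distribution $\sum_{i=n-k}^{n-1}\Exp(i)$ by the structural result, which dominates $Z_k$ uniformly over the conditioning; so $\mathsf{OPT}_{n-k}-F_{n-k}\succsim Z_k$ unconditionally. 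The main obstacle is getting this last conditioning step rigorous — ensuring that conditioning on the identity of the optimal set does not distort the connection-cost distribution in a way that breaks the domination — together with the elementary but slightly fiddly verification that $\prod_{i=n-k}^{n-1} i \le (e\binom{n}{2}/k)^k$ for the full range of $k\in[n]$, which is where the factor $e$ (via $k!\ge (k/e)^k$) and the $\binom{n}{2}$ come from.
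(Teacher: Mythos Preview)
Your approach has a genuine gap precisely at the point you yourself flag as ``the main obstacle.'' The claim that $C(U^\star)$, conditioned on which set $U^\star$ is, still has the distribution $\sum_{i=n-k}^{n-1}\Exp(i)$ is false: the event $\{U^\star=U_0\}$ is the event that $c(U_0)\le c(U)$ for every competing $U$, and this biases $C(U_0)$ downward. Stochastic dominance of each $c(U)$ over $F_{n-k}+Z_k$ simply does not transfer to $\min_U c(U)$. Your fallback via $\min_{u\in U}w(\{u,v\})$ also points the wrong way, since $d(u,v)\le w(\{u,v\})$ and hence $\min_u d(u,v)\le\min_u w(\{u,v\})$.

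The paper sidesteps the minimum over $U$ by giving a \emph{pointwise} lower bound valid for every $U$ simultaneously. For any $U$ of size $n-k$, the $k$ shortest paths from the non-facility vertices to their nearest facilities together form a forest with exactly $k$ edges, so the connection cost is at least the sum of $k$ edge weights and hence at least $S_k$, the sum of the $k$ lightest edge weights $w(e)$. Since this holds for every realization and every $U$, it survives taking the minimum, giving $\mathsf{OPT}_{n-k}\ge F_{n-k}+S_k$ deterministically. Only then does one pass to distributions: by memorylessness,
\[
S_k\sim\sum_{i=0}^{k-1}\Exp\!\left(\frac{\binom{n}{2}-i}{k-i}\right),
\]
and Lemma~\ref{lemma:expGam} applies because
\[
\prod_{i=0}^{k-1}\frac{\binom{n}{2}-i}{k-i}=\binom{\binom{n}{2}}{k}\le\left(\frac{e\binom{n}{2}}{k}\right)^{k},
\]
which is where the factor $e$ and the $\binom{n}{2}$ actually originate. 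Your route through the fixed-$U$ distribution $\sum_{i=n-k}^{n-1}\Exp(i)$ cannot be salvaged without a union bound over the $\binom{n}{n-k}$ sets $U$, which would destroy the stochastic dominance you need.
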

\begin{proof}
	If the number of open facilities in a solution is fixed to be $n-k$, then the total opening costs of the optimal solution is trivially lower bounded by $F_{n-k}$. Moreover, the total connection costs in this case is lower bounded by the total length of the $k$ shortest edges in the metric. This in turn can be lower bounded by the total weight of the $k$ lightest edge weights used to generate the metric.
	
	Let $S_k$ denote the sum of the $k$ lightest edge weights. Since all edge weights are independent and standard exponential distributed, we have $S_1\sim\Exp\left(\binom{n}{2}\right)$. Using the memorylessness property of the exponential distribution, it follows that $S_2-S_1\sim S_1+\Exp\left(\binom{n}{2}-1\right)$, i.e., the second lightest edge weight is equal to the lightest edge weight plus the minimum of $\binom{n}{2}-1$ standard exponential distributed random variables. In general, we get $S_{i+1}-S_i\sim S_i+\Exp\left(\binom{n}{2}-i\right)$. This yields
	\begin{equation*}
	S_k\sim\sum_{i=0}^{k-1}(k-i)\cdot\Exp\left(\tbinom{n}{2}-i\right)\sim\sum_{i=0}^{k-1}\Exp\left(\frac{\binom{n}{2}-i}{k-i}\right)\succsim\Gamma\left(k,\frac{e\binom{n}{2}}{k}\right)\sim Z_k,
	\end{equation*}
	where the stochastic dominance follows from Lemma \ref{lemma:expGam} by observing that
	\begin{equation*}
	\prod_{i=0}^{k-1}\frac{\binom{n}{2}-i}{k-i}=\frac{\binom{n}{2}!}{k!\left(\binom{n}{2}-k\right)!}=\binom{\binom{n}{2}}{k}\leq\left(\frac{e\binom{n}{2}}{k}\right)^k,
	\end{equation*}
	where the inequality follows from applying the well-known inequality $\binom{m}{k}\leq(em/k)^k$. The desired result follows now immediately.\blokje
\end{proof}

\section{Main results}\label{sect:probanalysis}

In this section we present our main results. We show that our simple heuristic is asymptotically optimal if $\kappa\in o(n)$ (Theorem \ref{thm:kappasmall}), and we provide a closed-form expression as an upper bound for the expected approximation ratio if $\kappa\in\Theta(n)$ (Theorem \ref{thm:kappalarge}). Finally we will evaluate this expression for the special case where $f_1=\ldots=f_n=f$.
\begin{theorem}
	\label{thm:kappasmall}
	Define $\kappa:=\kappa(n;f_1,\ldots,f_n)=\max\{i\in[n]:f_i<1/(i-1)\}$ and assume that $\kappa\in o(n)$. Let $\mathsf{ALG}$ denote the total cost of the solution which opens, independently of the metric space, the $\kappa$ cheapest facilities (breaking ties arbitrarily), i.e., the facilities with opening costs $f_1,\ldots,f_\kappa$. Then, it follows that
	\begin{equation*}
	\mathbb{E}\left[\frac{\mathsf{ALG}}{\mathsf{OPT}}\right]=1+o(1).
	\end{equation*}
\end{theorem}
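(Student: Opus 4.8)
The plan is to show $\mathbb{E}[\mathsf{ALG}/\mathsf{OPT}] = 1 + o(1)$ by combining a matching lower bound on $\mathsf{OPT}$ with the known distribution of $\mathsf{ALG}$, and then handling the contribution of the (rare) event that $\mathsf{OPT}$ is atypically small via Lemma~\ref{lemma:E[X/Y]}. First I would note that $\mathsf{OPT} \geq F_1 > 0$ deterministically, so $\mathsf{ALG}/\mathsf{OPT}$ is a nonnegative ratio with a strictly positive denominator — exactly the setup Lemma~\ref{lemma:E[X/Y]} is designed for, with $\delta$ essentially $F_1$ (after rescaling; note $F_1 \geq f_n/n^q$ by the polynomial-ratio assumption, so $1/\delta^2$ is at most polynomial in $n$). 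The strategy is to fix a threshold $z^\star$ slightly below the "typical" value of $\mathsf{OPT}$ and split $\mathbb{E}[\mathsf{ALG}/\mathsf{OPT}] = \mathbb{P}(\mathsf{OPT} \geq z^\star)\cdot\mathbb{E}[\mathsf{ALG}/\mathsf{OPT} \mid \mathsf{OPT} \geq z^\star] + \mathbb{P}(\mathsf{OPT} < z^\star)\cdot\mathbb{E}[\mathsf{ALG}/\mathsf{OPT} \mid \mathsf{OPT} < z^\star]$. On the first event, $\mathsf{ALG}/\mathsf{OPT} \leq \mathsf{ALG}/z^\star$, so that term is at most $\mathbb{E}[\mathsf{ALG}]/z^\star$; on the second, I apply Lemma~\ref{lemma:E[X/Y]} with $X = \mathsf{ALG}$, $Y = \mathsf{OPT}$, $y = z^\star$.

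The key quantitative input is a lower bound on $\mathsf{OPT}$ that nearly matches $\mathbb{E}[\mathsf{ALG}] = F_\kappa + \ln(n/\kappa) + \Theta(1)$. Since $\kappa \in o(n)$, we have $\ln(n/\kappa) \to \infty$, so the connection-cost term dominates (or at least is not dominated by) $F_\kappa$; more precisely $F_\kappa < \sum_{i=1}^\kappa 1/(i-1) = O(\ln\kappa) = o(\ln(n/\kappa))$ by definition of $\kappa$, hence $\mathbb{E}[\mathsf{ALG}] = (1+o(1))\ln(n/\kappa) \sim \ln n$. I would then use Lemma~\ref{lemma:OPTsmall} with $z = z^\star := (1-\varepsilon)\ln n$ for a slowly vanishing $\varepsilon = \varepsilon(n)$: here $\zeta = \max\{k : z^\star \geq F_k\}$, and one checks $\binom{n}{i}\binom{n-1}{i-1} \leq n^{2i}$ while $(1 - e^{-(z^\star - F_i)})^{n-i} \leq \exp(-(n-i)e^{-(z^\star-F_i)}) \leq \exp(-(n-i)e^{-z^\star}) = \exp(-(n-i)n^{-(1-\varepsilon)})$, which is superpolynomially small; summing over $i \leq \zeta \leq z^\star$ gives $\mathbb{P}(\mathsf{OPT} < z^\star) = o(n^{-c})$ for every constant $c$. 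This makes the second split term negligible: in Lemma~\ref{lemma:E[X/Y]} the first summand is $\frac{1}{\delta^2}\mathbb{P}(\mathsf{OPT}<z^\star)$ which is $\mathrm{poly}(n)\cdot n^{-\omega(1)} = o(1)$, and the integral $\int_{1/\delta^2}^\infty \mathbb{P}(\mathsf{ALG} \geq \sqrt{x})\,\mathrm{d}x = \mathbb{E}[\mathsf{ALG}^2 \cdot \mathbf{1}_{\mathsf{ALG} \geq 1/\delta}]$-type tail is controlled because $\mathsf{ALG} - F_\kappa \sim \sum_{i=\kappa}^{n-1}\Exp(i)$ has light (sub-exponential) tails, so $\mathbb{P}(\mathsf{ALG} \geq t)$ decays fast enough that this integral is $o(1)$ as well (one can bound $\mathsf{ALG}$ stochastically, e.g.\ crudely by $F_n + $ a Gamma variable, and use that $F_n \leq n\cdot f_n \leq \mathrm{poly}(n)$).

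The first split term is then $\mathbb{P}(\mathsf{OPT} \geq z^\star)\cdot\mathbb{E}[\mathsf{ALG}/\mathsf{OPT}\mid \mathsf{OPT}\geq z^\star] \leq \mathbb{E}[\mathsf{ALG}]/z^\star = \frac{(1+o(1))\ln n}{(1-\varepsilon)\ln n} = 1 + o(1)$, provided $\varepsilon \to 0$, and choosing $\varepsilon$ to go to zero slowly enough that the tail bound above still gives superpolynomial decay (e.g.\ $\varepsilon = 1/\ln\ln n$ works, since then $n^{-(1-\varepsilon)} = n^{-1}\cdot n^{1/\ln\ln n}$ still makes $(n-i)n^{-(1-\varepsilon)} \to \infty$). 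Putting the two pieces together yields $\mathbb{E}[\mathsf{ALG}/\mathsf{OPT}] \leq 1 + o(1)$, and the reverse inequality $\mathbb{E}[\mathsf{ALG}/\mathsf{OPT}] \geq 1$ is immediate since $\mathsf{ALG} \geq \mathsf{OPT}$ pointwise (the heuristic's solution is a feasible solution). I expect the main obstacle to be the delicate balancing of $\varepsilon(n)$: it must vanish (so $z^\star \sim \ln n$ and the ratio bound is $1+o(1)$) yet remain large enough that $\mathbb{P}(\mathsf{OPT} < z^\star)$ beats the polynomial factors $n^{2i}$ and $1/\delta^2 = \mathrm{poly}(n)$ coming from the union bounds and from Lemma~\ref{lemma:E[X/Y]}. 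A secondary technical point is verifying $F_\kappa = o(\ln(n/\kappa))$ uniformly over all admissible cost sequences with $\kappa \in o(n)$, which follows from $f_i < 1/(i-1)$ for $i \leq \kappa$ but needs the $\kappa \in o(n)$ hypothesis to conclude $\ln\kappa = o(\ln(n/\kappa))$ is \emph{not} always true — instead one argues $F_\kappa + H_{n-1} - H_{\kappa - 1} \leq H_{\kappa-1} + H_{n-1} - H_{\kappa-1} = H_{n-1} = \ln n + O(1)$ directly, so $\mathbb{E}[\mathsf{ALG}] \leq \ln n + O(1)$ with no case analysis needed at all.
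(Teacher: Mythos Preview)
Your high-level strategy---split on a threshold $z^\star$, bound the first piece by $\mathbb{E}[\mathsf{ALG}]/z^\star$, and control the second via Lemma~\ref{lemma:E[X/Y]} together with Lemma~\ref{lemma:OPTsmall}---is exactly what the paper does. The gap is the choice $z^\star=(1-\varepsilon)\ln n$ and the claim $\mathbb{E}[\mathsf{ALG}]\sim\ln n$ underlying it. The hypothesis $\kappa\in o(n)$ does \emph{not} force $\ln(n/\kappa)\sim\ln n$: take $f_i=(\ln n)/n$ for all $i$, so that $\kappa\approx n/\ln n$ and $\mathbb{E}[\mathsf{ALG}]=\Theta(\ln\ln n)$. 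In that regime $\mathsf{OPT}$ is also $\Theta(\ln\ln n)$ with high probability (the heuristic's own solution witnesses this), hence $\mathbb{P}(\mathsf{OPT}<(1-\varepsilon)\ln n)\to 1$, and the first summand $\tfrac{1}{\delta^2}\mathbb{P}(\mathsf{OPT}<z^\star)$ in Lemma~\ref{lemma:E[X/Y]} with $\delta=f_1=(\ln n)/n$ is of order $n^2/\ln^2 n$. Relatedly, your claim $\zeta\leq z^\star$ is false (in the same example $\zeta\approx(1-\varepsilon)n$), so the term-by-term bound $n^{2i}\exp(-(n-i)n^{-(1-\varepsilon)})$ is not summable over $i\leq\zeta$. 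Finally, the fallback inequality $F_\kappa\leq H_{\kappa-1}$ fails for $\kappa=1$; when $f_1\in\omega(\ln n)$ one has $\mathbb{E}[\mathsf{ALG}]=f_1+\ln n+O(1)\not\sim\ln n$, so the large-$f_1$ range needs separate treatment as well.

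The paper fixes this by a three-way case split on $f_1$ (roughly $f_1\leq 1/\ln^2 n$; $1/\ln^2 n<f_1\in O(\ln n)$; $f_1\in\omega(\ln n)$). In the first case the threshold is $\beta(n)=\ln(n/\kappa)(1+1/n)^{-1}$, which is the correct scale of $\mathsf{OPT}$ and makes Lemma~\ref{lemma:OPTsmall} yield a genuinely small probability (Lemma~\ref{lemma:summationterms}); in the second case the threshold incorporates $f_1$; the third case is elementary via $\mathsf{OPT}\geq f_1$. Your single-threshold argument can be salvaged only by replacing $(1-\varepsilon)\ln n$ with something tracking $\ln(n/\kappa)$ (and $f_1$), at which point the case analysis the paper performs becomes essentially unavoidable.
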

In order to prove this theorem, we consider the following three cases for the opening cost $f_1$ of the cheapest facility:
\begin{itemize}
	\item [\textbf{1.}] $f_1\leq1/\ln^2(n)$ as $n\to\infty$;
	\item [\textbf{2.}] $f_1\in O(\ln(n))$ and $f_1>1/\ln^2(n)$ as $n\to\infty$;
	\item [\textbf{3.}] $f_1\in\omega(\ln(n))$.
\end{itemize}
We start with the rather straightforward proof of Case 3.
\begin{proof}[Proof of Theorem \ref{thm:kappasmall} (Case 3)]
	For sufficiently large $n$, we have $f_1>1$, and thus $\kappa=1$ since $f_2\geq f_1>1=1/(2-1)$. Therefore, using our observations in Section~\ref{sect:heur}, we can derive that $\mathbb{E}[\mathsf{ALG}]=F_1+\ln(n)+\Theta(1)$ for sufficiently large $n$. Moreover, we know that $\mathsf{OPT}\geq F_1$. Using this observation, it follows that
	\begin{equation*}
	\mathbb{E}\left[\frac{\mathsf{ALG}}{\mathsf{OPT}}\right]\leq\mathbb{E}\left[\frac{\mathsf{ALG}}{F_1}\right]=\frac{F_1+\ln(n)+\Theta(1)}{F_1}=1+\frac{\ln(n)+\Theta(1)}{\omega(\ln(n))}=1+o(1),
	\end{equation*}
	which finishes the proof of this case.\blokje
\end{proof}
In order to prove Case 1 of Theorem~\ref{thm:kappasmall} we need the following two lemmas.
\begin{lemma}
	\label{lemma:integralbound}
	Let $f_1\leq1/\ln^2(n)$ as $n\to\infty$. For sufficiently large $n$ we have
	\begin{equation*}
	\int_{1/f_1^2}^\infty\mathbb{P}\left(\mathsf{ALG}\geq\sqrt{x}\right)\,\mathrm{d}x\leq O\left(\frac1n\right).
	\end{equation*}
\end{lemma}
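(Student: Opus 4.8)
The plan is to exploit the explicit distribution of $\mathsf{ALG}$ derived in Section~\ref{sect:heur}. Since $f_1 \leq 1/\ln^2(n)$, the value of $\kappa$ is large; in particular $\kappa \geq \ln^2(n)$ for large $n$ (because $f_{\lceil\ln^2 n\rceil} \leq f_{?}$ — more carefully, from $f_i \leq f_{\lceil 1/f_1\rceil}$ being small we get $\kappa \geq \lceil 1/f_1\rceil \geq \ln^2(n)$ roughly, using monotonicity of the $f_i$ and the definition $\kappa = \max\{i : f_i < 1/(i-1)\}$; this needs a short argument using $f_1 \leq f_i$ and $f_1 \leq 1/\ln^2 n$). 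When $\kappa < n$ we have $\mathsf{ALG} - F_\kappa \sim \sum_{i=\kappa}^{n-1}\Exp(i)$, and by Corollary~\ref{cor:renyi} this equals in distribution $Y_{(n-\kappa)}$, the $(n-\kappa)$th order statistic of $n-1$ i.i.d. standard exponentials; when $\kappa = n$ we have $\mathsf{ALG} = F_n \leq F_1 \cdot n^{q+1}$ deterministically, so the integrand is eventually zero on the range of integration and the bound is trivial. So I would reduce to the case $\kappa < n$ and to controlling the tail of $Y_{(n-\kappa)} + F_\kappa$.

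The key steps, in order: (i) fix $x \geq 1/f_1^2 \geq \ln^4(n)$, so $\sqrt{x} \geq \ln^2(n)$; (ii) split $\mathbb{P}(\mathsf{ALG}\geq\sqrt{x}) \leq \mathbb{P}(F_\kappa \geq \sqrt{x}/2) + \mathbb{P}(Y_{(n-\kappa)} \geq \sqrt{x}/2)$, where the first term is handled by the polynomial-ratio assumption ($F_\kappa \leq n f_n \leq n^{q+1} f_1 \leq n^{q+1}$, so this probability is $0$ once $\sqrt{x}/2 > n^{q+1}$, i.e. for $x$ beyond a polynomial threshold — and on the initial polynomial stretch $[1/f_1^2, \mathrm{poly}(n)]$ one uses a cruder bound); (iii) bound the order-statistic tail by a union bound, $\mathbb{P}(Y_{(n-\kappa)} \geq t) \leq \binom{n-1}{n-\kappa}\,\mathbb{P}(\text{at least } n-\kappa \text{ of the } Y_i \text{ are } \geq t)$ — actually cleaner: $Y_{(n-\kappa)} \geq t$ means at least $\kappa$ of the $n-1$ variables are $< t$ fails, i.e. more than $n-1-(n-\kappa) = \kappa-1$ exceed... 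I would instead use that $Y_{(n-\kappa)}$ is stochastically dominated by a sum of few exponentials: from $\sum_{i=\kappa}^{n-1}\Exp(i)$ with all rates $\geq \kappa \geq \ln^2 n$, Lemma~\ref{lemma:expGam} gives $Y_{(n-\kappa)} \precsim \Gamma(n-\kappa, \ln^2 n)$ or a plain exponential-type Chernoff bound $\mathbb{P}(\sum_{i=\kappa}^{n-1}\Exp(i) \geq t) \leq e^{-\Theta(\kappa t)}\cdot(\text{poly})$ for $t$ large relative to $\mathbb{E}[\sum] = H_{n-1}-H_{\kappa-1} = \Theta(\ln n)$; (iv) since $\sqrt{x}/2 \geq \frac12\ln^2(n) \gg \ln(n)$, the exponent $-\Theta(\kappa\sqrt{x})$ is at most $-\Theta(\ln^2(n)\cdot\ln^2(n)) = -\Theta(\ln^4 n)$ at the left endpoint and decays in $x$, so $\int_{1/f_1^2}^\infty \mathbb{P}(\mathsf{ALG}\geq\sqrt x)\,dx$ is dominated by $\int_{\ln^4 n}^\infty e^{-\Theta(\kappa\sqrt x)}\,dx$, a convergent integral which — after the substitution $u = \sqrt x$ giving $\int 2u\, e^{-\Theta(\kappa u)}\,du$ — is of order $e^{-\Theta(\kappa\ln^2 n)}/\kappa^2 \leq e^{-\Theta(\ln^4 n)} = o(1/n)$, comfortably beating $O(1/n)$.

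The main obstacle I expect is step (iii): getting a clean tail bound on $Y_{(n-\kappa)} = \sum_{i=\kappa}^{n-1}\Exp(i)$ that is simultaneously (a) valid uniformly for the full range of $x$ up to $\infty$, (b) strong enough after integration, and (c) correctly accounting for the additive $F_\kappa$ term and the polynomial ratio assumption $f_n/f_1 \leq n^q$. The cleanest route is probably to use Lemma~\ref{lemma:expGam} to write $\mathsf{ALG} - F_\kappa \precsim \Gamma(n-\kappa,\,\kappa)$ (since $\prod_{i=\kappa}^{n-1} i \geq \kappa^{n-\kappa}$, so rates dominate a common rate $\kappa$), then apply a standard Gamma tail bound; one must check that $F_\kappa \leq F_n \leq n^{q+1}f_1 \leq n^{q+1}/\ln^2 n$, whence for $\sqrt x \geq$ (a fixed power of $n$) the term $F_\kappa$ is negligible and for the remaining bounded-length initial interval the contribution is still $O(\text{poly}(n))\cdot e^{-\Theta(\kappa\ln^2 n)} = o(1/n)$. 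The rest is routine estimation of a convergent integral.
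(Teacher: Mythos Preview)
There is a genuine gap: your argument hinges on $\kappa$ being large (you write ``$\kappa \geq \ln^2(n)$''), but this is simply false under the hypotheses of the lemma. The assumption is only $f_1 \leq 1/\ln^2(n)$; nothing prevents $f_2 > 1$ (the polynomial-ratio condition $f_n/f_1 \leq n^q$ allows $f_2$ up to $n^q/\ln^2 n$), in which case $\kappa = 1$. Your sketched justification (``using monotonicity of the $f_i$ and the definition of $\kappa$'') cannot be completed, because $f_1 \leq f_i$ gives inequalities in the wrong direction for forcing $f_i < 1/(i-1)$.

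This breaks your ``cleanest route''. The domination $\mathsf{ALG} - F_\kappa \precsim \Gamma(n-\kappa,\kappa)$ is correct, but $\Gamma(n-\kappa,\kappa)$ has mean $(n-\kappa)/\kappa$; when $\kappa$ is bounded this mean is of order $n$, so at the threshold $t \approx \ln^2 n$ the Gamma tail is essentially $1$ and the bound is vacuous. Likewise your Chernoff heuristic ``$\mathbb{P}(\cdot)\leq e^{-\Theta(\kappa t)}\cdot(\text{poly})$'' is only valid for $t$ well above the mean $H_{n-1}-H_{\kappa-1}$, which is $\Theta(\ln n)$ when $\kappa = O(1)$---and then the implicit polynomial prefactor is what matters and must be controlled. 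The paper instead uses the order-statistic interpretation directly: $\{Y_{(n-\kappa)}\geq t\}$ means at least $\kappa$ of the $n-1$ standard exponentials exceed $t$, so a union bound gives $\mathbb{P}(\mathsf{ALG}\geq t)\leq\binom{n-1}{\kappa}e^{-\kappa(t-2)}$. The point is that after integrating and bounding $\binom{n-1}{\kappa}\leq n^\kappa$, the competition is between $\kappa\ln n$ (from the prefactor) and $\kappa/f_1 \geq \kappa\ln^2 n$ (from the exponent at the left endpoint $\sqrt{x}=1/f_1$); the latter wins for every $\kappa\geq 1$, yielding $O(1/n)$ without any lower bound on $\kappa$.

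A secondary simplification you are missing: from the definition of $\kappa$ one has $f_\kappa < 1/(\kappa-1)$, hence $F_\kappa \leq \kappa f_\kappa < \kappa/(\kappa-1)\leq 2$ for $\kappa\geq 2$ (and $F_1 = f_1 < 1$). So $F_\kappa\leq 2$ always, and your detour through $F_\kappa \leq n^{q+1}f_1$ and splitting the integration range at a polynomial threshold is unnecessary.
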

\begin{proof}
	We start by providing a bound for the cumulative distribution function of $\mathsf{ALG}$. Let $t\in\mathbb{R}$. By our observations in Section~\ref{sect:heur} we know this distribution, and since $F_\kappa\leq2$, we can bound it as follows
	\begin{equation*}
	\mathbb{P}(\mathsf{ALG}\geq t)=\mathbb{P}\left(\sum_{i=\kappa}^{n-1}\Exp(i)\geq t-F_\kappa\right)\leq\mathbb{P}\left(\sum_{i=\kappa}^{n-1}\Exp(i)\geq t-2\right).
	\end{equation*}
	Now, let $Y_i\sim\Exp(1)$ independently, $i=1,2,\ldots,n-1$, and let $Y_{(i)}$ denote the corresponding order statistics. Using R\'enyi's representation (see Corollary \ref{cor:renyi}), we can now rewrite the last probability as follows:
	\begin{align*}
	\mathbb{P}\left(\sum_{i=\kappa}^{n-1}\Exp(i)\geq t-2\right)&=\mathbb{P}\left(Y_{(n-\kappa)}\geq t-2\right)\\
	&=\mathbb{P}\left(\exists\,L\subseteq[n-1],|L|=\kappa:\min_{j\in L}Y_j\geq t-2\right).
	\end{align*}
	Applying a union bound to this result, we obtain that
	\begin{equation*}
	\mathbb{P}(\mathsf{ALG}\geq t)\leq\binom{n-1}{\kappa}\cdot\mathbb{P}\left(\min_{j\in[\kappa]}Y_j\geq t-2\right)\leq\binom{n-1}{\kappa}\cdot e^{-\kappa(t-2)}.
	\end{equation*}
	Note that the last inequality becomes an equality whenever $t-2\geq0$.\\
	We can use this result to bound the given integral as follows:
	\begin{align*}
	\int_{1/f_1^2}^\infty\mathbb{P}\left(\mathsf{ALG}\geq\sqrt{x}\right)\,\mathrm{d}x&\leq\int_{1/f_1^2}^\infty\binom{n-1}{\kappa}\cdot e^{-\kappa(\sqrt{x}-2)}\,\mathrm{d}x\\
	&=\binom{n-1}{\kappa}e^{2\kappa}\int_{1/f_1^2}^\infty e^{-\kappa\sqrt{x}}\,\mathrm{d}x\\
	&=\binom{n-1}{\kappa}e^{2\kappa}\left(\frac2{\kappa^2}\left(1+\frac{\kappa}{f_1}\right)e^{-\kappa/f_1}\right)\\
	&\leq2n^\kappa\left(1+\frac1{f_1}\right)e^{3\kappa-\kappa/f_1},
	\end{align*}
	where we used $\binom{n-1}{\kappa}\leq(en)^\kappa$ to bound the binomial coefficient. It remains to be shown that $2n^\kappa(1+1/f_1)e^{3\kappa-\kappa/f_1}=O(1/n)$. To do so, we start by claiming that the following inequality holds for sufficiently large $n$:
	\begin{equation*}
	(\kappa+1)\ln(n)+\ln\left(1+\frac1{f_1}\right)+3\kappa\leq\frac{\kappa}{f_1}.
	\end{equation*}
	To see this, observe that for sufficiently large $n$ we have $(\kappa+1)\ln(n)\leq\kappa/3f_1$, $\ln(1+1/f_1)\leq\kappa/3f_1$ and $3\kappa\leq\kappa/3f_1$ (in all three cases since $1/f_1\geq\ln^2(n)$). Rearranging the inequality, we get
	\begin{equation*}
	\kappa\ln(n)+\ln\left(1+\frac1{f_1}\right)+3\kappa-\frac{\kappa}{f_1}\leq-\ln(n).
	\end{equation*}
	Upon exponentiation of both sides we obtain the desired result, which finishes this proof.\blokje
\end{proof}
\begin{lemma}
	\label{lemma:summationterms}
	Let $q$ be a constant such that $f_n/f_1\leq n^q$, let $\beta(n)=\ln(n/\kappa)(1+1/n)^{-1}$, take $\zeta(n):=\max\{i:\beta(n)\geq F_i\}$ and assume that $\kappa<n$. For sufficiently large $n$, and for any integer $i$ with $1\leq i\leq\zeta(n)$, we have
	\begin{equation*}
	\binom{n}{i}\binom{n-1}{i-1}\left(1-e^{-(\beta(n)-F_i)}\right)^{n-i}\leq\frac1{n^{2q+4}}.
	\end{equation*}
\end{lemma}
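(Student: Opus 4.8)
The plan is to bound the three factors separately and boil the statement down to a single inequality. First I would estimate the binomials crudely, using $\binom{n}{i}\le n^{i}$ and $\binom{n-1}{i-1}\le n^{i-1}\le n^{i}$, so that $\binom{n}{i}\binom{n-1}{i-1}\le n^{2i}$. Second, the constraint $i\le\zeta(n)$ forces $\beta(n)-F_i\ge 0$, so I may apply the elementary inequality $1-u\le e^{-u}$ with $u=e^{-(\beta(n)-F_i)}\in(0,1]$ to get $\bigl(1-e^{-(\beta(n)-F_i)}\bigr)^{n-i}\le\exp\!\bigl(-(n-i)e^{-(\beta(n)-F_i)}\bigr)$. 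Multiplying the two bounds and taking logarithms, the lemma becomes equivalent to the single assertion
\begin{equation*}
(n-i)\,e^{-(\beta(n)-F_i)}\ \ge\ (2i+2q+4)\ln n\qquad\text{for every }1\le i\le\zeta(n).
\end{equation*}

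Next I would make $e^{-\beta(n)}$ explicit. Since $\beta(n)=\ln(n/\kappa)(1+1/n)^{-1}$ and $\kappa<n$, I have $e^{-\beta(n)}=(\kappa/n)^{n/(n+1)}\ge\kappa/n$, because $t\mapsto t^{s}$ is increasing on $(0,1)$ when $s\in(0,1)$. Hence $e^{-(\beta(n)-F_i)}=e^{F_i}e^{-\beta(n)}\ge e^{F_i}\,\kappa/n$. I would also record that the index range is bounded away from $n$: for $i>\kappa$ maximality of $\kappa$ gives $f_j\ge 1/\kappa$ for all $j>\kappa$, so $F_i\ge (i-\kappa)/\kappa$, and $F_i\le\beta(n)\le\ln(n/\kappa)$ then yields $\zeta(n)\le\kappa\bigl(1+\ln(n/\kappa)\bigr)$, which is $o(n)$ in the regime where the lemma is applied; thus $n-i\ge n/2$ for large $n$. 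The target inequality therefore reduces to showing $\kappa\,e^{F_i}\ge 2(2i+2q+4)\ln n$ for all $i$ in range.

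To prove this I would split on the size of $i$ relative to $\kappa$. For $i\le\kappa$ the sortedness of the costs and the definition of $\kappa$ give $f_i\le f_\kappa<1/(\kappa-1)$, hence $F_i<2$; here I would exploit a lower bound on $\kappa$ itself, coming from the Case~1 hypothesis together with the polynomial–ratio assumption: since $f_j\le f_n\le n^{q}f_1$, one has $f_j<1/(j-1)$ whenever $j-1<1/(n^{q}f_1)$, so $\kappa\gtrsim 1/(n^{q}f_1)$, which with $f_1\le 1/\ln^{2}n$ makes $\kappa$ at least of order $\ln^{2}n$ (much larger in the pure equal-cost subcase). For $i>\kappa$ I would instead use $F_i\ge(i-\kappa)/\kappa$, so $e^{F_i}\ge e^{(i-\kappa)/\kappa}$ grows at least geometrically in $(i-\kappa)/\kappa$ and therefore overtakes the right-hand side, which is only linear in $i$.

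The main obstacle is the bookkeeping that glues these two regimes together so that every $i\in[1,\zeta(n)]$ is covered with all multiplicative constants and the additive $2q+4$ absorbed; the genuinely delicate point is the transition band $i\approx\kappa$, where the geometric growth of $e^{F_i}$ has not yet set in (there $F_i<2$) while the trivial bound $e^{F_i}\ge1$ leaves the inequality $\kappa\ge 2(2i+2q+4)\ln n$ tight at $i$ of order $\kappa$. Making this band go through is exactly where the Case~1 assumptions $f_1\le 1/\ln^{2}n$ and $f_n\le n^{q}f_1$ (and, implicitly, $\kappa$ being comfortably large relative to $\ln n$) are indispensable; once a clean, uniform-in-$i$ lower bound on $\kappa$ is secured, the remaining estimates are routine.
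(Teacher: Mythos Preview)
Your strategy parallels the paper's closely: both bound the binomials (the paper uses $(en/i)^{2i}$, you use the cruder $n^{2i}$), apply $1-e^{-x}\le e^{-e^{-x}}$, and reduce the claim to an inequality of the form $(n-i)\,e^{F_i-\beta(n)}\ge 2i\ln(en/i)+(2q+4)\ln n$. The difference is that the paper does \emph{not} split on $i$: it simply replaces $i$ by $\zeta(n)$ on the binomial side, uses only $e^{F_i}\ge 1$ on the other side, and asserts that the resulting one-variable expression is $\le -n$ for large $n$.

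Your instinct that the band $i\approx\kappa$ is the genuine obstacle is correct, and in fact the gap you flag cannot be closed: the lemma as stated is false. Take $f_1=\cdots=f_n=n^{-1/2}$, so $q=0$, $\kappa\sim n^{1/2}\in o(n)$, and $\beta(n)\sim\tfrac12\ln n$. At $i=\kappa$ one has $F_i=1$, hence $e^{-(\beta-F_i)}\sim e\,n^{-1/2}$ and $(1-e^{-(\beta-F_i)})^{n-i}\approx\exp(-e\,n^{1/2})$, while $\binom{n}{i}\binom{n-1}{i-1}\asymp (en^{1/2})^{2n^{1/2}}=\exp\bigl(n^{1/2}(2+\ln n)\bigr)$. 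The product is $\exp\bigl(n^{1/2}(\ln n+2-e)\bigr)\to\infty$, not $\le n^{-4}$. The paper's proof hides exactly this behind a sign slip: from $\beta(n)=\ln(\alpha)(1+1/n)^{-1}$ with $\alpha=n/\kappa$ it writes $e^{-\beta(n)}=\alpha^{1-1/(1+n)}$, whereas the correct value is $e^{-\beta(n)}=\alpha^{-1+1/(1+n)}$; with the right exponent the ``dominant term'' the paper isolates tends to $0$, not to $-\infty$, and the argument collapses precisely at the band you identified. So neither your sketch nor the paper's proof can be completed as written; the factor $(1+1/n)^{-1}$ in the definition of $\beta(n)$ is too close to $1$ to yield any gain, and a genuinely different threshold (or a bound sharper than the union-bound Lemma on $\mathbb{P}(\mathsf{OPT}<z)$) would be needed.
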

\begin{proof}
	Let $n$ be sufficiently large. Since $i\ln(en/i)$ is an increasing function of $i$ whenever $0<i<n$, it follows that
	\begin{equation*}
	2i\ln\left(\frac{en}{i}\right)-\left(n-\zeta(n)\right)\cdot e^{F_i-\beta(n)}\leq2\zeta(n)\ln\left(\frac{en}{\zeta(n)}\right)-\left(n-\zeta(n)\right)\cdot e^{-\beta(n)},
	\end{equation*}
	where we also used $e^{F_i}\geq1$ for all $i\in[n]$. Next, define $\alpha(n):=n/\kappa$ and recall that (by construction) $f_{\kappa+c}\geq f_{\kappa+1}\geq1/\kappa$ for all $c\geq1$ and thus $F_{\kappa+c}>c/\kappa$. Using this, we can see that $F_{\kappa+\kappa\beta(n)}>\beta(n)$, from which follows that $\zeta(n)\leq\kappa(1+\beta(n))$ and $\zeta(n)/n\leq(1+\beta(n))/\alpha(n)\leq(1+\ln(\alpha(n)))/\alpha(n)$, where the last inequality follows from the definition of $\beta(n)$. Applying this, we obtain
	\begin{align*}
	&2i\ln\left(\frac{en}{i}\right)-\left(n-\zeta(n)\right)\cdot e^{F_i-\beta(n)}\\
	&\qquad\qquad\qquad\leq n\cdot\left(\frac{2\zeta(n)}{n}\ln\left(\frac{en}{\zeta(n)}\right)-\left(1-\frac{\zeta(n)}{n}\right)\cdot e^{-\beta(n)}\right)\\
	&\qquad\qquad\qquad\leq n\cdot\left(\frac{2+2\ln(\alpha(n))}{\alpha(n)}\ln\left(\frac{e\alpha(n)}{1+\ln(\alpha(n))}\right)-\left(1-\frac{1+\ln(\alpha(n))}{\alpha(n)}\right)\cdot e^{-\beta(n)}\right),
	\end{align*}
	since $0<\zeta(n)/n\leq(1+\ln(\alpha(n)))/\alpha(n)\leq1$.\\
	Since $\kappa\in o(n)$, we have $\alpha(n)\to\infty$ as $n\to\infty$ and $\beta(n)=\ln(\alpha(n))(1+1/n)^{-1}$, implying $e^{-\beta(n)}=\alpha(n)^{1-1/(1+n)}$. This gives us
	\begin{align*}
	&2i\ln\left(\frac{en}i\right)-\left(n-\zeta(n)\right)\cdot e^{F_i-\beta(n)}\\
	&\qquad\qquad\qquad\leq n\cdot\left(\frac{2+2\ln(\alpha(n))}{\alpha(n)}\ln\left(\frac{e\alpha(n)}{1+\ln(\alpha(n))}\right)+\frac{1+\ln(\alpha(n))}{\alpha(n)^{1/(1+n)}}-\alpha(n)^{1-1/(1+n)}\right).
	\end{align*}
	Observe that the dominant term between the brackets on the right-hand side is given by $-\alpha(n)^{1-1/(1+n)}$, implying that this factor becomes less than $-1$ whenever $n$ is sufficiently large. So, we obtain that
	\begin{equation*}
	2i\ln\left(\frac{en}i\right)-\left(n-\zeta(n)\right)\cdot e^{F_i-\beta(n)}\leq-n\leq-(2q+4)\ln(n),
	\end{equation*}
	since $q$ is a constant. Combining this with the well-known inequality $\ln(1-x)\leq-x$ (for $0\leq x<1$), it follows that
	\begin{align*}
	2i\ln\left(\frac{en}i\right)+\left(n-\zeta(n)\right)\ln\left(1-e^{-(\beta(n)-F_i)}\right)&\!\leq\!2i\ln\left(\frac{en}i\right)-\left(n-\zeta(n)\right)\cdot e^{F_i-\beta(n)}\\
	&\!\leq\!-(2q+4)\ln(n).
	\end{align*}
	From this inequality, we immediately get
	\begin{equation*}
	\left(\frac{en}i\right)^{2i}\cdot\left(1-e^{-(\beta(n)-F_i)}\right)^{n-\zeta(n)}\leq\frac1{n^{2q+4}}.
	\end{equation*}
	On the other hand, since $\binom{n-1}{k-1}\leq\binom{n}{k}$, $\binom{n_1}{k_1}\binom{n_2}{k_2}\leq\binom{n_1+n_2}{k_1+k_2}$ and $\binom{n}{k}\leq(en/k)^k$, it follows also that
	\begin{align*}
	\binom{n}{i}\binom{n-1}{i-1}\left(1-e^{-(\beta(n)-F_i)}\right)^{n-i}&\leq\left(\frac{en}i\right)^{2i}\cdot\left(1-e^{-(\beta(n)-F_i)}\right)^{n-i}\\
	&\leq\left(\frac{en}i\right)^{2i}\cdot\left(1-e^{-(\beta(n)-F_i)}\right)^{n-\zeta(n)},
	\end{align*}
	where the last inequality follows since $1-e^{-(\beta(n)-F_i)}\leq1$ and $n-i\geq n-\zeta(n)$.\\ Combining the two results above yields the desired inequality.\blokje
\end{proof}
\begin{proof}[Proof of Theorem \ref{thm:kappasmall} (Case 1)]
	Let $n$ be sufficiently large. By definition of $\kappa$, it follows that $f_\kappa<1/(\kappa-1)$ and thus $F_\kappa<\kappa/(\kappa-1)\leq2$ whenever $\kappa\geq2$. If $\kappa=1$, then we have $F_\kappa=f_1<1$ as $n\to\infty$. So, in any case we have $F_\kappa=O(1)$. Now, by our observations in Section~\ref{sect:heur} we know that $\mathbb{E}[\mathsf{ALG}]=F_\kappa+\ln(n/\kappa)+\Theta(1)=\ln(n/\kappa)+\Theta(1)$. Set $\beta(n):=\ln(n/\kappa)(1+1/n)^{-1}$ and observe that $\beta(n)\in\omega(1)$.\\
	Conditioning on the events $\mathsf{OPT}\geq\beta(n)$ and $\mathsf{OPT}<\beta(n)$ yields
	\begin{equation*}
	\mathbb{E}\left[\frac{\mathsf{ALG}}{\mathsf{OPT}}\right]\leq\mathbb{E}\left[\frac{\mathsf{ALG}}{\beta(n)}\right]+\mathbb{P}\left(\mathsf{OPT}<\beta(n)\right)\cdot\mathbb{E}\left[\frac{\mathsf{ALG}}{\mathsf{OPT}}\;\middle|\;\mathsf{OPT}<\beta(n)\right].
	\end{equation*}
	We start by bounding the second part. Applying Lemma \ref{lemma:E[X/Y]}, with $X=\mathsf{ALG}$, $Y=\mathsf{OPT}$, $y=\beta(n)$ and $\delta=f_1$, we get
	\begin{equation*}
	\mathbb{P}\left(\mathsf{OPT}<\beta(n)\right)\mathbb{E}\left[\frac{\mathsf{ALG}}{\mathsf{OPT}}\;\middle|\;\mathsf{OPT}<\beta(n)\right]\leq\frac{\mathbb{P}\left(\mathsf{OPT}<\beta(n)\right)}{f_1^2}+\int\limits_{1/f_1^2}^\infty\mathbb{P}\left(\mathsf{ALG}\geq\sqrt{x}\right)\,\mathrm{d}x.
	\end{equation*}
	Note that we may use Lemma \ref{lemma:E[X/Y]} since $\mathsf{OPT}\geq f_1$ and $\beta(n)>f_1$, which implies $\mathbb{P}(\mathsf{OPT}<\beta(n))>0$. The probability containing $\mathsf{OPT}$ can be bounded using Lemma~\ref{lemma:OPTsmall}, whereas the integral can be bounded by Lemma~\ref{lemma:integralbound}.	Together, this yields
	\begin{equation*}
	\mathbb{E}\left[\frac{\mathsf{ALG}}{\mathsf{OPT}}\right]\leq\mathbb{E}\left[\frac{\mathsf{ALG}}{\beta(n)}\right]+\frac1{f_1^2}\cdot\sum_{i=1}^{\zeta(n)}\binom{n}{i}\binom{n-1}{i-1}\left(1-e^{-(\beta(n)-F_i)}\right)^{n-i}+O\left(\frac1n\right),
	\end{equation*}
	where $\zeta(n):=\max\{i:\beta(n)\geq F_i\}$. The terms of the summation can be bounded by Lemma~\ref{lemma:summationterms}. Using this lemma, we obtain that
	\begin{equation*}
	\mathbb{E}\left[\frac{\mathsf{ALG}}{\mathsf{OPT}}\right]\leq\mathbb{E}\left[\frac{\mathsf{ALG}}{\beta(n)}\right]+\frac1{f_1^2}\cdot\sum_{i=1}^{\zeta(n)}\frac1{n^{2q+4}}+O\left(\frac1n\right)\leq\mathbb{E}\left[\frac{\mathsf{ALG}}{\beta(n)}\right]+\frac{1/f_1^2}{n^{2q+3}}+O\left(\frac1n\right),
	\end{equation*}
	since $\zeta(n)\leq n$ by definition. Moreover, since $\kappa\in o(n)$ implies $f_n>1/n$ as $n\to\infty$, we also have $f_1\geq f_n/n^q>1/n^{q+1}$ as $n\to\infty$ for some constant $q$. This results in
	\begin{equation*}
	\mathbb{E}\left[\frac{\mathsf{ALG}}{\mathsf{OPT}}\right]\leq\mathbb{E}\left[\frac{\mathsf{ALG}}{\beta(n)}\right]+n^{2q+2}\cdot\frac1{n^{2q+3}}+O\left(\frac1n\right)=\mathbb{E}\left[\frac{\mathsf{ALG}}{\beta(n)}\right]+O\left(\frac1n\right).
	\end{equation*}
	Since we started with $\beta(n)=\ln(n/\kappa)(1+1/n)^{-1}$ and $n/\kappa\in\omega(1)$ (since $\kappa\in o(n)$), it follows that
	\begin{equation*}
	\mathbb{E}\left[\frac{\mathsf{ALG}}{\mathsf{OPT}}\right]\leq\frac{\mathbb{E}[\mathsf{ALG}]}{\beta(n)}+O\left(\frac1n\right)\leq\frac{\ln(n/\kappa)+\Theta(1)}{\ln(n/\kappa)}\left(1+\frac1{n}\right)+O\left(\frac1n\right)=1+o(1),
	\end{equation*}
	which finishes the proof of this case.\blokje
\end{proof}
In order to prove Case 2 of Theorem~\ref{thm:kappasmall} we need the following five lemmas.
\begin{lemma}
	\label{lemma:probAandB}
	Let $A$ and $B$ be two arbitrary events. Then we have $\mathbb{P}(A\cap B)\leq2\sqrt{\mathbb{P}(A)\mathbb{P}(B)}$.
\end{lemma}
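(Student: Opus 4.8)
The plan is to reduce the claim to the Cauchy--Schwarz inequality recorded at the beginning of Section~\ref{sect:prelim}. Let $\mathbf 1_A$ and $\mathbf 1_B$ denote the indicator random variables of the events $A$ and $B$; then $\mathbf 1_A\mathbf 1_B=\mathbf 1_{A\cap B}$, so $\mathbb{P}(A\cap B)=\mathbb{E}[\mathbf 1_A\mathbf 1_B]$. First I would apply Cauchy--Schwarz with $X=\mathbf 1_A$ and $Y=\mathbf 1_B$ to get $\mathbb{E}[\mathbf 1_A\mathbf 1_B]\leq\sqrt{\mathbb{E}[\mathbf 1_A^2]\,\mathbb{E}[\mathbf 1_B^2]}$. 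Since each indicator is $\{0,1\}$-valued, $\mathbb{E}[\mathbf 1_A^2]=\mathbb{P}(A)$ and $\mathbb{E}[\mathbf 1_B^2]=\mathbb{P}(B)$, which already gives the stronger bound $\mathbb{P}(A\cap B)\leq\sqrt{\mathbb{P}(A)\mathbb{P}(B)}$, and a fortiori the claimed $2\sqrt{\mathbb{P}(A)\mathbb{P}(B)}$.

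Alternatively, and even more elementarily, one can invoke only monotonicity of probability: $\mathbb{P}(A\cap B)\leq\mathbb{P}(A)$ and $\mathbb{P}(A\cap B)\leq\mathbb{P}(B)$, so $\mathbb{P}(A\cap B)^2\leq\mathbb{P}(A)\mathbb{P}(B)$, and taking square roots finishes the argument. I would most likely present this second version, since it uses no auxiliary result whatsoever and the two-line calculation is self-contained.

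There is no genuine obstacle here; the one thing worth noting is that the factor $2$ in the statement is slack — a bare $\sqrt{\mathbb{P}(A)\mathbb{P}(B)}$ already suffices — and is presumably retained only because that looser (and, where it is used, notationally convenient) form is all that will be needed when the lemma is applied in the proof of Case~2 of Theorem~\ref{thm:kappasmall}.
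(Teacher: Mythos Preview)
Your proposal is correct, and in fact both of your arguments are cleaner than the paper's. The paper also introduces the indicators $X=\mathbf 1_A$, $Y=\mathbf 1_B$, but then takes a more circuitous route: it writes $\mathbb{P}(A\cap B)=\mathbb{E}[XY]=\mathbb{E}[X]\mathbb{E}[Y]+\Cov(X,Y)$, bounds the covariance by $\sqrt{\Var(X)\Var(Y)}\leq\sqrt{\mathbb{P}(A)\mathbb{P}(B)}$, and then observes that $\mathbb{P}(A)\mathbb{P}(B)\leq\sqrt{\mathbb{P}(A)\mathbb{P}(B)}$ to collect the two terms into $2\sqrt{\mathbb{P}(A)\mathbb{P}(B)}$. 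Your direct application of Cauchy--Schwarz to the indicators (or the bare monotonicity argument) bypasses the covariance decomposition entirely and shows at once that the constant $2$ is unnecessary; the paper's detour is what produces the extra factor. Either of your versions would be a strict improvement in exposition.
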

\begin{proof}
	Let $X$ and $Y$ denote the indicator variables corresponding to $A$ and $B$, respectively. Then it follows that $\mathbb{E}[X]=\mathbb{E}[X^2]=\mathbb{P}(A)$ and $\mathbb{E}[Y]=\mathbb{E}[Y^2]=\mathbb{P}(B)$. From this, we deduce that $\Var(X)\leq\mathbb{P}(A)$ and $\Var(Y)\leq\mathbb{P}(B)$. Moreover, we can see that $\mathbb{E}[XY]=\mathbb{P}(A\cap B)$. Now, combining this knowledge with the variance-bound for the covariance, we derive
	\begin{align*}
	\mathbb{P}(A\cap B)&=\mathbb{E}[XY]=\mathbb{E}[X]\mathbb{E}[Y]+\Cov(X,Y)\leq\mathbb{P}(A)\mathbb{P}(B)+\sqrt{\Var(X)\Var(Y)}\\
	&\leq\mathbb{P}(A)\mathbb{P}(B)+\sqrt{\mathbb{P}(A)\mathbb{P}(B)}.
	\end{align*}
	Since $0\leq\mathbb{P}(A)\mathbb{P}(B)\leq1$, it follows that $\mathbb{P}(A)\mathbb{P}(B)\leq\sqrt{\mathbb{P}(A)\mathbb{P}(B)}$, which finishes this proof.\blokje
\end{proof}
\begin{lemma}[{\cite[Lemma~3.2]{Bringmann2015}}]
	\label{lemma:hat-c}
	Let $X\sim\sum_{i=1}^n\Exp(ci)$. Then $\mathbb{P}(X\leq\alpha)=(1-e^{-c\alpha})^n$ for any $\alpha\geq0$.
\end{lemma}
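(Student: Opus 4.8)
The plan is to recognise $X\sim\sum_{i=1}^n\Exp(ci)$ as the distribution of the largest of $n$ independent $\Exp(c)$ random variables, for which the claimed cumulative distribution function is immediate. First I would introduce $Y_1,\ldots,Y_n\sim\Exp(c)$ independently and compute, using independence and the exponential cumulative distribution function, that for every $\alpha\geq0$ we have $\mathbb{P}(\max_{i\in[n]}Y_i\leq\alpha)=\prod_{i=1}^n\mathbb{P}(Y_i\leq\alpha)=(1-e^{-c\alpha})^n$.

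Next I would apply R\'enyi's representation (Lemma~\ref{lemma:renyi}) with $\lambda=c$ and $m=n$ to the largest order statistic $Y_{(n)}=\max_{i\in[n]}Y_i$, which yields $Y_{(n)}=\tfrac1c\sum_{j=1}^n\tfrac{Z_j}{n-j+1}$ with $Z_j\sim\Exp(1)$ independent. Reindexing the sum via $k=n-j+1$ and using the scaling identity $\Exp(1)/(ck)\sim\Exp(ck)$ — exactly as in the proof of Corollary~\ref{cor:renyi} — this becomes $Y_{(n)}\sim\sum_{k=1}^n\Exp(ck)$, which is precisely the distribution of $X$. Combining the two steps gives $\mathbb{P}(X\leq\alpha)=\mathbb{P}(Y_{(n)}\leq\alpha)=(1-e^{-c\alpha})^n$, as desired.

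I do not expect any real obstacle here; the only thing to handle with care is the index bookkeeping in the reindexing step and the scaling identity for exponentials, both of which already appear in the proof of Corollary~\ref{cor:renyi}. As an alternative one could argue by induction on $n$ using memorylessness: the minimum of $n$ independent $\Exp(c)$ variables is $\Exp(nc)$, and after that time the remaining $n-1$ variables are again fresh independent $\Exp(c)$'s, so that $X$ decomposes as $\Exp(nc)$ plus an independent copy of $\sum_{i=1}^{n-1}\Exp(ci)$, from which the formula follows by convolving the induction hypothesis; but the R\'enyi route is shorter and relies only on tools already set up in Section~\ref{sect:prelim}.
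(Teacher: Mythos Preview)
Your proposal is correct. Note, however, that the paper does not actually prove this lemma: it is merely quoted from \cite[Lemma~3.2]{Bringmann2015} and stated without proof. Your argument via R\'enyi's representation is precisely the natural one and is fully in line with the paper's toolkit; indeed it is essentially Corollary~\ref{cor:renyi} redone with parameter $c$ in place of $1$ and with $i=1$, so there is nothing to compare against beyond observing that your route is the expected one.
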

\begin{lemma}
	\label{lemma:c(1)integral}
	Suppose that $F_\kappa\in O(\ln(n))$. Set $m:=2n-1$ to shorten notation. Then, for sufficiently large $n$ we have
	\begin{equation*}
	F_\kappa+\int_{F_\kappa}^\infty\sqrt{1-\left(1-e^{-(x-F_\kappa)}\right)^{n-1}}\,\mathrm{d}x\leq2\sqrt{em}.
	\end{equation*}
\end{lemma}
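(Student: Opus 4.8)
The plan is to bound the integral directly using the tail estimate coming from Lemma~\ref{lemma:hat-c}, and then handle the $F_\kappa$ term separately using the hypothesis $F_\kappa \in O(\ln n)$. First I would substitute $y = x - F_\kappa$, reducing the integral to $\int_0^\infty \sqrt{1-(1-e^{-y})^{n-1}}\,\mathrm{d}y$, which no longer depends on $F_\kappa$. The natural next step is to split the range of integration at some threshold $y_0$. For $y$ large (say $y \geq y_0 = \ln(n-1) + c$ for a suitable constant), the quantity $(1-e^{-y})^{n-1}$ is close to $1$, so $1-(1-e^{-y})^{n-1}$ is small and its square root decays; here one uses $1-(1-t)^{n-1} \leq (n-1)t$ together with $\sqrt{u} \leq$ something integrable, or more simply bounds $\sqrt{1-(1-e^{-y})^{n-1}} \leq \sqrt{(n-1)e^{-y}} = \sqrt{n-1}\,e^{-y/2}$, whose integral from $y_0$ to $\infty$ is $O(1)$. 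For $y$ small ($y \le y_0$), one crudely bounds the integrand by $1$, contributing at most $y_0 = \ln(n-1) + O(1)$.

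Putting the two pieces together gives $\int_0^\infty \sqrt{1-(1-e^{-y})^{n-1}}\,\mathrm{d}y \leq \ln(n-1) + O(1) \leq \ln(2n-1) + O(1) = \ln m + O(1)$. Adding back $F_\kappa = O(\ln n) = O(\ln m)$, the left-hand side of the claimed inequality is $O(\ln m)$. Since $\sqrt{em}$ grows like a power of $m$, for sufficiently large $n$ we certainly have $O(\ln m) \leq 2\sqrt{em}$, which is the desired bound. The choice of the constant $2$ and of $m = 2n-1$ is presumably dictated by how this lemma is invoked downstream (likely via Lemma~\ref{lemma:expGam} to match a Gamma$(m, \cdot)$ comparison and via Lemma~\ref{lemma:probAandB}), so I would keep the bookkeeping in terms of $m$ to match that usage, but the essential content is just ``$\ln m + O(\ln m) = O(\ln m) \ll \sqrt{m}$''.

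The main obstacle, such as it is, is purely a matter of choosing the split point $y_0$ and the tail estimate cleanly enough that the ``$O(1)$'' constants are genuinely controlled and the inequality holds for all sufficiently large $n$ rather than merely asymptotically; the inequality $1-(1-e^{-y})^{n-1}\le (n-1)e^{-y}$ (from Bernoulli's inequality $(1-t)^{n-1} \geq 1-(n-1)t$) does all the real work on the tail, and the small-$y$ part is trivial. One should double-check that the hypothesis $F_\kappa \in O(\ln n)$ is actually needed in the form stated — it is, since otherwise the additive $F_\kappa$ term could dominate — but no finer property of $\kappa$ enters. So this is essentially a calculus estimate with no serious difficulty; the only care required is in not being sloppy with the ``sufficiently large $n$'' quantifier.
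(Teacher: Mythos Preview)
Your proof is correct and uses essentially the same ingredients as the paper: the substitution $y = x - F_\kappa$, Bernoulli's inequality $(1-e^{-y})^{n-1} \geq 1 - (n-1)e^{-y}$, and the hypothesis $F_\kappa \in O(\ln n)$ to absorb the additive term. The paper's version is slightly cleaner in that it applies the bound $\sqrt{1-(1-e^{-y})^{n-1}} \leq \sqrt{n-1}\,e^{-y/2}$ over the \emph{entire} interval $[0,\infty)$ rather than splitting at a threshold, yielding $F_\kappa + 2\sqrt{n-1} \leq 3\sqrt{n-1} \leq 2\sqrt{em}$ directly; your split at $y_0$ is unnecessary (though it does give the sharper intermediate bound $O(\ln n)$ on the integral).
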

\begin{proof}
	Let $n$ be sufficiently large. We start by applying the change of variables $y=x-F_\kappa$ to the integral. This yields
	\begin{equation*}
	F_\kappa+\int_{F_\kappa}^\infty\sqrt{1-\left(1-e^{-(x-F_\kappa)}\right)^{n-1}}\,\mathrm{d}x=F_\kappa+\int_0^\infty\sqrt{1-\left(1-e^{-y}\right)^{n-1}}\,\mathrm{d}y.
	\end{equation*}
	Next, we use Bernoulli's inequality to obtain that
	\begin{align*}
	F_\kappa+\int_0^\infty\sqrt{1-\left(1-e^{-y}\right)^{n-1}}\,\mathrm{d}y&\leq F_\kappa+\int_0^\infty\sqrt{1-\left(1-(n-1)e^{-y}\right)}\,\mathrm{d}y\\
	&=F_\kappa+\sqrt{n-1}\cdot\int_0^\infty e^{-\frac12y}\,\mathrm{d}y=F_\kappa+2\sqrt{n-1}.
	\end{align*}
	Finally, since $F_\kappa\in O(\ln(n))$, we have $F_\kappa\leq\sqrt{n-1}$ for $n$ sufficiently large. Combining this with the inequality $3<2\sqrt{e}$, it follows that $F_\kappa+2\sqrt{n-1}\leq3\sqrt{n-1}\leq3\sqrt{2n-1}=3\sqrt{m}\leq2\sqrt{em}$, which finishes this proof.\blokje
\end{proof}
\begin{lemma}
	\label{lemma:sumbound}
	Set $m:=2n-1$ to shorten notation. Let $f_1,\beta(n)>0$ and $0<\zeta(n)\leq n$. Then, for sufficiently large $n$, we have
	\begin{align*}
	&\frac{4\sqrt{em}}{f_1}\cdot\sqrt{\sum_{i=1}^{\zeta(n)}\binom{n}{i}\binom{n-1}{i-1}\left(1-e^{-(\beta(n)-F_i)}\right)^{n-i}}\\
	&\qquad\qquad\qquad\leq\frac5{f_1}\cdot\left(em\right)^{\zeta(n)}\cdot e^{\beta(n)}\cdot\left(1-e^{-(\beta(n)-f_1)}\right)^{\frac12n-\frac12\zeta(n)}.
	\end{align*}
\end{lemma}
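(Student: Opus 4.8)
The plan is to square the claimed inequality, cancel the factor $1/f_1$ common to both sides, and thereby reduce the whole statement to a deterministic estimate on the sum. After these cancellations it suffices to prove
\[
S:=\sum_{i=1}^{\zeta(n)}\binom{n}{i}\binom{n-1}{i-1}\bigl(1-e^{-(\beta(n)-F_i)}\bigr)^{n-i}\ \le\ \tfrac{25}{16}\,(em)^{2\zeta(n)-1}\,e^{2\beta(n)}\,\bigl(1-e^{-(\beta(n)-f_1)}\bigr)^{n-\zeta(n)}.
\]
For the summand I would bound the binomial part with the Vandermonde-type inequality $\binom{n_1}{k_1}\binom{n_2}{k_2}\le\binom{n_1+n_2}{k_1+k_2}$ (already used in the proof of Lemma~\ref{lemma:summationterms}), giving $\binom{n}{i}\binom{n-1}{i-1}\le\binom{2n-1}{2i-1}=\binom{m}{2i-1}$. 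Writing $p:=1-e^{-(\beta(n)-f_1)}$ and using $F_i\ge f_1$ (all opening costs are positive) together with monotonicity of $1-e^{-t}$, I get $0\le 1-e^{-(\beta(n)-F_i)}\le p\le 1$; since $n-i\ge n-\zeta(n)\ge 0$ for $i\le\zeta(n)$, this gives $\bigl(1-e^{-(\beta(n)-F_i)}\bigr)^{n-i}\le p^{\,n-i}\le p^{\,n-\zeta(n)}$. Hence $S\le p^{\,n-\zeta(n)}\sum_{i=1}^{\zeta(n)}\binom{m}{2i-1}$.

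The one step I expect to require care --- and the main obstacle --- is bounding $\sum_{i=1}^{\zeta(n)}\binom{m}{2i-1}$ by $(em)^{2\zeta(n)-1}$: the naive ``(number of terms) times (largest term)'' estimate would introduce a spurious factor $\zeta(n)\le n$ that cannot be absorbed by the right-hand side. I would instead enlarge the sum to a full partial binomial sum, $\sum_{i=1}^{\zeta(n)}\binom{m}{2i-1}\le\sum_{j=0}^{2\zeta(n)-1}\binom{m}{j}$, and invoke the standard bound $\sum_{j=0}^{k}\binom{m}{j}\le(em/k)^{k}$, valid for $1\le k\le m$; here $k=2\zeta(n)-1$ satisfies $1\le k\le 2n-1=m$ because $1\le\zeta(n)\le n$, and $(em/k)^{k}\le(em)^{2\zeta(n)-1}$ since $k\ge 1$. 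This yields $S\le(em)^{2\zeta(n)-1}p^{\,n-\zeta(n)}$.

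To finish, since $\beta(n)>0$ we have $e^{2\beta(n)}\ge1$ and $\tfrac{25}{16}>1$, so the displayed bound on $S$ holds. Taking square roots gives $\sqrt{S}\le\tfrac54(em)^{\zeta(n)-1/2}e^{\beta(n)}p^{(n-\zeta(n))/2}$, and multiplying through by $4\sqrt{em}/f_1$ reproduces exactly $\tfrac{5}{f_1}(em)^{\zeta(n)}e^{\beta(n)}\bigl(1-e^{-(\beta(n)-f_1)}\bigr)^{\frac12 n-\frac12\zeta(n)}$, the right-hand side of the lemma. (The argument needs no largeness of $n$ beyond the implicit requirement $\beta(n)\ge f_1$, which makes $p$ nonnegative and under which the right-hand side is well defined; the ``$n$ sufficiently large'' hypothesis only provides slack, and in the application $\zeta(n)=\max\{i:\beta(n)\ge F_i\}$ ensures $\beta(n)\ge F_i$ for every $i\le\zeta(n)$.)
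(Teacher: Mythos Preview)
Your argument is correct and in fact slightly sharper than the paper's, via a genuinely different route. The paper first bounds $\binom{n}{i}\binom{n-1}{i-1}\le(em)^{2i-1}$ and then applies the Cauchy--Schwarz inequality for sums to separate the factors $(em)^{2i-1}$ and $(1-e^{-(\beta(n)-F_i)})^{n-i}$, followed by explicit evaluation of the two resulting geometric series; the $e^{\beta(n)}$ factor on the right-hand side arises from that second geometric sum. You instead bound the probability factor uniformly by $p^{\,n-\zeta(n)}$ and handle the binomial part in one shot via the standard partial-sum estimate $\sum_{j=0}^{k}\binom{m}{j}\le(em/k)^{k}$. This is more elementary---no Cauchy--Schwarz, no geometric series---and it shows that the factors $e^{\beta(n)}$ and $5/4$ are pure slack. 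The only mild caveat is the implicit hypothesis $\beta(n)\ge F_i$ for $i\le\zeta(n)$ (so that the bases are nonnegative); you flag this, and the paper's proof relies on it just as tacitly when it replaces $F_i$ by $f_1$.
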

\begin{proof}
	We start by bounding the product of the binomials using the inequalities $\binom{n_1}{k_1}\binom{n_2}{k_2}\leq\binom{n_1+n_2}{k_1+k_2}$ and $\binom{n}{k}\leq(en/k)^k\leq(en)^k$ for $1\leq k\leq n$. This results in
	\begin{equation*}
	\sum_{i=1}^{\zeta(n)}\binom{n}{i}\binom{n-1}{i-1}\left(1-e^{-(\beta(n)-F_i)}\right)^{n-i}\leq\sum_{i=1}^{\zeta(n)}(em)^{2i-1}\left(1-e^{-(\beta(n)-F_i)}\right)^{n-i}.
	\end{equation*}
	Next we use the Cauchy-Schwarz inequality for summations to obtain that
	\begin{align*}
	\sum_{i=1}^{\zeta(n)}(em)^{2i-1}\left(1-e^{-(\beta(n)-F_i)}\right)^{n-i}&\leq\sqrt{\sum_{i=1}^{\zeta(n)}(em)^{4i-2}\cdot\sum_{i=1}^{\zeta(n)}\left(1-e^{-(\beta(n)-F_i)}\right)^{2n-2i}}\\
	&\leq\sqrt{\sum_{i=1}^{\zeta(n)}(em)^{4i-2}\cdot\sum_{i=1}^{\zeta(n)}\left(1-e^{-(\beta(n)-f_1)}\right)^{2n-2i}},
	\end{align*}
	where we also applied the inequality $F_i\geq f_1$. Now we can compute and bound both summations, which yields
	\begin{equation*}
	\sum_{i=1}^{\zeta(n)}(em)^{4i-2}=(em)^2\cdot\frac{(em)^{4\zeta(n)}-1}{(em)^4-1}\leq(em)^2\cdot\frac{(em)^{4\zeta(n)}}{(em)^4-1},
	\end{equation*}
	and
	\begin{align*}
	\sum_{i=1}^{\zeta(n)}\left(1-e^{-(\beta(n)-f_1)}\right)^{2n-2i}&=\left(1-e^{-(\beta(n)-f_1)}\right)^{2n-2\zeta(n)}\cdot\frac{1-\left(1-e^{-(\beta(n)-f_1)}\right)^{2\zeta(n)}}{1-\left(1-e^{-(\beta(n)-f_1)}\right)^2}\\
	&\leq\frac{\left(1-e^{-(\beta(n)-f_1)}\right)^{2n-2\zeta(n)}}{1-\left(1-e^{-(\beta(n)-f_1)}\right)^2}\leq\frac{\left(1-e^{-(\beta(n)-f_1)}\right)^{2n-2\zeta(n)}}{e^{-(\beta(n)-f_1)}}.
	\end{align*}
	Combining the results above, and multiplying them with $4\sqrt{em}/f_1$, we get
	\begin{align*}
	&\frac{4\sqrt{em}}{f_1}\cdot\sqrt{\sum_{i=1}^{\zeta(n)}\binom{n}{i}\binom{n-1}{i-1}\left(1-e^{-(\beta(n)-F_i)}\right)^{n-i}}\\
	&\qquad\qquad\qquad\leq\frac{4\sqrt{em}}{f_1}\cdot\sqrt[4]{(em)^2\cdot\frac{(em)^{4\zeta(n)}}{(em)^4-1}\cdot\frac{\left(1-e^{-(\beta(n)-f_1)}\right)^{2n-2\zeta(n)}}{e^{-(\beta(n)-f_1)}}}.
	\end{align*}
	Next, upon rewriting and applying the inequalities $x/(x-1)\leq2$ (for $x\geq2$) and $4\cdot\sqrt[4]{2}<5$ we obtain
	\begin{align*}
	&\frac{4\sqrt{em}}{f_1}\cdot\sqrt[4]{(em)^2\cdot\frac{(em)^{4\zeta(n)}}{(em)^4-1}\cdot\frac{\left(1-e^{-(\beta(n)-f_1)}\right)^{2n-2\zeta(n)}}{e^{-(\beta(n)-f_1)}}}\\
	&\qquad\qquad\qquad=\frac{4}{f_1}\cdot\sqrt[4]{\frac{(em)^4}{(em)^4-1}\cdot(em)^{4\zeta(n)}\cdot\frac{\left(1-e^{-(\beta(n)-f_1)}\right)^{2n-2\zeta(n)}}{e^{-(\beta(n)-f_1)}}}\\
	&\qquad\qquad\qquad\leq\frac{5}{f_1}\cdot\sqrt[4]{(em)^{4\zeta(n)}\cdot\frac{\left(1-e^{-(\beta(n)-f_1)}\right)^{2n-2\zeta(n)}}{e^{-(\beta(n)-f_1)}}}.
	\end{align*}
	Applying the inequalities $e^{-f_1}\leq1$ and $e^{\beta(n)/4}\leq e^{\beta(n)}$ yields the desired result.\blokje
\end{proof}
\begin{lemma}
	\label{lemma:orderbound}
	Set $m:=2n-1$ to shorten notation. Let $\beta(n)=\ln(n)(\phi(n)+1)(1+1/\ln(\ln(n)))^{-1}$ where $\phi(n)=f_1/\ln(n)$, and let $\zeta(n)=\max\{i:\beta(n)\geq F_i\}$. Suppose that $f_1>1/ln^2(n)$ as $n\to\infty$. For sufficiently large $n$, we have
	\begin{align*}
	\frac5{f_1}\cdot\left(em\right)^{\zeta(n)}\cdot e^{\beta(n)}\cdot\left(1-e^{-(\beta(n)-f_1)}\right)^{\frac12n-\frac12\zeta(n)}\leq O\left(\frac1n\right).
	\end{align*}
\end{lemma}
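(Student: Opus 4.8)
This lemma is the last link in the chain that controls the ``small-$\mathsf{OPT}$'' error term in Case~2: the preceding lemmas reduce that term to the left-hand side above, so it remains to show it is $O(1/n)$. The plan is to take logarithms and prove, for sufficiently large $n$,
\begin{equation*}
\ln 5 - \ln f_1 + \zeta(n)\ln(em) + \beta(n) + \left(\tfrac12 n - \tfrac12\zeta(n)\right)\ln\!\left(1 - e^{-(\beta(n)-f_1)}\right) \le -\ln n ,
\end{equation*}
which yields that the product is at most $1/n$. The guiding intuition is that $\ln(5/f_1)$, $\zeta(n)\ln(em)$ and $\beta(n)$ are all merely polylogarithmic in $n$, whereas the last term (after using $\ln(1-x)\le -x$) is at most $-(\tfrac12 n-\tfrac12\zeta(n))\,e^{-(\beta(n)-f_1)}$, and the crucial fact is that $n\,e^{-(\beta(n)-f_1)}$ grows \emph{faster than every fixed power of $\ln(n)$} --- this is exactly what the $(1+1/\ln\ln(n))^{-1}$ slack in the definition of $\beta(n)$ buys us.

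First I would record the polylogarithmic bounds, using the hypotheses $f_1>1/\ln^2(n)$ and (as in Case~2) $f_1\in O(\ln(n))$. From $f_1>1/\ln^2(n)$ we get $-\ln f_1\le 2\ln\ln(n)$; from $f_1\in O(\ln(n))$ we get $\beta(n)=(\ln(n)+f_1)(1+1/\ln\ln(n))^{-1}\le\ln(n)+f_1=O(\ln(n))$, and $\ln(em)=1+\ln(2n-1)=O(\ln(n))$. For $\zeta(n)$ I would repeat the argument from the proof of Lemma~\ref{lemma:summationterms}: since $f_{\kappa+c}\ge f_{\kappa+1}\ge 1/\kappa$ we have $F_{\kappa+c}\ge c/\kappa$, hence $\zeta(n)\le\kappa(1+\beta(n))+1$; and if $\kappa\ge 2$ then $f_1\le f_\kappa<1/(\kappa-1)$ forces $\kappa\le 1+1/f_1\le 1+\ln^2(n)$ (the case $\kappa=1$ is trivial). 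Therefore $\zeta(n)=O(\ln^3(n))$ and $\zeta(n)\ln(em)=O(\ln^4(n))$, so the three ``positive'' terms together contribute only $O(\ln^4(n))$.

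Next I would estimate the surviving negative term from below. Since $\zeta(n)=O(\ln^3(n))$, we have $\tfrac12 n-\tfrac12\zeta(n)\ge\tfrac14 n$ for large $n$. For the exponent,
\begin{equation*}
\beta(n)-f_1=\frac{\ln(n)-f_1/\ln\ln(n)}{1+1/\ln\ln(n)}\le\frac{\ln(n)}{1+1/\ln\ln(n)}=\ln(n)-\frac{\ln(n)}{\ln\ln(n)+1}\le\ln(n)-\frac{\ln(n)}{2\ln\ln(n)}
\end{equation*}
for $n$ large; note in particular that $\beta(n)>f_1>0$ here (which holds since $f_1\in O(\ln(n))$ implies $f_1<\ln(n)\ln\ln(n)$ eventually), so $0<e^{-(\beta(n)-f_1)}<1$ and the step $\ln(1-x)\le -x$ is legitimate. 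Consequently $e^{-(\beta(n)-f_1)}\ge n^{-1}n^{1/(2\ln\ln(n))}$, and therefore
\begin{equation*}
\left(\tfrac12 n-\tfrac12\zeta(n)\right)e^{-(\beta(n)-f_1)}\ge\tfrac14\,n^{1/(2\ln\ln(n))}=\tfrac14\,e^{\ln(n)/(2\ln\ln(n))} .
\end{equation*}

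Finally, I would invoke the key growth fact: $e^{\ln(n)/(2\ln\ln(n))}$ outgrows every fixed power of $\ln(n)$, since $\ln(n)/(2\ln\ln(n))\gg k\ln\ln(n)$ for every constant $k$ (equivalently $\ln(n)\gg(\ln\ln(n))^2$). Hence for all sufficiently large $n$ the term $\tfrac14 e^{\ln(n)/(2\ln\ln(n))}$ exceeds the $O(\ln^4(n))$ coming from the positive terms plus the extra $\ln(n)$, which is precisely the displayed log-inequality, and the lemma follows. The only points needing genuine care are verifying $\beta(n)>f_1$ (so that the base of the power lies in $(0,1)$ and the $\ln(1-x)\le -x$ estimate applies) and that $\kappa$, and hence $\zeta(n)$, is really polylogarithmic --- both of which hinge on the two hypotheses on $f_1$; everything else is routine estimation.
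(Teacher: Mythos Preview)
Your proof is correct and follows essentially the same approach as the paper: take logarithms, show the positive contributions are $O(\ln^4(n))$, use $\ln(1-x)\le -x$ on the negative term, and exploit the $(1+1/\ln\ln(n))^{-1}$ slack to produce a factor $n^{\Theta(1/\ln\ln(n))}$ that beats any polylogarithm. The only noteworthy difference is your bound on $\zeta(n)$: you route through $\kappa$ via $\zeta(n)\le\kappa(1+\beta(n))$ together with $\kappa\le1+1/f_1$, whereas the paper uses the more direct $\zeta(n)\le\beta(n)/f_1$ from $F_{\zeta(n)}\ge\zeta(n)f_1$; both yield $\zeta(n)=O(\ln^3(n))$, so this is purely cosmetic.
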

\begin{proof}
	First observe that, by definition of $\zeta(n)$, we have $\zeta(n)\leq\beta(n)/f_1$, since $F_k\geq kf_1$ for all $k\in[n]$. We will now show that the following inequality holds for sufficiently large $n$:
	\begin{equation*}
	\zeta(n)\ln(em)+\beta(n)\leq-2\ln(n)+\left(\tfrac12n-\tfrac12\zeta(n)\right)\cdot e^{f_1}\cdot e^{-\beta(n)}.
	\end{equation*}
	When analyzing the left-hand side, we can see that $\zeta(n)\leq\beta(n)/f_1=O(\ln(n))\cdot O(\ln^2(n))=O(\ln^3(n))$ and $\beta(n)=O(\ln(n))$. So, the left-hand side is bounded by $O(\ln^4(n))$. When analyzing the right-hand side, we can see that $\frac12n-\frac12\zeta(n)=\Omega(n)$, since $\zeta(n)\leq O(\ln^3(n))$. Moreover, we have $e^{f_1}=n^{\phi(n)}$ and $e^{-\beta(n)}=n^{(\phi(n)+1)(-1+1/(1+\ln(\ln(n))))}$, where we used the equality $1/(1+1/x)=x/(1+x)=1-1/(1+x)$ for $x=\ln(\ln(n))$. From this we can deduce that $e^{f_1}\cdot e^{-\beta(n)}\geq\Omega(n^{-1+1/(1+\ln(\ln(n)))})$. So, the right-hand side is bounded by $\Omega(n^{1/(1+\ln(\ln(n)))})$.\\
	Since $O(\ln^4(n))<\Omega(n^{1/(1+\ln(\ln(n)))})$, the stated inequality follows. Now, rewriting the right-hand side and then applying the well-known inequality $1-x\leq-\ln(x)$ for $x\geq0$ yields
	\begin{equation*}
	\zeta(n)\ln(em)+\beta(n)\leq-2\ln(n)+\left(\tfrac12\zeta(n)-\tfrac12n\right)\cdot\ln\left(1-e^{-(\beta(n)-f_1)}\right).
	\end{equation*}
	From this inequality, we immediately get
	\begin{equation*}
	(em)^{\zeta(n)}\cdot e^{\beta(n)}\leq\frac1{n^2}\cdot\left(1-e^{-(\beta(n)-f_1)}\right)^{\frac12\zeta(n)-\frac12n}.
	\end{equation*}
	Using this inequality, in combination with $1/{f_1}\leq\ln^2(n)$ and $\ln^2(n)\cdot O(1/n^2)\leq O(1/n)$, the desired result follows.\blokje
\end{proof}
\begin{proof}[Proof of Theorem \ref{thm:kappasmall} (Case 2)]
	Recall that in Case 2 we have $f_1\in O(\ln(n))$ and $f_1>1/\ln^2(n)$ as $n\to\infty$.
	Let $n$ be sufficiently large. By definition of $\kappa$, it follows that $f_\kappa<1/(\kappa-1)$ and thus $F_\kappa<\kappa/(\kappa-1)\leq2$ whenever $\kappa\geq2$. If $\kappa=1$, then we have $F_\kappa=f_1=O(\ln(n))$. Now, by our observations in Section~\ref{sect:heur} we know that $\mathbb{E}[\mathsf{ALG}]=F_\kappa+\ln(n/\kappa)+\Theta(1)\leq F_\kappa+\ln(n)+\Theta(1)$.
	Now, set $\beta(n):=\ln(n)(\phi(n)+1)(1+1/\ln(\ln(n)))^{-1}$, where $\phi(n):=f_1/\ln(n)$. Observe that $\beta(n)>F_\kappa$ for sufficiently large $n$.\\
	Conditioning on the events $\mathsf{OPT}\geq\beta(n)$ and $\mathsf{OPT}<\beta(n)$ yields
	\begin{equation*}
	\mathbb{E}\left[\frac{\mathsf{ALG}}{\mathsf{OPT}}\right]\leq\mathbb{E}\left[\frac{\mathsf{ALG}}{\beta(n)}\right]+\mathbb{P}\left(\mathsf{OPT}<\beta(n)\right)\cdot\mathbb{E}\left[\frac{\mathsf{ALG}}{\mathsf{OPT}}\;\middle|\;\mathsf{OPT}<\beta(n)\right].
	\end{equation*}
	We start by bounding the second part. Since $\mathsf{OPT}\geq f_1$ by definition, we may bound and subsequently rewrite the second part as follows:
	\begin{align*}
	\mathbb{P}\left(\mathsf{OPT}<\beta(n)\right)\cdot\mathbb{E}\left[\frac{\mathsf{ALG}}{\mathsf{OPT}}\;\middle|\;\mathsf{OPT}<\beta(n)\right]&\leq\frac1{f_1}\cdot\mathbb{P}\left(\mathsf{OPT}<\beta(n)\right)\cdot\mathbb{E}\left[\mathsf{ALG}\;\middle|\;\mathsf{OPT}<\beta(n)\right]\\
	&=\frac1{f_1}\cdot\int_0^\infty\mathbb{P}\left(\mathsf{ALG}>x\text{ and }\mathsf{OPT}<\beta(n)\right)\,\mathrm{d}x.
	\end{align*}
	Since the events $\mathsf{ALG}>x$ and $\mathsf{OPT}<\beta(n)$ are dependent, we use Lemma \ref{lemma:probAandB} to bound the probability inside the integral. This results in the following:
	\begin{equation*}
	\mathbb{P}\left(\mathsf{OPT}<\beta(n)\right)\cdot\mathbb{E}\left[\frac{\mathsf{ALG}}{\mathsf{OPT}}\;\middle|\;\mathsf{OPT}<\beta(n)\right]\leq\frac2{f_1}\cdot\sqrt{\mathbb{P}\left(\mathsf{OPT}<\beta(n)\right)}\cdot\int_0^\infty\sqrt{\mathbb{P}\left(\mathsf{ALG}>x\right)}\,\mathrm{d}x.
	\end{equation*}
	Now recall from Section~\ref{sect:heur} that we know the probability distribution of $\mathsf{ALG}$, namely $\mathsf{ALG}\sim F_\kappa+\sum_{i=\kappa}^{n-1}\Exp(i)$. Now define $\mathsf{UB}\sim F_\kappa+\sum_{i=1}^{n-1}\Exp(i)$, and observe that $\mathsf{ALG}\precsim\mathsf{UB}$. So, it follows that $\mathbb{P}(\mathsf{ALG}>x)\leq\mathbb{P}(\mathsf{UB}>x)$ for all $x\in\mathbb{R}$. Moreover, using Lemma \ref{lemma:hat-c} we can see that $\mathbb{P}(\mathsf{UB}>x)=1$ for $x<F_\kappa$ and $\mathbb{P}(\mathsf{UB}>x)=1-(1-e^{-(x-F_\kappa)})^{n-1}$ for $x\geq F_\kappa$. Using this, it follows that
	\begin{align*}
	\int_0^\infty\sqrt{\mathbb{P}\left(\mathsf{ALG}>x\right)}\,\mathrm{d}x&\leq\int_0^\infty\sqrt{\mathbb{P}\left(\mathsf{UB}>x\right)}\,\mathrm{d}x\\
	&\leq F_\kappa+\int_{F_\kappa}^\infty\sqrt{1-\left(1-e^{-(x-F_\kappa)}\right)^{n-1}}\,\mathrm{d}x.
	\end{align*}
	We will use Lemma~\ref{lemma:c(1)integral} to bound this last expression. Using Lemma~\ref{lemma:OPTsmall}, we can bound the probability involving $\mathsf{OPT}$ as follows, where $\zeta(n):=\max\{i:\beta(n)\geq F_i\}$:
	\begin{equation*}
	\mathbb{P}\left(\mathsf{OPT}<\beta(n)\right)\leq\sum_{i=1}^{\zeta(n)}\binom{n}{i}\binom{n-1}{i-1}\left(1-e^{-(\beta(n)-F_i)}\right)^{n-i}.
	\end{equation*}
	Combining all this information with Lemma~\ref{lemma:sumbound}, we obtain that
	\begin{equation*}
	\mathbb{P}\left(\mathsf{OPT}<\beta(n)\right)\cdot\mathbb{E}\left[\frac{c(1)}{\mathsf{OPT}}\;\middle|\;\mathsf{OPT}<\beta(n)\right]\leq\frac5{f_1}\cdot\left(em\right)^{\zeta(n)}\cdot e^{\beta(n)}\cdot\left(1-e^{-(\beta(n)-f_1)}\right)^{\frac12n-\frac12\zeta(n)}.
	\end{equation*}
	Lemma~\ref{lemma:orderbound} shows that we can bound this expression by $O(1/n)$. Going back to our initial expected approximation ratio, we can now derive that
	\begin{align*}
	\mathbb{E}\left[\frac{\mathsf{ALG}}{\mathsf{OPT}}\right]&\leq\frac{\mathbb{E}[\mathsf{ALG}]}{\beta(n)}+O\left(\frac1n\right)\\
	&\leq\frac{F_\kappa+\ln(n)+\Theta(1)}{\beta(n)}+O\left(\frac1n\right)=\frac{F_\kappa+\ln(n)}{\beta(n)}+O\left(\frac1{\ln(n)}\right).
	\end{align*}
	Now, if $f_1\in o(\ln(n))$ we have $\phi(n)=o(1)$ and $F_\kappa\in o(\ln(n))$, so it follows that
	\begin{align*}
	\mathbb{E}\left[\frac{\mathsf{ALG}}{\mathsf{OPT}}\right]&\leq\frac{F_\kappa+\ln(n)}{\beta(n)}+O\left(\frac1{\ln(n)}\right)\\
	&=\frac{(o(1)+1)}{(\phi(n)+1)}\left(1+\frac1{\ln(\ln(n))}\right)+O\left(\frac1{\ln(n)}\right)=1+o(1),
	\end{align*}
	and if $f_1\in\Theta(\ln(n))$ we have $F_\kappa=f_1=\phi(n)\ln(n)$, so it follows that
	\begin{align*}
	\mathbb{E}\left[\frac{\mathsf{ALG}}{\mathsf{OPT}}\right]&\leq\frac{F_\kappa+\ln(n)}{\beta(n)}+o(1)\\
	&=\frac{(\phi(n)+1)\ln(n)}{\beta(n)}+o(1)=1+\frac1{\ln(\ln(n))}+o(1)=1+o(1),
	\end{align*}
	which finishes the proof of this case.\blokje
\end{proof}
\begin{theorem}
	\label{thm:kappalarge}
	Define $\kappa:=\kappa(n;f_1,\ldots,f_n)=\max\{i\in[n]:f_i<1/(i-1)\}$ and assume that $\kappa\in\Theta(n)$. Let $\mathsf{ALG}$ denote the total cost of the solution which opens, independently of the metric space, the $\kappa$ cheapest facilities (breaking ties arbitrarily), i.e., the facilities with opening costs $f_1,\ldots,f_\kappa$. Then we can bound the expected approximation ratio by
	\begin{align*}
	&\mathbb{E}\left[\frac{\mathsf{ALG}}{\mathsf{OPT}}\right]\leq\sqrt{\max\left\{\frac1{F_n^2},\max_{k\in[n-1]}O\left(\frac{n^4F_{n-k}+kn^2}{n^4F_{n-k}^3+k^4F_{n-k}}\right)\right\}}\\
	&\qquad\qquad\qquad+\sqrt[4]{O\left(\frac{n^6F_{n-1}^2+n^2}{n^8F_{n-1}^7+F_{n-1}^3}+\frac{n^{10}F_{n-2}^3+n^4}{n^{14}F_{n-2}^9+F_{n-2}^2}+\sum_{k=3}^{n-1}\frac{k^3n^{12}F_{n-k}^3+k^9n^6}{n^{16}F_{n-k}^9+k^{16}F_{n-k}}\right)}.
	\end{align*}
	Moreover, if $\kappa=n$, then the expected approximation ratio can be bounded by
	\begin{align*}
	&\mathbb{E}\left[\frac{\mathsf{ALG}}{\mathsf{OPT}}\right]\leq F_n\cdot\sqrt{\max\left\{\frac1{F_n^2},\max_{k\in[n-1]}O\left(\frac{n^4F_{n-k}+kn^2}{n^4F_{n-k}^3+k^4F_{n-k}}\right)\right\}}\\
	&\qquad\qquad\qquad+F_n\cdot\sqrt[4]{O\left(\frac{n^6F_{n-1}^2+n^2}{n^8F_{n-1}^7+F_{n-1}^3}+\frac{n^{10}F_{n-2}^3+n^4}{n^{14}F_{n-2}^9+F_{n-2}^2}+\sum_{k=3}^{n-1}\frac{k^3n^{12}F_{n-k}^3+k^9n^6}{n^{16}F_{n-k}^9+k^{16}F_{n-k}}\right)}.
	\end{align*}
\end{theorem}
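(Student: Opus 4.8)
The plan is to combine one application of the Cauchy--Schwarz inequality with Aven's maximum inequality (Lemma~\ref{lemma:expecmax}). Writing $\mathsf{OPT}=\min_{k\in[n]}\mathsf{OPT}_k$, so that $\mathsf{OPT}^{-2}=\max_{k\in[n]}\mathsf{OPT}_k^{-2}$, the skeleton is
\[
\mathbb{E}\!\left[\frac{\mathsf{ALG}}{\mathsf{OPT}}\right]\le\sqrt{\mathbb{E}[\mathsf{ALG}^2]}\cdot\sqrt{\mathbb{E}\!\left[\mathsf{OPT}^{-2}\right]}\le\sqrt{\mathbb{E}[\mathsf{ALG}^2]}\left(\sqrt{\max_{k\in[n]}\mathbb{E}\!\left[\mathsf{OPT}_k^{-2}\right]}+\sqrt[4]{\textstyle\sum_{k\in[n]}\mathbb{E}\!\left[\mathsf{OPT}_k^{-4}\right]}\right),
\]
where the second inequality uses Lemma~\ref{lemma:expecmax} applied to the $n$ variables $\mathsf{OPT}_k^{-2}$, the bounds $\Var(\mathsf{OPT}_k^{-2})\le\mathbb{E}[\mathsf{OPT}_k^{-4}]$ and $\tfrac{n-1}{n}\le1$, and $\sqrt{a+b}\le\sqrt a+\sqrt b$. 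For the prefactor $\mathbb{E}[\mathsf{ALG}^2]$: if $\kappa=n$ then $\mathsf{ALG}=F_n$ deterministically, so $\mathbb{E}[\mathsf{ALG}^2]=F_n^2$ — this produces the $F_n$ factors in the second displayed bound of the theorem; if $\kappa<n$ then $\kappa\in\Theta(n)$ forces $\kappa\ge2$ for large $n$ and hence $F_\kappa<\kappa/(\kappa-1)\le2$, while $\mathsf{ALG}-F_\kappa\sim\sum_{i=\kappa}^{n-1}\Exp(i)$ (Section~\ref{sect:heur}) has mean $\ln(n/\kappa)+\Theta(1)=O(1)$ and variance $\sum_{i=\kappa}^{n-1}i^{-2}=O(1/n)$, so $\mathbb{E}[\mathsf{ALG}^2]=O(1)$ and disappears into the $O(\cdot)$'s of the first bound.

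The remaining task is to bound the per-$k$ moments. Reindex via $k\mapsto n-k$. The term $k=n$ (i.e.\ $\mathsf{OPT}_n=F_n$, connection cost zero) is exact: $\mathbb{E}[\mathsf{OPT}_n^{-2}]=1/F_n^2$ and $\Var(\mathsf{OPT}_n^{-2})=0$, which is precisely the $1/F_n^2$ entry in the maximum. For $1\le k\le n-1$, Lemma~\ref{lemma:LBOPTnk} gives $\mathsf{OPT}_{n-k}\succsim F_{n-k}+Z_k$ with $Z_k\sim\Gamma(k,e\binom{n}{2}/k)$; since $x\mapsto x^{-j}$ is decreasing this stochastic dominance reverses, so $\mathbb{E}[\mathsf{OPT}_{n-k}^{-j}]\le\mathbb{E}[(F_{n-k}+Z_k)^{-j}]$ for $j\in\{2,4\}$. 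To evaluate the latter, substitute $U:=(e\binom{n}{2}/k)Z_k\sim\Gamma(k,1)$ and set $b:=(e\binom{n}{2}/k)F_{n-k}$, so that $\mathbb{E}[(F_{n-k}+Z_k)^{-j}]=(e\binom{n}{2}/k)^{j}\,\mathbb{E}[(b+U)^{-j}]$, and then use the representation $\mathbb{E}[(b+U)^{-j}]=\frac1{(j-1)!}\int_0^\infty t^{j-1}e^{-bt}(1+t)^{-k}\,\mathrm{d}t$, which follows from $\mathbb{E}[e^{-tU}]=(1+t)^{-k}$. Splitting this integral at $t=1$ and estimating the two pieces — crudely bounding $(1+t)^{-k}$ by $1$ on $[0,1]$ and by $t^{-k}$ on $[1,\infty)$, and keeping or discarding the $e^{-bt}$ factor according to the size of $b$ — yields the claimed rational majorants. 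For $j=4$ the tail $\int_1^\infty$ behaves qualitatively differently for small $k$ (of order $b^{-3}$ and $b^{-2}$ as $b\to0$ when $k=1,2$, but only $O(b^{-1})$ or bounded for $k\ge3$), which is exactly why the $k=1$ and $k=2$ summands are stated separately while a single deliberately loose majorant covers all $k\ge3$.

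The main obstacle is this last step: one needs the rational bounds on $\mathbb{E}[(b+U)^{-j}]$ to hold \emph{uniformly} across the entire range $1\le k\le n-1$ and all admissible values of $F_{n-k}$ (which may range from $\Theta(f_1)$, possibly tiny, up to $\Theta(1)$), so the case analysis — small versus large $b$, and $k\in\{1,2\}$ versus $k\ge3$ for the fourth moment — must be carried out carefully, together with the verification that the stated rational expressions genuinely dominate the logarithmic and lower-order terms that surface in the intermediate integral estimates. Everything preceding it (the Cauchy--Schwarz step, Aven's inequality, the $\mathsf{ALG}$ second moment, and the reduction via Lemma~\ref{lemma:LBOPTnk}) is essentially bookkeeping.
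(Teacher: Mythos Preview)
Your high-level skeleton matches the paper's exactly: Cauchy--Schwarz, then Aven's inequality (Lemma~\ref{lemma:expecmax}) applied to the variables $X_k:=(F_{n-k}+Z_k)^{-2}$ obtained via Lemma~\ref{lemma:LBOPTnk}, together with the $\mathsf{ALG}^2$ computation. The difference, and the gap, is in how you handle the variance terms and the resulting integrals.

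First, you replace $\Var(X_k)$ by $\mathbb{E}[X_k^2]=\mathbb{E}[\mathsf{OPT}_{n-k}^{-4}]$. The paper does \emph{not} do this: Lemma~\ref{lemma:VarXk} carefully produces an \emph{upper} bound on $\mathbb{E}[X_k^2]$ and a \emph{lower} bound on $\mathbb{E}[X_k]$ (using Pad\'e approximants on both sides of the exponential integral), and the cancellation in $\mathbb{E}[X_k^2]-(\mathbb{E}[X_k])^2$ is what drives the denominator from $n^{8}F_{n-k}^{5}+k^{8}F_{n-k}$ (the order of $\mathbb{E}[X_k^2]$ alone) up to $n^{16}F_{n-k}^{9}+k^{16}F_{n-k}$. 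Concretely, at $k\asymp n$ and $F_{n-k}\asymp1$ the paper's variance is $\Theta(1/n)$ while $\mathbb{E}[X_k^2]$ is $\Theta(1)$; summing over $k$ and taking fourth roots, your route gives $n^{1/4}$ where the theorem claims $O(1)$. So with the crude $\Var\le\mathbb{E}[X^2]$ step you cannot reach the stated expressions.

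Second, even for the first moment $\mathbb{E}[X_k]$ your ``split at $t=1$ and bound $(1+t)^{-k}$ by $1$ on $[0,1]$'' loses the $k$-dependence on the main interval: the target $\mathbb{E}[(b+U)^{-2}]\lesssim(b+1)/\bigl(b(b^2+k^2)\bigr)$ requires seeing the factor $(1+t)^{-k}\le e^{-ckt}$ for small $t$, else the $[0,1]$ piece is only $O(1/b^2)$ and dominates the claimed $O(1/(b^2+k^2))$ when $b\lesssim k$. The paper sidesteps this entirely by writing $\mathbb{E}[X_k]$ and $\mathbb{E}[X_k^2]$ directly in terms of the generalised exponential integral $\alpha^{k-1}e^{\alpha}\int_\alpha^\infty e^{-t}t^{1-k}\,\mathrm{d}t$ and then invoking the second through fourth Pad\'e approximants from \cite{Luke1969}, which give two-sided rational bounds of exactly the right shape. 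Your Laplace-transform representation is correct and elegant, but the crude majorants you describe are not sharp enough; you would need either the Pad\'e bounds or an equivalently careful two-sided estimate to recover the specific rational forms in the theorem.
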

The proof of this theorem requires some tedious computations which we used to bound exponential integrals by Pad\'e approximants \cite{Luke1969}. The results of these computations are stated in the following lemmas.
\begin{lemma}
	\label{lemma:EXk}
	Let $X_k=1/(F_{n-k}+Z_k)^2$ where $Z_k\sim\Gamma(k,e\binom{n}{2}/k)$. Then, for any $k\in\{1,\ldots,n-1\}$ it follows that
	\begin{equation*}
	\mathbb{E}\left[X_k\right]\leq O\left(\frac{n^4F_{n-k}+kn^2}{n^4F_{n-k}^3+k^4F_{n-k}}\right).
	\end{equation*}
\end{lemma}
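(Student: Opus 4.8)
The plan is to obtain a closed form for $\mathbb{E}[X_k]$ in terms of generalised exponential integrals and then bound those integrals by Padé approximants \cite{Luke1969}, exactly as in the other lemmas feeding Theorem~\ref{thm:kappalarge}. Write $a:=F_{n-k}$ and $\lambda:=e\binom{n}{2}/k$, so that $X_k=(a+Z_k)^{-2}$ with $Z_k\sim\Gamma(k,\lambda)$; note $a\geq f_1>0$ strictly, which makes all the integrals below convergent. Starting from the elementary identity $y^{-2}=\int_0^\infty t\,e^{-ty}\,\mathrm{d}t$ (valid for $y>0$), Tonelli's theorem, and the Laplace transform $\mathbb{E}[e^{-tZ_k}]=(\lambda/(\lambda+t))^k$ of the Gamma distribution, one gets
\[
\mathbb{E}[X_k]=\int_0^\infty t\,e^{-ta}\left(\frac{\lambda}{\lambda+t}\right)^k\mathrm{d}t.
\]
Splitting $t(\lambda+t)^{-k}=(\lambda+t)^{-(k-1)}-\lambda(\lambda+t)^{-k}$ and substituting $s=a(\lambda+t)$ in each resulting integral expresses it through the generalised exponential integral $E_m(x)=\int_1^\infty u^{-m}e^{-xu}\,\mathrm{d}u$; after simplification this collapses to the clean identity $\mathbb{E}[X_k]=\lambda^2e^{a\lambda}\bigl(E_{k-1}(a\lambda)-E_k(a\lambda)\bigr)$.

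Next I would expose the cancellation that makes $\mathbb{E}[X_k]$ genuinely small. Using the recurrence $(k-1)E_k(x)=e^{-x}-xE_{k-1}(x)$ (and, when $k=1$, the identity $E_0(x)=e^{-x}/x$), the identity above becomes
\[
\mathbb{E}[X_k]=\frac{\lambda^2}{k-1}\Bigl((k-1+a\lambda)\,e^{a\lambda}E_{k-1}(a\lambda)-1\Bigr)\qquad(k\geq2),
\]
and $\mathbb{E}[X_1]=\lambda^2\bigl(1/(a\lambda)-e^{a\lambda}E_1(a\lambda)\bigr)$. The subtracted term $1$ is essential: it cancels the leading part of $(k-1+x)\,e^xE_{k-1}(x)$. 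I would then bound $e^xE_{k-1}(x)$ from above (resp.\ $e^xE_1(x)$ from below for $k=1$) by a Padé approximant of $e^xE_m$ — a rational function of $x$ whose degree is chosen precisely so that, after multiplication by $(k-1+x)$ and subtraction of $1$, the numerator collapses by algebraic cancellation to a constant. This should produce a single bound, valid uniformly over all $k\in\{1,\dots,n-1\}$, of the form $\mathbb{E}[X_k]\leq O\bigl(\lambda^2/((a\lambda)^2+k\,a\lambda+k^2)\bigr)$.

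Finally I would substitute $\lambda=e\binom{n}{2}/k=\Theta(n^2/k)$ and $a=F_{n-k}$: the bound becomes $O\bigl(n^4/(n^4F_{n-k}^2+k^2n^2F_{n-k}+k^4)\bigr)$, and to conclude it suffices to clear denominators, which reduces the claim to the polynomial inequality $n^8F_{n-k}^3+n^4k^4F_{n-k}\leq(n^4F_{n-k}+kn^2)(n^4F_{n-k}^2+k^2n^2F_{n-k}+k^4)$, and this holds term by term.

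The hard part is the Padé step. The rational bounds for $e^xE_m$ in \cite{Luke1969} are phrased asymptotically, so one must verify they hold for all $x>0$, and — more delicately — one must pick exactly the right order of approximant: too low a degree leaves a denominator only linear in $a\lambda$, giving a useless $O(n^2/F_{n-k})$ estimate for large $F_{n-k}$ instead of the $O(1/F_{n-k}^2)$ behaviour that the claimed bound encodes, whereas too high a degree destroys the exact cancellation with the subtracted constant. The cases $k\in\{1,2\}$ require a little separate care, since the recurrence degenerates there ($E_0$ or $E_1$ appears in place of $E_{k-1}$), but the same rational-bounding strategy applies.
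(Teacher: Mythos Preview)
Your approach is essentially the paper's: both reduce $\mathbb{E}[X_k]$ to generalised exponential integrals and then apply Pad\'e approximants from \cite{Luke1969}. Your Laplace-transform derivation of the closed form is a clean alternative to the paper's direct integration of $z_k(x)/(F_{n-k}+x)^2$, but it lands on exactly the same expression: with $\alpha=a\lambda$, the paper writes $\mathbb{E}[X_k]=\tfrac{\lambda^2}{k-1}\bigl(-1+\tfrac{k-1+\alpha}{\alpha}\cdot\alpha e^\alpha E_{k-1}(\alpha)\bigr)$ for $k>1$, which is your formula.

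There is, however, a concrete error in your prediction of what the Pad\'e step yields. Plugging in the second upper approximant $\alpha e^\alpha E_{k-1}(\alpha)\leq(\alpha^2+(k+3)\alpha+2)/(\alpha^2+2(k+1)\alpha+k(k+1))$, multiplying by $(k-1+\alpha)/\alpha$ and subtracting $1$, the numerator does \emph{not} collapse to a constant: one computes
\[
(k-1+\alpha)(\alpha^2+(k+3)\alpha+2)-\alpha(\alpha^2+2(k+1)\alpha+k(k+1))=(k-1)(\alpha+2),
\]
which is linear in $\alpha$. The resulting bound is $\lambda^2(\alpha+2)\big/\bigl(\alpha(\alpha^2+2(k+1)\alpha+k(k+1))\bigr)$, not your $O\bigl(\lambda^2/(\alpha^2+k\alpha+k^2)\bigr)$. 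Your uniform form cannot hold: for $k\in\{1,2\}$ one has $\mathbb{E}[1/Z_k^2]=\infty$, so $\mathbb{E}[X_k]$ must diverge as $\alpha\to0$, whereas your claimed bound stays finite there. The extra factor $\alpha$ in the denominator is precisely what encodes this divergence. Substituting $\alpha=\Theta(n^2F_{n-k}/k)$ and $\lambda=\Theta(n^2/k)$ into the correct form gives the lemma's bound $O\bigl((n^4F_{n-k}+kn^2)/(n^4F_{n-k}^3+k^4F_{n-k})\bigr)$ directly --- no auxiliary polynomial inequality is needed.
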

\begin{proof}
	Let $z_k(x)$ denote the density function of $Z_k$, i.e.,
	\begin{equation*}
	z_k(x)=\binom{e\binom{n}{2}}{k}^k\cdot\frac{x^{k-1}e^{-ex\binom{n}{2}/k}}{(k-1)!}.
	\end{equation*}		
	For $k=1$ we obtain upon direct computation that
	\begin{align*}
	\mathbb{E}\left[X_1\right]&=\int_0^\infty\frac{z_1(x)}{\left(F_{n-1}+x\right)^2}\,\mathrm{d}x=e\tbinom{n}{2}\cdot\int_0^\infty\frac{e^{-ex\binom{n}{2}}}{\left(F_{n-1}+x\right)^2}\,\mathrm{d}x\\
	&=e\tbinom{n}{2}\cdot\left(\frac1{F_{n-1}}-\frac1{F_{n-1}}\cdot e\tbinom{n}{2}F_{n-1}\cdot e^{e\binom{n}{2}F_{n-1}}\cdot\int\limits_{e\binom{n}{2}F_{n-1}}^\infty\frac{e^{-t}}{t}\,\mathrm{d}t\right).
	\end{align*}
	The second Pad\'e approximant for the exponential integral in this expression \cite{Luke1969} states that for any $\alpha>0$ we have
	\begin{equation*}
	\frac{\alpha^2+3\alpha}{\alpha^2+4\alpha+2}\leq\alpha e^\alpha\int_\alpha^\infty\frac{e^{-t}}{t}\,\mathrm{d}t\leq\frac{\alpha^2+5\alpha+2}{\alpha^2+6\alpha+6}.
	\end{equation*}
	Applying this inequality yields
	\begin{align*}
	\mathbb{E}\left[X_1\right]&\leq e\tbinom{n}{2}\cdot\left(\frac1{F_{n-1}}-\frac1{F_{n-1}}\cdot\frac{e^2\binom{n}{2}^2F_{n-1}^2+3e\binom{n}{2}F_{n-1}}{e^2\binom{n}{2}^2F_{n-1}^2+4e\binom{n}{2}F_{n-1}+2}\right)\\
	&=\frac{e^2\binom{n}{2}^2F_{n-1}+2e\binom{n}{2}}{e^2\binom{n}{2}^2F_{n-1}^3+4e\binom{n}{2}F_{n-1}^2+2F_{n-1}}=O\left(\frac{n^4F_{n-1}+n^2}{n^4F_{n-1}^3+F_{n-1}}\right),
	\end{align*}
	which satisfies the given bound.\\
	For $k>1$ we obtain upon direct computation that
	\begin{align*}
	\mathbb{E}\left[X_k\right]&=\int_0^\infty\frac{z_k(x)}{\left(F_{n-1}+x\right)^2}\,\mathrm{d}x=\frac{e^k\binom{n}{2}^k}{k^k(k-1)!}\cdot\int_0^\infty\frac{x^{k-1}e^{-ex\binom{n}{2}/k}}{\left(F_{n-k}+x\right)^2}\,\mathrm{d}x\\
	&=\frac{e^2\binom{n}{2}^2}{k^2(k-1)}\cdot\left(-1+\frac{k-1+\alpha}{\alpha}\cdot\alpha^{k-1}e^\alpha\int_\alpha^\infty\frac{e^{-t}}{t^{k-1}},\mathrm{d}t\right),
	\end{align*}
	where we used $\alpha:=e\binom{n}{2}F_{n-k}/k$ to shorten notation. The second Pad\'e approximant for the generalized exponential integral in this expression \cite{Luke1969} states that for any $\alpha>0$ and $k>1$ we have
	\begin{equation*}
	\frac{\alpha^2+(k+1)\alpha}{\alpha^2+2k\alpha+k(k-1)}\leq\alpha^{k-1}e^\alpha\int_\alpha^\infty\frac{e^{-t}}{t^{k-1}}\,\mathrm{d}t\leq\frac{\alpha^2+(k+3)\alpha+2}{\alpha^2+2(k+1)\alpha+k(k+1)}.
	\end{equation*}
	Applying this inequality yields
	\begin{align*}
	\mathbb{E}\left[X_k\right]&\leq\frac{e^2\binom{n}{2}^2}{k^2(k-1)}\cdot\left(-1+\frac{k-1+\alpha}{\alpha}\cdot\frac{\alpha^2+(k+3)\alpha+2}{\alpha^2+2(k+1)\alpha+k(k+1)}\right)\\
	&=\frac{e^2\binom{n}{2}^2F_{n-k}+2ke\binom{n}{2}}{e^2\binom{n}{2}^2F_{n-k}^3+2k(k+1)e\binom{n}{2}F_{n-k}^2+k^3(k+1)F_{n-k}}\\
	&=O\left(\frac{n^4F_{n-k}+kn^2}{n^4F_{n-k}^3+k^4F_{n-k}}\right),
	\end{align*}
	which completes this proof.\blokje
\end{proof}
\begin{lemma}
	\label{lemma:VarXk}
	Let $X_k=1/(F_{n-k}+Z_k)^2$ where $Z_k\sim\Gamma(k,e\binom{n}{2}/k)$. Then, for any $k\in\{3,\ldots,n-1\}$ it follows that
	\begin{equation*}
	\mathbb{E}\left[X_k^2\right]-\left(\mathbb{E}\left[X_k\right]\right)^2\leq O\left(\frac{k^3n^{12}F_{n-k}^3+k^9n^6}{n^{16}F_{n-k}^9+k^{16}F_{n-k}}\right),
	\end{equation*}
	whereas for $k=1$ and $k=2$ we have
	\begin{align*}
	\mathbb{E}\left[X_1^2\right]-\left(\mathbb{E}\left[X_1\right]\right)^2&\leq O\left(\frac{n^6F_{n-1}^2+n^2}{n^8F_{n-1}^7+F_{n-1}^3}\right),\\
	\mathbb{E}\left[X_2^2\right]-\left(\mathbb{E}\left[X_2\right]\right)^2&\leq O\left(\frac{n^{10}F_{n-2}^3+n^4}{n^{14}F_{n-2}^9+F_{n-2}^2}\right).
	\end{align*}
\end{lemma}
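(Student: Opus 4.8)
The plan is to carry out, for the fourth moment $\mathbb{E}[X_k^2]=\int_0^\infty z_k(x)\,(F_{n-k}+x)^{-4}\,\mathrm{d}x$, the same kind of exact integration used in the proof of Lemma~\ref{lemma:EXk} for the second moment, and then to subtract the square of the bound obtained there. Concretely, the substitution $u=F_{n-k}+x$ followed by $t=e\binom{n}{2}u/k$ (so that $\alpha:=e\binom{n}{2}F_{n-k}/k$ is the new lower limit, with $\alpha:=e\binom{n}{2}F_{n-1}$ when $k=1$) turns $\mathbb{E}[X_k^2]$ into a constant times $e^{\alpha}\int_\alpha^\infty(t-\alpha)^{k-1}t^{-4}e^{-t}\,\mathrm{d}t$; expanding $(t-\alpha)^{k-1}$ by the binomial theorem breaks this into finitely many elementary $\alpha^{j}e^{-\alpha}$ terms together with a few generalized exponential integrals $\int_\alpha^\infty t^{-m}e^{-t}\,\mathrm{d}t$, $m\le 4$, and the recursion $\int_\alpha^\infty t^{-m}e^{-t}\,\mathrm{d}t=\tfrac1{m-1}\bigl(\alpha^{-(m-1)}e^{-\alpha}-\int_\alpha^\infty t^{-(m-1)}e^{-t}\,\mathrm{d}t\bigr)$ lets me funnel everything (up to elementary terms) into the \emph{same} exponential integral that governs $\mathbb{E}[X_k]$ in Lemma~\ref{lemma:EXk}, so that one Pad\'e bracket serves both. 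The three cases $k=1$, $k=2$, $k\ge3$ arise only because that governing integral is the classical $\int_\alpha^\infty t^{-1}e^{-t}\,\mathrm{d}t$ for $k\le2$ (and the $k=1$ formula of Lemma~\ref{lemma:EXk} degenerates separately), while for $k\ge3$ it is $\int_\alpha^\infty t^{-(k-1)}e^{-t}\,\mathrm{d}t$ with its generalized Pad\'e bound; the differing powers of $n$ and $F_{n-k}$ in the three stated bounds are an artifact of this.

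The crux is the subtraction, because it produces a great deal of cancellation. Expanding $(F_{n-k}+Z_k)^{-2}$ and $(F_{n-k}+Z_k)^{-4}$ around $Z_k=0$ and taking expectations shows that $\mathbb{E}[X_k^2]$ and $(\mathbb{E}[X_k])^2$ have the same leading behavior through two orders --- both are $F_{n-k}^{-4}-4\,\mathbb{E}[Z_k]\,F_{n-k}^{-5}+\Theta(F_{n-k}^{-6})$ --- so their difference is $\Theta(F_{n-k}^{-6})$, and a short computation identifies it as $4\,\Var(Z_k)\,F_{n-k}^{-6}$ plus strictly smaller terms, where $\Var(Z_k)=k^3/(e\binom{n}{2})^2=\Theta(k^3/n^4)$; this already matches the order in the statement. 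To make the bound rigorous I would estimate $\mathbb{E}[X_k^2]$ from above and $\mathbb{E}[X_k]$ from below using the two-sided Pad\'e inequalities of Lemma~\ref{lemma:EXk} (choosing the upper or lower half of each Pad\'e bracket according to the sign of the coefficient with which the exponential integral enters, since it enters $(\mathbb{E}[X_k])^2$ quadratically), combine the result over a common denominator into a single rational function of $\alpha$, $k$ and $F_{n-k}$, verify that its $\Theta(F_{n-k}^{-4})$ and $\Theta(F_{n-k}^{-5})$ parts cancel identically, and read off the remainder. After substituting $\binom{n}{2}=\Theta(n^2)$ and $\alpha=\Theta(n^2F_{n-k}/k)$ the surviving numerator is $O(k^3n^{12}F_{n-k}^3+k^9n^6)$ and the denominator is $\Omega(n^{16}F_{n-k}^9+k^{16}F_{n-k})$ for $k\ge3$, with the stated analogues for $k=1,2$; the constants are cleaned up using nothing beyond $x/(x-1)\le2$ for $x\ge2$ and $\binom{m}{k}\le(em/k)^k$, exactly as elsewhere in the paper.

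I expect the main obstacle to be the precision of this bookkeeping rather than any single clever idea. The naive bound $\Var(X_k)\le\mathbb{E}[X_k^2]$ only yields order $F_{n-k}^{-4}$, which is far too weak, so the Pad\'e error terms have to be tracked accurately enough that the two-order cancellation is provable and not merely heuristic; moreover, the exponential integral coming out of $\mathbb{E}[X_k^2]$ has a different power from the one in $\mathbb{E}[X_k]$ and has to be algebraically reconciled with it, which keeps the rational functions symbolic in $k$ and makes them rather unwieldy. As a sanity check --- and a possible shortcut for the \emph{order} of the bound, though not for its exact closed form --- one can also estimate $\Var(X_k)=\Var\bigl(g(Z_k)\bigr)$ for $g(z)=(F_{n-k}+z)^{-2}$ by $\bigl(\sup|g'|\bigr)^2\Var(Z_k)$ over the bulk of the distribution of $Z_k$ plus a crude tail contribution, which already gives the $\Var(Z_k)/F_{n-k}^6\approx k^3/(n^4F_{n-k}^6)$ behavior directly.
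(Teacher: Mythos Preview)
Your plan is essentially the paper's own proof: bound $\mathbb{E}[X_k^2]$ from above and $\mathbb{E}[X_k]$ from below via Pad\'e approximants of the (generalized) exponential integral, combine the two rational bounds over a common denominator, and verify that the leading $F_{n-k}^{-4}$ and $F_{n-k}^{-5}$ contributions cancel so that what survives has the stated order. Your heuristic $4\Var(Z_k)F_{n-k}^{-6}$ correctly predicts the large-$\alpha$ asymptotics.

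Two points of divergence are worth flagging. First, the paper does \emph{not} reduce the integral in $\mathbb{E}[X_k^2]$ back to the one governing $\mathbb{E}[X_k]$: after the same substitution you describe, $\mathbb{E}[X_k^2]$ naturally produces $\alpha^{k-3}e^{\alpha}\int_\alpha^\infty t^{-(k-3)}e^{-t}\,\mathrm{d}t$ (for $k>3$; it degenerates to the classical $E_1$ for $k\le3$), and the paper applies a Pad\'e bound directly to \emph{that} integral rather than funnelling it through two recursion steps into $t^{-(k-1)}$. Second, and more importantly, the second Pad\'e bracket from Lemma~\ref{lemma:EXk} is not sharp enough on the $\mathbb{E}[X_k^2]$ side to survive the two-order cancellation you correctly identified as the crux. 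The paper uses the \emph{fourth} Pad\'e approximant for $\mathbb{E}[X_k^2]$ when $k\ge3$ (and the third when $k=2$), paired with the second Pad\'e lower bound for $\mathbb{E}[X_k]$ (first Pad\'e when $k=1$). If you insist on routing everything through the $t^{-(k-1)}$ integral you will need a correspondingly higher-order bracket there; otherwise the Pad\'e remainder swamps the variance. With that one correction your outline goes through exactly as in the paper, and the ``precision bookkeeping'' you anticipated is indeed the entire content of the argument.
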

\begin{proof}
	Let $z_k(x)$ denote the density function of $Z_k$, i.e.,
	\begin{equation*}
	z_k(x)=\binom{e\binom{n}{2}}{k}^k\cdot\frac{x^{k-1}e^{-ex\binom{n}{2}/k}}{(k-1)!}.
	\end{equation*}
	We start by providing lower bounds for $\mathbb{E}[X_k]$. For $k=1$ we can combine our observations in the proof of Lemma~\ref{lemma:EXk} with the following first Pad\'e approximant for the exponential integral \cite{Luke1969}:
	\begin{equation*}
	\frac{\alpha}{\alpha+1}\leq\alpha e^\alpha\int_\alpha^\infty\frac{e^{-t}}{t}\,\mathrm{d}t\leq\frac{\alpha+1}{\alpha+2}.
	\end{equation*}
	This yields
	\begin{align*}
	\mathbb{E}\left[X_1\right]&=e\tbinom{n}{2}\cdot\left(\frac1{F_{n-1}}-\frac1{F_{n-1}}\cdot e\tbinom{n}{2}F_{n-1}\cdot e^{e\binom{n}{2}F_{n-1}}\cdot\int\limits_{e\binom{n}{2}F_{n-1}}^\infty\frac{e^{-t}}{t}\,\mathrm{d}t\right)\\
	&\geq e\tbinom{n}{2}\cdot\left(\frac1{F_{n-1}}-\frac1{F_{n-1}}\cdot\frac{e\binom{n}{2}F_{n-1}+1}{e\binom{n}{2}F_{n-1}+2}\right)\\
	&=\frac{e\binom{n}{2}}{e\binom{n}{2}F_{n-1}^2+2F_{n-1}}.
	\end{align*}
	For $k>1$ we can use our observations in the proof of Lemma~\ref{lemma:EXk} to come up with the following:
	\begin{align*}
	\mathbb{E}[X_k]&=\frac{e^2\binom{n}{2}^2}{k^2(k-1)}\cdot\left(-1+\frac{k-1+\alpha}{\alpha}\cdot\alpha^{k-1}e^\alpha\int_\alpha^\infty\frac{e^{-t}}{t^{k-1}},\mathrm{d}t\right)\\
	&\geq\frac{e^2\binom{n}{2}^2}{k^2(k-1)}\cdot\left(-1+\frac{k-1+\alpha}{\alpha}\cdot\frac{\alpha^2+(k+1)\alpha}{\alpha^2+2k\alpha+k(k-1)}\right)\\
	&=\frac{e^2\binom{n}{2}^2}{e^2\binom{n}{2}^2F_{n-k}^2+2k^2e\binom{n}{2}F_{n-k}+k^3(k-1)},
	\end{align*}
	where, as in Lemma~\ref{lemma:EXk}, we used $\alpha:=e\binom{n}{2}F_{n-k}/k$ to shorten notation.
	
	Next we need upper bounds for $\mathbb{E}[X_k^2]$. For $k=1$ we obtain upon direct computation that
	\begin{align*}
	\mathbb{E}\left[X_1^2\right]&=\int_0^\infty\frac{z_1(x)}{\left(F_{n-1}+x\right)^4}\,\mathrm{d}x=e\tbinom{n}{2}\cdot\int_0^\infty\frac{e^{-ex\binom{n}{2}}}{\left(F_{n-1}+x\right)^4}\,\mathrm{d}x\\
	&=e\tbinom{n}{2}\cdot\left(\frac{2-\alpha+\alpha^2}{6F_{n-1}^3}-\frac{e^2\binom{n}{2}^2}{6F_{n-1}}\cdot \alpha e^\alpha\int_\alpha^\infty\frac{e^{-t}}{t}\,\mathrm{d}t\right),
	\end{align*}
	where $\alpha:=e\binom{n}{2}F_{n-1}$ to shorten notation. Again applying the second Pad\'e approximant for the exponential integral in this expression \cite{Luke1969} (see the proof of Lemma~\ref{lemma:EXk}) yields
	\begin{align*}
	\mathbb{E}\left[X_1^2\right]&\leq e\tbinom{n}{2}\cdot\left(\frac{2-\alpha+\alpha^2}{6F_{n-1}^3}-\frac{e^2\binom{n}{2}^2}{6F_{n-1}}\cdot\frac{\alpha^2+3\alpha}{\alpha^2+4\alpha+2}\right)\\
	&=\frac{3e^2\binom{n}{2}^2F_{n-1}+2e\binom{n}{2}}{3e^2\binom{n}{2}^2F_{n-1}^5+12e\binom{n}{2}F_{n-1}^4+6F_{n-1}^3}.
	\end{align*}
	For $k=2$ we obtain upon direct computation that
	\begin{align*}
	\mathbb{E}\left[X_2^2\right]&=\int_0^\infty\frac{z_2(x)}{\left(F_{n-2}+x\right)^4}\,\mathrm{d}x=\frac{e^2\binom{n}{2}^2}{4}\cdot\int_0^\infty\frac{xe^{-ex\binom{n}{2}/2}}{\left(F_{n-2}+x\right)^4}\,\mathrm{d}x\\
	&=\frac{e^2\binom{n}{2}^2}{4}\cdot\left(\frac{1-2\alpha-\alpha^2}{6F_{n-2}^2}+\frac{e\binom{n}{2}(\alpha+3)}{12F_{n-2}}\cdot \alpha e^\alpha\int_\alpha^\infty\frac{e^{-t}}{t}\,\mathrm{d}t\right),
	\end{align*}
	where $\alpha:=e\binom{n}{2}F_{n-2}/2$ to shorten notation. The third Pad\'e approximant for the exponential integral in this expression \cite{Luke1969} states that for any $\alpha>0$ we have
	\begin{equation*}
	\alpha e^\alpha\int_\alpha^\infty\frac{e^{-t}}{t}\,\mathrm{d}t\leq\frac{\alpha^3+11\alpha^2+26\alpha+6}{\alpha^3+12\alpha^2+36\alpha+24}.
	\end{equation*}
	Applying this inequality yields
	\begin{align*}
	\mathbb{E}\left[X_2^2\right]&\leq\frac{e^2\binom{n}{2}^2}{4}\cdot\left(\frac{1-2\alpha-\alpha^2}{6F_{n-2}^2}+\frac{e\binom{n}{2}(\alpha+3)}{12F_{n-2}}\cdot\frac{\alpha^3+11\alpha^2+26\alpha+6}{\alpha^3+12\alpha^2+36\alpha+24}\right)\\
	&=\frac{e^3\binom{n}{2}^3F_{n-2}+8e^2\binom{n}{2}^2}{e^3\binom{n}{2}^3F_{n-2}^5+24e^2\binom{n}{2}^2F_{n-2}^4+144e\binom{n}{2}F_{n-2}^3+192F_{n-2}^2}.
	\end{align*}
	For $k=3$ we obtain upon direct computation that
	\begin{align*}
	\mathbb{E}\left[X_3^2\right]&=\int_0^\infty\frac{z_3(x)}{\left(F_{n-3}+x\right)^4}\,\mathrm{d}x=\frac{e^3\binom{n}{2}^3}{54}\cdot\int_0^\infty\frac{x^2e^{-ex\binom{n}{2}/3}}{\left(F_{n-3}+x\right)^4}\,\mathrm{d}x\\
	&=\frac{e^3\binom{n}{2}^3}{54}\cdot\left(\frac{2+5\alpha+\alpha^2}{6F_{n-3}}+\frac{\alpha^2+6\alpha+6}{6F_{n-3}}\cdot \alpha e^\alpha\int_\alpha^\infty\frac{e^{-t}}{t}\,\mathrm{d}t\right),
	\end{align*}
	where $\alpha:=e\binom{n}{2}F_{n-3}/3$ to shorten notation. The fourth Pad\'e approximant for the exponential integral in this expression \cite{Luke1969} states that for any $\alpha>0$ we have
	\begin{equation*}
	\alpha e^\alpha\int_\alpha^\infty\frac{e^{-t}}{t}\,\mathrm{d}t\geq\frac{\alpha^4+15\alpha^3+58\alpha^2+50\alpha}{\alpha^4+16\alpha^3+72\alpha^2+96\alpha+24}.
	\end{equation*}
	Applying this inequality yields
	\begin{align*}
	\mathbb{E}\left[X_3^2\right]&\leq\frac{e^3\binom{n}{2}^3}{54}\cdot\left(\frac{2+5\alpha+\alpha^2}{6F_{n-3}}+\frac{\alpha^2+6\alpha+6}{6F_{n-3}}\cdot\frac{\alpha^4+15\alpha^3+58\alpha^2+50\alpha}{\alpha^4+16\alpha^3+72\alpha^2+96\alpha+24}\right)\\
	&=\frac{e^4\binom{n}{2}^4F_{n-3}+12e^3\binom{n}{2}^3}{e^4\binom{n}{2}^4F_{n-3}^5+48e^3\binom{n}{2}^3F_{n-3}^4+648e^2\binom{n}{2}^2F_{n-3}^3+2592e\binom{n}{2}F_{n-3}^2+1944F_{n-3}}.
	\end{align*}
	For $k>3$ we obtain upon direct computation that
	\begin{align*}
	\mathbb{E}\left[X_k^2\right]&=\int_0^\infty\frac{z_k(x)}{\left(F_{n-k}+x\right)^4}\,\mathrm{d}x=\frac{e^k\binom{n}{2}^k}{k^k(k-1)!}\cdot\int_0^\infty\frac{x^{k-1}e^{-ex\binom{n}{2}/k}}{\left(F_{n-k}+x\right)^4}\,\mathrm{d}x\\
	&=\frac{e^4\binom{n}{2}^4}{6k^4(k-1)(k-2)(k-3)}\cdot\bigg(2k-3-(\alpha+k)^2\\
	&\qquad\qquad+\frac{\alpha^3+3(k-1)\alpha^2+3(k-1)(k-2)\alpha+(k-1)(k-2)(k-3)}{\alpha}\\
	&\qquad\qquad\qquad\qquad\cdot\alpha^{k-3}e^\alpha\int_\alpha^\infty\frac{e^{-t}}{t^{k-3}}\,\mathrm{d}t\bigg),
	\end{align*}
	where $\alpha:=e\binom{n}{2}F_{n-k}/k$ to shorten notation. The fourth Pad\'e approximant for the generalized exponential integral in this expression \cite{Luke1969} states that for any $\alpha>0$ and $k>3$ we have
	\begin{equation*}
	\alpha^{k-3}e^\alpha\int_\alpha^\infty\frac{e^{-t}}{t^{k-3}}\,\mathrm{d}t\leq\frac{\alpha^4+(3k+7)\alpha^3+3(k^2+3k+6)\alpha^2+(k+3)(k^2-k+10)\alpha+24}{\alpha^4+4(k+1)\alpha^3+6k(k+1)\alpha^2+4k(k^2-1)\alpha+k(k^2-1)(k-2)}.
	\end{equation*}
	Applying this inequality and immediately simplifying yields
	\begin{align*}
	&\mathbb{E}\left[X_k^2\right]\leq\left.\left(e^4\tbinom{n}{2}^4F_{n-k}+4ke^3\tbinom{n}{2}^3\right)\bigg/\left(e^4\tbinom{n}{2}^4F_{n-k}^5+4k(k+1)e^3\tbinom{n}{2}^3F_{n-k}^4\right.\right.\\
	&\qquad\qquad\qquad\left.+6k^3(k+1)e^2\tbinom{n}{2}^2F_{n-k}^3+4k^4(k^2-1)e\tbinom{n}{2}F_{n-k}^2+k^5(k^2-1)(k-2)F_{n-k}\right).
	\end{align*}
	Now observe that the bounds that we computed for $k=2$ and $k=3$ are actually the same as this bound. So, from now on we can use this last bound for any integer $k>1$.
	
	It remains now to combine the bounds that we just derived. For $k=1$ this yields
	\begin{align*}
	\mathbb{E}\left[X_1^2\right]-\left(\mathbb{E}\left[X_1\right]\right)^2&\leq\frac{3e^2\binom{n}{2}^2F_{n-1}+2e\binom{n}{2}}{3e^2\binom{n}{2}^2F_{n-1}^5+12e\binom{n}{2}F_{n-1}^4+6F_{n-1}^3}-\left(\frac{e\binom{n}{2}}{e\binom{n}{2}F_{n-1}^2+2F_{n-1}}\right)^2\\
	&=\frac{2e^3\binom{n}{2}^3F_{n-1}^2+14e^2\binom{n}{2}^2F_{n-1}+8e\binom{n}{2}}{3e^4\binom{n}{2}^4F_{n-1}^7+24e^3\binom{n}{2}^3F_{n-1}^6+66e^2\binom{n}{2}^2F_{n-1}^5+72e\binom{n}{2}F_{n-1}^4+24F_{n-1}^3}\\
	&=O\left(\frac{n^6F_{n-1}^2+n^2}{n^8F_{n-1}^7+F_{n-1}^3}\right).
	\end{align*}
	For $k>1$ we obtain
	\begin{align*}
	&\mathbb{E}\left[X_k^2\right]-\left(\mathbb{E}\left[X_k\right]\right)^2\\
	&\qquad\qquad\qquad\leq\left.\left(e^4\tbinom{n}{2}^4F_{n-k}+4ke^3\tbinom{n}{2}^3\right)\bigg/\left(e^4\tbinom{n}{2}^4F_{n-k}^5+4k(k+1)e^3\tbinom{n}{2}^3F_{n-k}^4\right.\right.\\
	&\qquad\qquad\qquad\qquad\left.+6k^3(k+1)e^2\tbinom{n}{2}^2F_{n-k}^3+4k^4(k^2-1)e\tbinom{n}{2}F_{n-k}^2+k^5(k^2-1)(k-2)F_{n-k}\right)\\
	&\qquad\qquad\qquad\qquad\qquad-\left(\frac{e^2\binom{n}{2}^2}{e^2\binom{n}{2}^2F_{n-k}^2+2k^2e\binom{n}{2}F_{n-k}+k^3(k-1)}\right)^2\\
	&\qquad\qquad\qquad=\frac{8k^3e^6\binom{n}{2}^6F_{n-k}^3+(...)+4k^7(k-2)^2e^3\binom{n}{2}^3}{e^8\binom{n}{2}^8F_{n-k}^9+(...)+k^{11}(k-1)^2(k^2-1)(k-2)F_{n-k}}\\
	&\qquad\qquad\qquad=O\left(\frac{k^3n^{12}F_{n-k}^3+k^9n^6}{n^{16}F_{n-k}^9+k^{16}F_{n-k}}\right),
	\end{align*}
	where the last inequality holds for $k>2$. For $k=2$ the last term in the denominator vanishes, which leads to the following result:
	\begin{equation*}
	\mathbb{E}\left[X_2^2\right]-\left(\mathbb{E}\left[X_2\right]\right)^2\leq O\left(\frac{n^{12}F_{n-2}^3+n^6}{n^{16}F_{n-2}^9+n^2F_{n-2}^2}\right)=O\left(\frac{n^{10}F_{n-2}^3+n^4}{n^{14}F_{n-2}^9+F_{n-2}^2}\right),
	\end{equation*}
	which finishes this proof.\blokje
\end{proof}
\begin{proof}[Proof of Theorem \ref{thm:kappalarge}]
	Using the Cauchy-Schwarz inequality for random variables (see Section~\ref{sect:prelim}), we obtain
	\begin{equation*}
	\mathbb{E}\left[\frac{\mathsf{ALG}}{\mathsf{OPT}}\right]\leq\sqrt{\mathbb{E}\left[\mathsf{ALG}^2\right]}\cdot\sqrt{\mathbb{E}\left[\frac1{\mathsf{OPT}^2}\right]}.
	\end{equation*}
	Recall from Section~\ref{sect:heur} that we know the distribution of $\mathsf{ALG}$. We can use this to compute and bound $\mathbb{E}[\mathsf{ALG}^2]$. If $\kappa<n$, then we obtain
	\begin{equation*}
	\mathbb{E}\left[\mathsf{ALG}^2\right]=\left(F_\kappa+H_{n-1}-H_{\kappa-1}\right)^2+\sum_{i=\kappa}^{n-1}\frac1{i^2}=\left(F_\kappa+\ln(n/\kappa)+\Theta(1)\right)^2+\sum_{i=\kappa}^{n-1}\frac1{i^2}
	\end{equation*}
	which is $O(1)$ since $\kappa\in\Theta(n)$ and $F_\kappa\leq\kappa f_\kappa<\kappa/(\kappa-1)\leq2$ for such $\kappa$. If $\kappa=n$, then we have $\mathbb{E}[\mathsf{ALG}^2]=F_n^2$.\\
	It remains to bound $\mathbb{E}[1/\mathsf{OPT}^2]$. We start by using our final notion from Section~\ref{sect:modelnotation}, and subsequently using the result of Lemma~\ref{lemma:LBOPTnk}. This yields
	\begin{equation*}
	\mathbb{E}\left[\frac1{\mathsf{OPT}^2}\right]=\mathbb{E}\left[\max_k\frac1{\mathsf{OPT}_{n-k}^2}\right]\leq\mathbb{E}\left[\max_k\frac1{\left(F_{n-k}+Z_k\right)^2}\right],
	\end{equation*}
	where $Z_k\sim\Gamma(k,e\binom{n}{2}/k)$ and where we take the maximum over $k\in\{0,\ldots,n-1\}$. Next we use the result of Lemma~\ref{lemma:expecmax} to get the maximum operator out of the expectation. This yields
	\begin{align*}
	\mathbb{E}\left[\frac1{\mathsf{OPT}^2}\right]&\leq\max_k\mathbb{E}\left[\frac1{\left(F_{n-k}+Z_k\right)^2}\right]+\sqrt{\frac{n-1}n\cdot\sum_{k=0}^{n-1}\Var\left(\frac1{\left(F_{n-k}+Z_k\right)^2}\right)}\\
	&\leq\max_k\mathbb{E}\left[X_k\right]+\sqrt{\sum_{k=0}^{n-1}\left(\mathbb{E}\left[X_k^2\right]-\left(\mathbb{E}\left[X_k\right]\right)^2\right)},
	\end{align*}
	where we also used $X_k:=1/(F_{n-k}+Z_k)^2$ to shorten notation, applied the difference formula for the variance, and used the inequality $(n-1)/n\leq1$.\\
	Since we know the distribution of $Z_k$, we can compute and subsequently bound the expectations of $X_k$ that occur in this last expression. For $k=0$ we have $Z_0=0$, and thus $\mathbb{E}[X_0]=1/F_n^2$ and $\mathbb{E}[X_0^2]-(\mathbb{E}[X_0])^2=0$. For $k\in[n-1]$, Lemmas~\ref{lemma:EXk} and \ref{lemma:VarXk} yield the bounds that we need to obtain the desired result.\blokje
\end{proof}
Finally, we will evaluate the just proven bound for the approximation ratio for the special case where all facility opening costs are equal, i.e., $f_1=\ldots=f_n=f$.
\begin{corollary}
	\label{cor:kappalarge}
	Assume that $f_1=\ldots=f_n=f$. Define $\kappa:=\kappa(n;f_1,\ldots,f_n)=\max\{i\in[n]:f_i<1/(i-1)\}=\min\{\lceil1/f\rceil,n\}$ and assume that $\kappa\in\Theta(n)$. Let $\mathsf{ALG}$ denote the total cost of the solution which opens, independently of the metric space, $\kappa$ arbitrarily chosen facilities, e.g., the facilities $\{1,\ldots,\kappa\}$. Then, it follows that
	\begin{equation*}
	\mathbb{E}\left[\frac{\mathsf{ALG}}{\mathsf{OPT}}\right]=O(1)+O\left(\sqrt[4]{\ln(n)n^3f^3}\right),
	\end{equation*}
	which for $f\in O(1/n\sqrt[3]{\ln(n)})$ is equal to $O(1)$.	Moreover, if $f\in o(1/n^3)$, then this approximation ratio becomes $1+o(1)$.
\end{corollary}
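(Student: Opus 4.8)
The plan is to obtain the corollary by substituting $F_k = kf$ (so that $F_n = nf$ and $F_{n-k} = (n-k)f$) into the two bounds of Theorem~\ref{thm:kappalarge} and simplifying. The key preliminary observation is that $\kappa = \min\{\lceil1/f\rceil,n\} \in \Theta(n)$ forces $f = O(1/n)$, and in fact $f = \Theta(1/n)$ whenever $\kappa < n$; hence $F_k = kf = O(1)$ for every $k \le n$, so in particular $F_n^2 = O(1)$ and $F_n^4 = O(1)$. Because of this, the extra factors $F_n$ occurring in the $\kappa = n$ branch of the theorem are asymptotically harmless: it suffices to show that $F_n^2$ times the first radicand is $O(1)$ and that $F_n^4$ times the second radicand is $O(1 + \ln(n)\,n^3f^3)$, since then taking a square root and a fourth root and adding yields the claimed $O(1) + O(\sqrt[4]{\ln(n)\,n^3f^3})$; and these uniform bounds also cover the $\kappa < n$ branch because there $F_n = \Theta(1)$. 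The specialization to $O(1)$ for $f \in O(1/(n\sqrt[3]{\ln n}))$ is then immediate, since in that case $\ln(n)\,n^3f^3 \le 1$.

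For the first radicand, after substituting $F_{n-k} = (n-k)f$ it remains to show that
\begin{equation*}
F_n^2 \cdot \max\Bigl\{\tfrac1{F_n^2},\ \max_{k\in[n-1]} O\Bigl(\tfrac{n^4F_{n-k}+kn^2}{n^4F_{n-k}^3+k^4F_{n-k}}\Bigr)\Bigr\} = O(1).
\end{equation*}
The $1/F_n^2$ entry contributes exactly $1$, and with $j := n-k$ the $k$-maximum becomes $\max_{j\in[n-1]} O\bigl(\frac{n^6jf^3 + n^4(n-j)f^2}{n^4j^3f^3 + (n-j)^4jf}\bigr)$, which I would bound by a short case analysis: for $j$ small the term $(n-j)^4jf$ dominates the denominator while $n^6jf^3$ dominates the numerator, giving ratio $O(n^2f^2) = O(1)$; for $j$ close to $n$ the term $n^4j^3f^3$ dominates the denominator, again giving $O(1)$; and for intermediate $j$ all four terms are comparable up to the factor $O(n^2f^2 + nf) = O(1)$.

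For the second radicand, after substituting $F_{n-k} = (n-k)f$ and multiplying by $F_n^4 = n^4f^4$ it remains to bound $F_n^4(\mathrm{term}_1 + \mathrm{term}_2) + F_n^4\sum_{k=3}^{n-1}\frac{k^3n^{12}F_{n-k}^3 + k^9n^6}{n^{16}F_{n-k}^9 + k^{16}F_{n-k}}$, where $\mathrm{term}_1, \mathrm{term}_2$ denote the first two summands inside the fourth root in Theorem~\ref{thm:kappalarge}. The first two terms collapse after cancellation: for instance $F_n^4\cdot\mathrm{term}_1$ simplifies to $\frac{n^3f(n^6f^2+1)}{n^{12}f^4+1}$, which is $O(1)$ on distinguishing $f \ge 1/n^3$ from $f < 1/n^3$, so they contribute only $O(1)$. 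The sum is what produces the logarithm: writing $j = n-k$, the portion with $j \le n/2$ is bounded termwise by $O\bigl(n^3f^3(j^2f^3 + 1/j)\bigr)$, whose sum over $j$ is $O\bigl(n^3f^3(n^3f^3 + \ln n)\bigr) = O(n^3f^3\ln n)$ since $n^3f^3 = O(1)$; the portion with $j > n/2$ must instead be controlled via the $n^{16}F_{n-k}^9$ term in the denominator, and the prefactor $F_n^4$ then cancels the surviving powers of $n$, leaving $O(1)$. I expect this last bookkeeping to be the main obstacle: each summand is a ratio of multivariate polynomials in $k$, $n$ and $f$, and one has to partition the summation range so that the dominant numerator and denominator terms are correctly identified in every regime of $f = O(1/n)$ — in particular also for $f$ far below $1/n$, where it is the prefactor $F_n^4$, not the sum itself, that supplies the crucial cancellation.

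Finally, for $f \in o(1/n^3)$ we have $1/f \in \omega(n^3)$, so $\kappa = n$ and $\mathsf{ALG} = F_n = nf$ is deterministic; since opening all facilities is a feasible solution, $\mathsf{OPT} \le \mathsf{ALG}$ and hence $\mathsf{ALG}/\mathsf{OPT} \ge 1$. For the matching upper bound I would argue as in Lemma~\ref{lemma:LBOPTnk}: when exactly $n-k$ facilities are opened with $k \ge 1$, each of the $k$ unopened vertices has connection cost at least the minimum edge weight $w_{\min}$, so $\mathsf{OPT}_{n-k} \ge F_{n-k} + k\,w_{\min} = (n-k)f + k\,w_{\min}$. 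On the event $\{w_{\min} \ge f\}$ this is $\ge nf = F_n$ for every $k$, whence $\mathsf{OPT} = F_n$ and the ratio equals $1$; the complementary event has probability $1 - e^{-\binom n2 f} \le \binom n2 f = O(n^2f)$, and there $\mathsf{OPT} \ge F_1 = f$ gives ratio at most $F_n/F_1 = n$. Therefore $\mathbb{E}[\mathsf{ALG}/\mathsf{OPT}] \le 1 + n\cdot O(n^2f) = 1 + O(n^3f) = 1 + o(1)$.
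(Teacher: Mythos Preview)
Your proposal is correct. For the main bound $O(1)+O(\sqrt[4]{\ln(n)\,n^3f^3})$ you follow exactly the paper's route: substitute $F_{n-k}=(n-k)f$ into Theorem~\ref{thm:kappalarge} and bound the maximum and the sum by case analysis on $k$ (and on the size of $f$). The paper packages this case analysis into Lemmas~\ref{lemma:spmaxbound} and~\ref{lemma:spsumbound}, which establish precisely that the inner maximum is $O(1/n^2f^2)$ and the inner sum is $O(\ln(n)/nf + 1/n^4f^4)$; your sketches point in the same direction but would need those lemmas (or equivalent bookkeeping) to be made rigorous.

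Where you genuinely diverge is the $f\in o(1/n^3)$ case. The paper stays inside the framework of Theorem~\ref{thm:kappalarge}: it refines Lemmas~\ref{lemma:spmaxbound} and~\ref{lemma:spsumbound} to show that in this regime the maximum equals its first entry $1/F_n^2$ and the sum improves to $O(1/nf^3)$, yielding $\mathbb{E}[\mathsf{ALG}/\mathsf{OPT}]\le 1+O(\sqrt[4]{n^3f})$. Your argument instead bypasses Theorem~\ref{thm:kappalarge} entirely: with $\kappa=n$ the algorithm's cost is deterministically $nf$, and the elementary lower bound $\mathsf{OPT}_{n-k}\ge(n-k)f+k\,w_{\min}$ (each non-facility pays at least the minimum edge weight) shows that $\mathsf{OPT}=nf$ whenever $w_{\min}\ge f$, an event of probability $e^{-\binom{n}{2}f}=1-O(n^2f)$; on the complement the crude bound $\mathsf{ALG}/\mathsf{OPT}\le n$ suffices. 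This is both simpler and sharper --- it gives $1+O(n^3f)$ rather than $1+O((n^3f)^{1/4})$ --- at the price of being specific to this regime rather than falling out of the general machinery.
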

Before we can prove this corollary, we need two more lemmas.
\begin{lemma}
	\label{lemma:spmaxbound}
	Suppose that $f\in O(1/n)$. Then, for any $k\in[n-1]$ we have
	\begin{equation*}
	O\left(\frac{n^4(n-k)f+kn^2}{n^4(n-k)^3f^3+k^4(n-k)f}\right)\leq O\left(\frac1{n^2f^2}\right).
	\end{equation*}
	Moreover, if $f\in o(1/n^3)$, then for any $k\in[n-1]$ we have
	\begin{equation*}
	O\left(\frac{n^4(n-k)f+kn^2}{n^4(n-k)^3f^3+k^4(n-k)f}\right)\leq o\left(\frac1{n^2f^2}\right).
	\end{equation*}
\end{lemma}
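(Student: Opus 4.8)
The plan is to substitute the equal opening costs and then carry out a short algebraic case analysis. Since $f_1=\cdots=f_n=f$ we have $F_{n-k}=(n-k)f$, so the bound on $\mathbb{E}[X_k]$ from Lemma~\ref{lemma:EXk} becomes the expression in the statement, and the claim amounts to showing that, after multiplying through by $n^2f^2$, the rational function
\[
R(n,k,f):=\frac{\bigl(n^4(n-k)f+kn^2\bigr)\,n^2f^2}{n^4(n-k)^3f^3+k^4(n-k)f}
\]
is bounded by an absolute constant, uniformly over $k\in[n-1]$, when $f\in O(1/n)$, and is $o(1)$ uniformly over $k$ when $f\in o(1/n^3)$. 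Cancelling the factor $(n-k)f$ from the denominator and $n^4f^2$ from the numerator, and then splitting the numerator into its two summands, one writes $R=T_1+T_2$ with
\[
T_1=\frac{n^6f^2}{n^4(n-k)^2f^2+k^4},\qquad
T_2=\frac{kn^4f}{n^4(n-k)^3f^2+k^4(n-k)}.
\]

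For $T_1$ I would argue as follows. When $f\in o(1/n^3)$, drop the first denominator term and use $k\ge1$ to get $T_1\le n^6f^2/k^4\le n^6f^2=o(1)$. When $f\in O(1/n)$, split on $k<n/2$ versus $k\ge n/2$: in the first case $(n-k)^2>n^2/4$ forces $n^4(n-k)^2f^2>n^6f^2/4$ and hence $T_1<4$; in the second case $k^4\ge n^4/16$ gives $T_1\le 16n^2f^2=O(1)$, since $nf=O(1)$.

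For $T_2$ I would again split on the size of $k$. If $k\ge n/2$, dropping the first denominator term and using $n-k\ge1$ gives $T_2\le n^4f/\bigl(k^3(n-k)\bigr)\le 8nf$, which is $O(1)$ when $f\in O(1/n)$ and $o(1/n^2)$ when $f\in o(1/n^3)$. If $k<n/2$ and $f\in O(1/n)$, apply AM--GM to the two denominator terms: their sum is at least $2n^2k^2(n-k)^2f$, whence $T_2\le n^2/\bigl(2k(n-k)^2\bigr)$, and since $(n-k)^2>n^2/4$ this is at most $2/k\le2$. Finally, if $k<n/2$ and $f\in o(1/n^3)$, drop the first denominator term instead to obtain $T_2\le n^4f/\bigl(k^3(n-k)\bigr)<2n^3f=o(1)$. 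Adding the bounds on $T_1$ and $T_2$ yields both parts of the lemma.

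There is no deep obstacle here, only a delicate bookkeeping point: no single estimate covers all pairs $(k,f)$, because when $n-k$ is small only the denominator term $k^4(n-k)$ is useful, whereas for small $k$ one must use the other term (or the AM--GM combination of the two), and the $O(1)$ and $o(1)$ regimes genuinely need different estimates for $T_2$ in the small-$k$ range, since the AM--GM bound there does not involve $f$ and so cannot tend to $0$. The one thing to watch is that every implied constant stays independent of $k$, which the case split above arranges automatically.
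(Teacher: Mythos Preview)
Your argument is correct. The decomposition $R=T_1+T_2$ is valid (after factoring $(n-k)f$ out of the denominator and splitting the numerator), and each of your four case estimates goes through with explicit constants; the uniformity in $k$ is indeed automatic because every bound is a numerical constant or a function of $n,f$ alone.

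Your route differs from the paper's. The paper argues by asymptotic regimes for $k$: it treats $k\in o(n)$, $k=cn$ with $c\in(0,1)$ constant, and $n-k\in o(n)$ separately, simplifying the rational function in each regime and then doing further sub-case analysis on the relation between $k$ and powers of $n,f$ (for instance, comparing $k$ with $\sqrt{n^3f}$ and with $n^3f$ in the small-$k$ regime). You instead split the fraction additively into $T_1$ and $T_2$, use a single threshold $k=n/2$, and invoke AM--GM once to handle the small-$k$ side of $T_2$. The payoff of your approach is concreteness: you obtain absolute constants ($4$, $16$, $2$, $8$) rather than unspecified $O(\cdot)$ constants, and you avoid the somewhat delicate ``$k\in o(n)$ versus $k\in\Theta(n)$'' trichotomy, which can be awkward to make fully rigorous for a variable $k$ ranging over all of $[n-1]$. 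The paper's approach, on the other hand, makes the dominant-term structure in each regime more transparent and connects more directly to the companion computation in Lemma~\ref{lemma:spsumbound}.
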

\begin{proof}
	We consider three main intervals for $k$: $k\in o(n)$, $k=cn$ for some constant $c\in(0,1)$, and $k$ such that $n-k\in o(n)$.
	
	If $k\in o(n)$, then we have $n-k\in\Theta(n)$, and therefore it follows that
	\begin{equation*}
	O\left(\frac{n^4(n-k)f+kn^2}{n^4(n-k)^3f^3+k^4(n-k)f}\right)=O\left(\frac{n^5f+kn^2}{n^7f^3+k^4nf}\right)=O\left(\frac{n}{f}\cdot\frac{n^3f+k}{n^6f^2+k^4}\right)
	\end{equation*}
	in this case. Now, if $k\in O(\sqrt{n^3f})$ this last expression reduces to $O(n^4f/n^6f^3)=O(1/n^2f^2)$. On the other hand, if $k\in\Omega(n^3f)$ this last expression reduces to $O(kn/k^4f)=O(n/k^3f)\leq O(n/kf)\leq O(n/n^3f^2)=O(1/n^2f^2)$, where the first inequality follows since $k\geq1$ and the second since $k\in\Omega(n^3f)$. In the remaining cases, where $k\in\Omega(\sqrt{n^3f})\cap O(n^3f)$, the last expression reduces to $O(n^4f/k^4f)=O(n^4/k^4)\leq O(n^4/n^6f^2)=O(1/n^2f^2)$, where the inequality follows since $k\in\Omega(\sqrt{n^3f})$.\\
	If additionally $f\in o(1/n^3)$, then it follows that $n^3f\in o(1)$, and thus for any $k\in[n-1]$ with $k\in o(n)$ the expression reduces to $O(kn/k^4f)=O(n/k^3f)\leq O(n/kf)\leq o(n/n^3f^2)=o(1/n^2f^2)$, where the first inequality follows since $k\geq1$ and the second since $k\in\omega(n^3f)$.
	
	If $k=cn$ for some constant $c\in(0,1)$, then we have $k\in\Theta(n)$ and $n-k\in\Theta(n)$, and therefore it follows that
	\begin{equation*}
	O\left(\frac{n^4(n-k)f+kn^2}{n^4(n-k)^3f^3+k^4(n-k)f}\right)=O\left(\frac{n^5f+n^3}{n^7f^3+n^5f}\right)=O\left(\frac{1}{n^2f}\cdot\frac{n^2f+1}{n^2f^2+1}\right)
	\end{equation*}
	in this case. Now, if $f\in O(1/n^2)$ this last expression reduces to $O(1/n^2f)\leq o(1/n^2f^2)$, where the inequality follows since $f\in o(1)$. On the other hand, if $f\in\Omega(1/n^2)$ this last expression reduces to $O(n^2f/n^2f)=O(1)\leq O(1/n^2f^2)$, where the inequality follows since we still have $f\in O(1/n)$.
	
	If $k$ is such that $g:=n-k\in o(n)$, then we have $k\in\Theta(n)$, and therefore it follows that
	\begin{equation*}
	O\left(\!\frac{n^4(n-k)f+kn^2}{n^4(n-k)^3f^3+k^4(n-k)f}\!\right)=O\left(\!\frac{n^4gf+n^3}{n^4g^3f^3+n^4gf}\!\right)=O\left(\!\frac{1}{ngf}\cdot\frac{ngf+1}{g^2f^2+1}\right)
	\end{equation*}
	in this case. Now, if $g\in O(1/nf)$ this last expression reduces to $O(1/ngf)\leq O(g/nf)\leq O(1/n^2f^2)$, where the first inequality follows since $g\geq1$ and the second since $g\in O(1/nf)$. On the other hand, if $g\in\Omega(1/nf)$ this last expression reduces to $O(ngf/ngf)=O(1)\leq O(1/n^2f^2)$, where the inequality follows since we still have $f\in O(1/n)$.\\
	If additionally $f\in o(1/n^3)$, then it follows that $1/nf\in \omega(n^2)$, and thus for any $k\in[n-1]$ with $g=n-k\in o(n)$ the expression reduces to $O(1/ngf)\leq O(g/nf)\leq o(1/n^2f^2)$, where the first inequality follows since $g\geq1$ and the second since $g\in o(1/nf)$.
	
	Since any $k\in[n-1]$ belongs in one of the main intervals that we've investigated above, the proof is now complete.\blokje
\end{proof}
\begin{lemma}
	\label{lemma:spsumbound}
	Suppose that $f\in O(1/n)$. Then we have
	\begin{align*}
	&O\left(\frac{n^8f^2+n^2}{n^{15}f^7+n^3f^3}+\frac{n^{13}f^3+n^4}{n^{25}f^9+n^2f^2}+\sum_{k=3}^{n-1}\frac{k^3n^{12}(n-k)^3f^3+k^9n^6}{n^{16}(n-k)^9f^9+k^{16}(n-k)f}\right)\\
	&\qquad\qquad\qquad\leq O\left(\frac{\ln(n)}{nf}+\frac1{n^4f^4}\right).
	\end{align*}
	Moreover, if $f\in o(1/n^3)$ then this result can be improved to $O(1/nf^3)$.
\end{lemma}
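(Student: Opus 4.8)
The plan is to mimic the proof of Lemma~\ref{lemma:spmaxbound}: after substituting $F_{n-k}=(n-k)f$ into the bound of Theorem~\ref{thm:kappalarge}, reduce everything to an explicit case analysis over the size of $k$ (and, in a few spots, over the size of $f$ relative to powers of $n$), bound each summand by a single dominant monomial on each piece, and then add up.

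First I would dispose of the two isolated fractions, i.e.\ the $k=1$ and $k=2$ terms. Using $f\in O(1/n)$ (so that $nf=O(1)$), I would split on whether $f$ lies above or below $1/n^3$ --- the point where the two summands in each numerator, and the two in each denominator, exchange dominance. In each of the resulting sub-cases the fraction collapses to a single power of $n$ and $f$, and a direct comparison shows it is $O(1/n^4f^4)$, and in fact $O(1/nf^3)$ once $f\in o(1/n^3)$.

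Next I would handle the sum $\sum_{k=3}^{n-1}\frac{k^3n^{12}(n-k)^3f^3+k^9n^6}{n^{16}(n-k)^9f^9+k^{16}(n-k)f}$ by partitioning $\{3,\dots,n-1\}$ into the three coarse regimes used in Lemma~\ref{lemma:spmaxbound}: $k\in o(n)$ (so $n-k\in\Theta(n)$), $k=cn$ for a constant $c\in(0,1)$ (so both $k$ and $n-k$ lie in $\Theta(n)$), and $g:=n-k\in o(n)$ (so $k\in\Theta(n)$). Inside each regime I would subdivide further according to which of $k^3n^{12}(n-k)^3f^3$ and $k^9n^6$ dominates the numerator and which of $n^{16}(n-k)^9f^9$ and $k^{16}(n-k)f$ dominates the denominator, the thresholds being powers of $k$ (or of $g$) against quantities such as $\sqrt{n^3f}$, $n^3f$, and $1/nf$, exactly as in the earlier lemma. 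On each such sub-interval the summand simplifies to a clean monomial in $k$ (or $g$), $n$, $f$, and summing over the interval is then elementary: a geometrically decaying summand contributes a constant multiple of its largest term, a summand behaving like $\Theta(1/knf)$ (or $\Theta(1/gnf)$) contributes a factor $H_{n-1}=\Theta(\ln n)$ and hence a term of order $\ln(n)/nf$, and a summand that is roughly constant over an interval of length $O(1)$ contributes just its value.

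Collecting all contributions, the two largest ones are the $\Theta(\ln(n)/nf)$ coming from the harmonic regime and a $\Theta(1/n^4f^4)$ coming from the regime where the summand is maximized; every other sub-interval contributes something dominated by one of these, which gives the stated bound. For the refinement, when $f\in o(1/n^3)$ the small-$k$ part of the sum (together with the $k=1$ term) produces a $\Theta(1/nf^3)$ contribution, while both $\ln(n)/nf$ and $1/n^4f^4$ become $o(1/nf^3)$ since $n^3f\to0$, so the whole expression is $O(1/nf^3)$. I expect the main obstacle to be the bookkeeping: pinning down the single regime of $k$ that produces the extra $\ln n$ factor and verifying that none of the many numerator/denominator-dominance sub-cases yields a term exceeding $\max\{\ln(n)/nf,\,1/n^4f^4\}$ (respectively $1/nf^3$ in the small-$f$ case). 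The individual estimates themselves are routine once the case boundaries are fixed.
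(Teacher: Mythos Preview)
Your plan is correct and essentially coincides with the paper's own proof: the paper also treats the $k=1,2$ terms separately via a case split at $f\sim 1/n^3$, partitions $\{3,\dots,n-1\}$ into the same three regimes ($k\in o(n)$, $k\in\Theta(n)$ with $n-k\in\Theta(n)$, and $n-k\in o(n)$), and identifies the $\ln(n)/nf$ contribution as coming from the harmonic sum $\sum_g 1/(ngf)$ over the $g=n-k\in o(n)$ regime and the $1/n^4f^4$ contribution from the sub-range $k\le\sqrt{n^3f}$ of the small-$k$ regime. The only minor deviation is that the $\Theta(1/nf^3)$ in the $f\in o(1/n^3)$ case arises from the $k=1$ term alone (the small-$k$ part of the main sum contributes only $O(1/nf)$ there, since the sub-range $k\le\sqrt{n^3f}$ is empty), but this does not affect the argument.
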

\begin{proof}
	The first two terms of this summation (corresponding to $k=1$ and $k=2$) are slightly different and will be considered at the end of this proof. For the remaining terms of the summation, we will consider three different intervals. Define $\mathcal{A}:=\{k\in\{3,\ldots,n-1\}:k\in o(n)\}$, $\mathcal{C}:=\{k\in\{3,\ldots,n-1\}:n-k\in o(n)\}$, and
	$\mathcal{B}:=\{3,\ldots,n-1\}\backslash(\mathcal{A}\cup\mathcal{C})$. Note that $\mathcal{A}\cup\mathcal{B}\cup\mathcal{C}=\{3,\ldots,n-1\}$ by these definitions and that for any $k\in\mathcal{B}$ we have $k\in\Theta(n)$ and $n-k\in\Theta(n)$.
	
	For any $k\in\mathcal{C}$ we have $k\in\Theta(n)$, and therefore it follows that
	\begin{align*}
	O\left(\frac{k^3n^{12}(n-k)^3f^3+k^9n^6}{n^{16}(n-k)^9f^9+k^{16}(n-k)f}\right)&=O\left(\frac{n^{15}(n-k)^3f^3+n^{15}}{n^{16}(n-k)^9f^9+n^{16}(n-k)f}\right)\\
	&=O\left(\frac{1}{n(n-k)f}\cdot\frac{(n-k)^3f^3+1}{(n-k)^8f^8+1}\right)=O\left(\frac{1}{n(n-k)f}\right),
	\end{align*}
	where the last equality follows since $f\in O(1/n)$ and $k\in\mathcal{C}$ implies that $(n-k)f\in o(1)$. We also have
	\begin{equation*}
	\sum_{k\in\mathcal{C}}O\left(\frac1{n(n-k)f}\right)=\sum_{g:n-g\in\mathcal{C}}O\left(\frac1{ngf}\right)\leq\sum_{g\in[n-1]}O\left(\frac1{ngf}\right)=O\left(\frac{\ln(n)}{nf}\right).
	\end{equation*}
	
	For any $k\in\mathcal{B}$ we have $k\in\Theta(n)$ and $n-k\in\Theta(n)$, and therefore it follows that
	\begin{align*}
	O\left(\frac{k^3n^{12}(n-k)^3f^3+k^9n^6}{n^{16}(n-k)^9f^9+k^{16}(n-k)f}\right)&=O\left(\frac{n^{18}f^3+n^{15}}{n^{25}f^9+n^{17}f}\right)\\
	&=O\left(\frac{1}{n^2f}\cdot\frac{n^3f^3+1}{n^8f^8+1}\right)=O\left(\frac{1}{n^2f}\right),
	\end{align*}
	where the last equality follows since $f\in O(1/n)$ implies that $nf\in O(1)$. We also have
	\begin{equation*}
	\sum_{k\in\mathcal{B}}O\left(\frac1{n^2f}\right)=O\left(\frac{|\mathcal{B}|}{n^2f}\right)=O\left(\frac1{nf}\right).
	\end{equation*}
	
	Finally, for any $k\in\mathcal{A}$ we have $n-k\in\Theta(n)$, and therefore it follows that
	\begin{align*}
	O\left(\frac{k^3n^{12}(n-k)^3f^3+k^9n^6}{n^{16}(n-k)^9f^9+k^{16}(n-k)f}\right)&=O\left(\frac{k^3n^{15}f^3+k^9n^6}{n^{25}f^9+k^{16}nf}\right)\\
	&=O\left(\frac{k^3n^5}{f}\cdot\frac{n^9f^3+k^6}{n^{24}f^8+k^{16}}\right).
	\end{align*}
	We split $\mathcal{A}$ into two parts: $\mathcal{A}_1:=\{k\in\mathcal{A}:k^2\leq n^3f\}$ and $\mathcal{A}_2:=\{k\in\mathcal{A}:k^2>n^3f\}$. Note that $\mathcal{A}=\mathcal{A}_1\cup\mathcal{A}_2$ and that $\mathcal{A}_1=\varnothing$ for sufficiently large $n$ if $f\in o(1/n^3)$. If $k\in\mathcal{A}_1$ then our last expression becomes $O(k^3n^{14}f^3/n^{24}f^9)=O(k^3/n^{10}f^6)$. On the other hand, if $k\in\mathcal{A}_2$ then it becomes $O(k^9n^5/k^{16}f)=O(n^5/k^7f)$. Summing these values yields
	\begin{equation*}
	\sum_{k\in\mathcal{A}_1}O\left(\frac{k^3}{n^{10}f^6}\right)=O\left(\frac{n^6f^2}{n^{10}f^6}\right)=O\left(\frac1{n^4f^4}\right)
	\end{equation*}
	and
	\begin{equation*}
	\sum_{k\in\mathcal{A}_2}O\left(\frac{n^5}{k^7f}\right)=O\left(\frac{n^5}{n^6f}\right)=O\left(\frac1{nf}\right).
	\end{equation*}
	
	Now only the cases $k=1$ and $k=2$ remain. For $k=2$ we have
	\begin{equation*}
	O\left(\frac{n^{11}f^3+n^2}{n^{21}f^9+f^2}\right)=O\left(\frac{n^2}{f^2}\cdot\frac{n^9f^3+1}{n^{21}f^7+1}\right)=\left\{\begin{array}{ll}O(n^2/f^2)&\text{if }f\in o(1/n^3),\\O(1/n^{10}f^6)&\text{if }f\in\Omega(1/n^3),\end{array}\right.
	\end{equation*}
	whereas for $k=1$ we have
	\begin{equation*}
	O\left(\frac{n^6f^2+1}{n^{13}f^7+nf^3}\right)=O\left(\frac{1}{nf^3}\cdot\frac{n^6f^2+1}{n^{12}f^4+1}\right)=\left\{\begin{array}{ll}O(1/nf^3)&\text{if }f\in o(1/n^3),\\O(1/n^7f^5)&\text{if }f\in\Omega(1/n^3).\end{array}\right.
	\end{equation*}
	
	Combining everything together for $f\in\Omega(1/n^3)$, we obtain
	\begin{align*}
	&O\left(\frac{n^6f^2+1}{n^{13}f^7+nf^3}\right)\!+O\left(\frac{n^{11}f^3+n^2}{n^{21}f^9+f^2}\right)+\sum_{k=3}^{n-1}O\left(\frac{k^3n^{12}(n-k)^3f^3+k^9n^6}{n^{16}(n-k)^9f^9+k^{16}(n-k)f}\right)\\
	&\qquad\qquad\qquad=O\left(\frac1{n^7f^5}+\frac1{n^{10}f^6}+\frac1{n^4f^4}+\frac1{nf}+\frac1{nf}+\frac{\ln(n)}{nf}\right)=O\left(\frac1{n^4f^4}+\frac{\ln(n)}{nf}\right),
	\end{align*}
	whereas for $f\in o(1/n^3)$ we obtain
	\begin{align*}
	&O\left(\frac{n^6f^2+1}{n^{13}f^7+nf^3}\right)\!+O\left(\frac{n^{11}f^3+n^2}{n^{21}f^9+f^2}\right)+\sum_{k=3}^{n-1}O\left(\frac{k^3n^{12}(n-k)^3f^3+k^9n^6}{n^{16}(n-k)^9f^9+k^{16}(n-k)f}\right)\\
	&\qquad\qquad\qquad=O\left(\frac1{nf^3}+\frac{n^2}{f^2}+0+\frac1{nf}+\frac1{nf}+\frac{\ln(n)}{nf}\right)=O\left(\frac1{nf^3}\right),
	\end{align*}
	which finishes this proof.\blokje
\end{proof}
\begin{proof}[Proof of Corollary~\ref{cor:kappalarge}]
	Observe that $\kappa\in\Theta(n)$ and $\kappa=\min\{\lceil1/f\rceil,n\}$ implies that $f\in O(1/n)$. 
	We start by bounding the maximum in the first term. Lemma~\ref{lemma:spmaxbound} shows that in our special case this maximum is asymptotically bounded by the first element. Using this result, we can now bound the maximum in the first term by $O(1/n^2f^2)$. Moreover, if $f\in o(1/n^3)$, then for sufficiently large $n$ it follows that the maximum is given by its first element, i.e., it is equal to $1/n^2f^2$.
	
	Next, we evaluate the sum of the variances. Lemma~\ref{lemma:spsumbound} provides the corresponding result. If $f\in\Theta(1/n)$, then it follows that
	\begin{equation*}
	\mathbb{E}\left[\frac{\mathsf{ALG}}{\mathsf{OPT}}\right]\leq\sqrt{O\left(\frac1{n^2f^2}\right)}+\sqrt[4]{O\left(\frac{\ln(n)}{nf}+\frac1{n^4f^4}\right)}=O\left(\sqrt[4]{\ln(n)}\right).
	\end{equation*}
	If $f\in o(1/n)$ then we have for sufficiently large $n$ that $\kappa=n$ and therefore
	\begin{equation*}
	\mathbb{E}\left[\frac{\mathsf{ALG}}{\mathsf{OPT}}\right]\leq \sqrt{O\left(\frac{n^2f^2}{n^2f^2}\right)}+\sqrt[4]{O\left(\frac{\ln(n)n^4f^4}{nf}+\frac{n^4f^4}{n^4f^4}\right)}=O\left(1+\sqrt[4]{\ln(n)n^3f^3}\right),
	\end{equation*}
	where the last term in general is bounded by $O(\sqrt[4]{\ln(n)})$ (since $f\in O(1/n)$) and more specifically by $O(1)$ if $f\in O(1/n\sqrt[3]{\ln(n)})$. If $f\in o(1/n^3)$, then we obtain
	\begin{equation*}
	\mathbb{E}\left[\frac{\mathsf{ALG}}{\mathsf{OPT}}\right]\leq nf\left(\sqrt{\frac1{n^2f^2}}+\sqrt[4]{O\left(\frac1{nf^3}\right)}\right)=1+O\left(\sqrt[4]{n^3f}\right)=1+o(1),
	\end{equation*}
	which finishes this proof.\blokje
\end{proof}

\section{Concluding Remarks}\label{sect:final}

We have analyzed a rather simple heuristic for the (uncapacitated) facility location problem on random shortest path metrics. We have shown that in many cases this heuristic produces a solution which is surprisingly close to the optimal solution as the size of the instances grows. A logical next step would be to look at heuristics that are (slightly) more sophisticated, and see whether their performance on random shortest path metrics is better than our simple heuristic.

On the other hand there are many other $\mathcal{NP}$-hard (combinatorial) optimization problems for which it would be interesting to know how they behave on random short path metrics.

\bibliographystyle{abbrvnat}
\bibliography{References}
%
%
%
%
%

%
%
%
%
%
%
%
%
%
%
%
%
%
%
%
%
%
%
%
%
%
%
%
%
%
%

\end{document}